\def\dOi{9(4:22)2013}
\keywords{Pushdown Systems; Reachability with Annotations;
  Quantitative Automata and Logics}
\tikzstyle{dotstyle}=[fill=black,circle,minimum size=3pt,inner sep=0pt]
\tikzstyle{automatapath}=[->,decorate,decoration={snake,segment length=5mm}]
\tikzstyle{curlybracket}=[decorate,decoration={brace,amplitude=7pt}]
\newcommand{\apath}[2][]{\xrightarrow[#1]{#2}}
\newcommand{\apathS}[2][]{\xrightarrow[#1]{#2}\!\!{}^*\,}
\newcommand{\nat}{\mathbbm{N}}
\newcommand{\eps}{\varepsilon}
\newcommand{\DeltaR}{\Delta_{\mathfrak R}}
\DeclareMathOperator{\dotcup}{\mathaccent\cdot\cup}
\newcommand{\FO}{\ensuremath{\mathsf{FO}}}
\newcommand{\pad}{\square}
\newcommand{\pathTo}[1][]{\xrightarrow{#1}^*}
\newcommand{\defEquiv}{:\Leftrightarrow}
\newcommand{\Equiv}{\Leftrightarrow}
\newcommand{\Implies}{\Rightarrow}
\newcommand{\configstep}[1][]{\vdash_{#1}}
\newcommand{\natInf}{\nat \cup \{ \infty \}}
\newcommand{\automatonA}{\mathfrak A}
\newcommand{\automatonB}{\mathfrak B}
\newcommand{\automatonC}{\mathfrak C}
\newcommand{\automatonT}{\mathfrak T}
\newcommand{\In}{\mathsf{In}}
\newcommand{\Fin}{\mathsf{Fin}}
\newcommand{\semantics}[1]{\llbracket#1\rrbracket}
\newcommand{\costlea}[1][\alpha]{\preceq_{#1}}
\newcommand{\costEquiv}[1][\alpha]{\approx_{#1}}
\newcommand{\infProjection}[1]{_{\inf,#1}}
\newcommand{\supProjection}[1]{_{\sup,#1}}
\newcommand{\ihle}{\stackrel{\scriptstyle{\mathclap{\mathrm{i.h.}}}}{\le}}
\newcommand{\commentLe}{\@ifstar
  \commentLeStar%
  \commentLeNoStar}
\newcommand{\commentGe}{\@ifstar
  \commentGeStar%
  \commentGeNoStar}
\newcommand{\commentEq}{\@ifstar
  \commentEqStar%
  \commentEqNoStar}
\newcommand{\commentEquiv}{\@ifstar
  \commentEquivStar%
  \commentEquivNoStar}
\newcommand{\commentOp}{\@ifstar
  \commentOpStar%
  \commentOpNoStar}
\newcommand{\commentLeNoStar}[1]{\stackrel{\scriptstyle{\mathclap{#1}}}{\le}}
\newcommand{\commentGeNoStar}[1]{\stackrel{\scriptstyle{\mathclap{#1}}}{\ge}}
\newcommand{\commentEqNoStar}[1]{\stackrel{\scriptstyle{\mathclap{#1}}}{=}}
\newcommand{\commentEquivNoStar}[1]{\stackrel{\scriptstyle{\mathclap{#1}}}{\Equiv}}
\newcommand{\commentLeStar}[1]{\stackrel{\scriptstyle{#1}}{\le}}
\newcommand{\commentGeStar}[1]{\stackrel{\scriptstyle{#1}}{\ge}}
\newcommand{\commentEqStar}[1]{\stackrel{\scriptstyle{#1}}{=}}
\newcommand{\commentEquivStar}[1]{\stackrel{\scriptstyle{#1}}{\Equiv}}
\newcommand{\commentOpStar}[2]{\stackrel{\scriptstyle{#1}}{#2}}
\newcommand{\commentOpNoStar}[2]{\stackrel{\scriptstyle{\mathclap{#1}}}{#2}}
\newcommand{\cprofile}{\mathsf{Profile}}
\newcommand{\counterProfiles}{\mathcal{CP}}
\newcommand{\cple}{\le_{\mathrm{cw}}}
\newcommand{\ipl}{i^{+}_{\leftarrow}}
\newcommand{\ipr}{i^{+}_{\rightarrow}}
\newcommand{\cmax}{c_{\mathit{max}}}
\newcommand{\na}{\diagup}
\newenvironment{indt}{\begin{quote}}{\end{quote}}
\newcommand{\prsR}{\mathfrak R}
\newcommand{\configstepCost}[1]{\vdash_{#1}}
\newcommand{\configstepsCost}[2][*]{\vdash^{#1}_{#2}}
\newcommand{\configstepsLesserCost}[2][*]{\vdash^{#1}_{\le #2}}
\newcommand{\configstepsLesserCostLangRestrict}[2]{\vdash^{#1}_{\le
#2}}
\newcommand{\pdsrule}[3]{#1 \xrightarrow[#2]{} #3}
\newcommand{\padprodL}{\otimes_{\mathsf{L}}}
\newcommand{\padprodR}{\otimes_{\mathsf{R}}}
\newcommand{\alphVector}[2]{#1^{\otimes #2}}
\newcommand{\alphVectorS}[2]{\left(#1^{\otimes #2}\right)^*}
\newcommand{\alphVectorL}[2]{#1^{{\padprodL #2}^*}}
\newcommand{\NalphVectorL}[2]{\overline{#1^{{\padprodL #2}^*}}}
\newcommand{\rl}{m_{\leftarrow}}
\newcommand{\rr}{m_{\rightarrow}}
\newcommand{\hrl}{\hat m_{\leftarrow}}
\newcommand{\hrr}{\hat m_{\rightarrow}}
\newcommand{\rev}{\mathsf{rev}}
\newcommand{\structureS}{\mathfrak S}
\newcommand{\free}{\mathsf{free}}
\newcommand{\FORR}{\ensuremath{\mathsf{FO\!\!+\!\!RR}}}
\newcommand{\valfunc}{\mathfrak I}
\newcommand{\bindL}{\mathsf{conv}}
\newcommand{\unbind}{\mathsf{unconv}}
\newcommand{\tspathTo}{\rightarrowtail^*}
\newcommand{\iOp}{\ensuremath{\mathtt{i}}}
\newcommand{\rOp}{\ensuremath{\mathtt{r}}}
\newcommand{\nOp}{\ensuremath{\mathtt{n}}}
\newcommand{\cOp}{\ensuremath{\mathtt{c}}}
\newcommand{\icOp}{\ensuremath{\mathtt{i\! c}}}
\newcommand{\crOp}{\ensuremath{\mathtt{c\! r}}}
\newcommand{\twosm}[2]{\begin{smallmatrix} #1 \\ #2 \end{smallmatrix}}
\newcommand{\inductionstart}[1]{\begin{description}\item[\textbf{(base
case)}]#1 \\}
\newcommand{\inductionstep}[1]{\item[\textbf{(induction step)}]
#1 \\}
\newcommand{\inductionend}{\end{description}}
\newcommand{\profileToOp}{\mathsf{PrToOp}}
\newcommand{\fraku}{\mathfrak u}
\newcommand{\RPRS}{\textsf{RPRS}}
\newcommand{\resRep}[1]{\mathcal C_{#1}}
\newcommand{\annotationMonoidM}{\mathcal M}
\newcommand{\neutralM}[1][]{e_{\annotationMonoidM_{#1}}}
\newcommand{\wsel}{\mathsf{wsel}}
\newcommand{\strip}{\mathsf{strip}}
\renewcommand{\theta}{\vartheta}
\newcommand{\twovec}[2]{\begin{pmatrix}#1 \\ #2\end{pmatrix}}
\begin{document}
\title{Modeling and Verification of Infinite Systems with Resources}

\author[M.~Lang]{Martin Lang}
\address{Chair of Computer Science 7, RWTH Aachen, 52056 Aachen (Germany)}
\email{\{lang,loeding\}@automata.rwth-aachen.de}

\author[C.~Löding]{Christof L{\"o}ding}

\begin{abstract}
  We consider formal verification of recursive programs with resource
consumption. We introduce prefix replacement systems with non-negative integer
counters which can be incremented and reset to zero as a formal model for such
programs. In these systems, we investigate bounds on the resource consumption
for reachability questions. Motivated by this question, we introduce relational
structures with resources and a quantitative first-order logic over these
structures. We define resource automatic structures as a subclass of these
structures and provide an effective method to compute the semantics of the logic
on this subclass. Subsequently, we use this framework to solve the bounded
reachability problem for resource prefix replacement systems. We achieve this 
result by extending the well-known saturation method to annotated prefix replacement
systems. Finally, we provide a connection to the study of the logic cost-WMSO.
\end{abstract}

\maketitle

\section{Introduction}

Transition systems induced by the configuration graph of pushdown automata have
become an important tool in automatic program verification. Starting with the
introduction of the concept of pushdown automata by A.G.\ Oettinger in 1961 and
M.-P.\ Sch\"{u}tzenberger in 1963, these systems have been extensively studied
and are well understood today. Already in 1985, Muller and Schupp were able to
prove the decidability of MSO-logic over these systems (see
\cite{theoryOfPushdownAutomata}), which are mostly called pushdown systems
nowadays. Until now, methods and algorithms have been developed that provide
automatic verification procedures for recursive programs. For example, the model-checker
jMoped, introduced in \cite{jMoped}, uses symbolic pushdown systems to verify
programs given in Java bytecode. 

Recently, quantitative aspects in formal verification came into the focus.
Although there is a long history of automata models with quantitative aspects
such as weighted automata (see \cite{Schutzenberger-WeightedAutomata}) or
distance automata (see \cite{distanceAutomata}), their use in the area of formal
verification was limited. More recently, timed automata (see
\cite{Alur-timedAutomata}) were introduced as model for systems with
quantitative timing constraints. Additionally, there is currently some effort to
extend this line of research to the area of pushdown systems. In
\cite{timed-pushdown}, a model that combines dense-timed automata with pushdown
automata is introduced in order to model real-time recursive systems. Furthermore,
finite automata and games with resource constraints instead of timing
constraints were introduced and studied (see
\cite{infinite-runs-in-weighted-timed-automata}). Additionally, a combined
concept of weighted automata and pushdown systems is used for program
verification with data flow analysis (see \cite{Lal-TACAS08}). However, the
authors do not know of research considering infinite systems with resource
constraints.

We contribute in our work to the theory of quantitative systems by defining and
analyzing a model for recursive programs with discrete resource consumption. We
introduce resource prefix replacement systems as a combination of prefix
replacement systems and non-negative integer counters similar to those used in B-automata (see
\cite{regularcostfunctions}). These counters support three kinds of operations:
increment, reset to zero, or skip. The operations annotate the
transitions of the prefix replacement system. Thereby, the model is able to
simulate usage and refreshment of resources during program execution. We
consider the resource usage of a run through the system to be the highest
occurring counter value.

One central aspect in systems with resource consumption is boundedness.
Generally, this means checking whether there is a finite bound such that the
system can complete a given task while keeping the resource consumption within
the bound. We formalize this idea by the concept of bounded reachability. This
means checking whether there is a finite bound such that it is possible to reach
from all initial configurations of the system some configuration in a given goal
set with less resource usage than the bound. An algorithmic solution to this
problem can be used as a building block in the realizability checking of systems
with resources. For example, consider a battery powered mobile measuring device
which should be able to complete certain tasks without recharging the battery.
This is only possible if there is a finite bound on the energy consumption
independent of the selected task. The realizability question of this requirement
can be stated in the form of a bounded reachability problem.

Motivated by this question, we develop a framework to formalize and
solve the bounded reachability problem and related questions for recursive
programs with resource consumption. We introduce resource structures as
a quantitative variant of relational structures based on the idea that being in
relation may cost a certain amount of resources. For these structures, we
develop the quantitative logic first-order+resource relations (for short
\FORR{}) in order to express combined constraints on the resource
consumption and the behavior of the system. Intuitively, the semantics of this
logic is designed to describe the amount of resources necessary to satisfy a
first-order constraint. We define resource automatic structures as a subclass of
resource structures. This definition extends automatic structures as introduced
in \cite{automatic-structures} with a quantitative aspect. Based on the closure
properties of B-automata and ideas from the theory of automatic structures, we
provide an effective way to compute the semantics of the logic over resource
automatic structures.

Subsequently, we demonstrate the usage of this general theory to
solve the bounded reachability problem on resource prefix replacement systems
for regular initial and goal sets. This is achieved by showing that resource
prefix replacement systems are resource automatic structures and thus bounded
reachability can be solved by computing the semantics of an \FORR{} formula. In
order to obtain this decidability result, we analyze prefix replacement systems with a
general form of annotation. Based on the well-known saturation principle, we
devise a method to compute an annotation aware transitive closure of the
successor relation in the form of synchronized transducers. 

This saturation procedure is based on a saturation for ground-tree
transducers presented in \cite{gtt-saturation} and constructed with
the goal of obtaining a synchronous transducer. Although the idea of saturation
in annotated systems is not entirely new, the existing methods do not quite fit
our needs. In \cite{RepsSchwoon-WeightedPushdown}, saturation is used to compute
predecessor and successor configurations of regular configuration sets for
pushdown systems with weights from a semi-ring. This result
was extended in \cite{Lal-TACAS08} to compute an asynchronous transducer for
the reachability relation of (semi-ring) weighted pushdown systems. However,
these methods cannot be applied directly to our problem since we need a 
synchronous transducer for the decision procedure of \FORR{} and additionally
encoding the resource counters into a semi-ring structure is very complex. 
Moreover, we believe that this different two-sided-saturation approach
offers an interesting new viewpoint  on the problem of transitive closure in
annotated pushdown- and prefix replacement systems.

Finally, we provide a connection between our problems and the logic cost-WMSO,
which was also introduced in connection to B-automata. We show that the
decidability results for the boundedness problem of cost-WMSO presented by M.
Vanden Boom in \cite{costwMSO} can also be used to obtain a decision procedure
for a restricted version of bounded reachability. Although this part does not
contain any new results, it shows that this logic, which was designed 
as equivalent formalism to B-automata, is also capable of expressing
a very natural problem for quantitative systems. This motivates further studies
on the expressive power of cost-(W)MSO and other quantitative logics that emerged
around cost-automata.

In the following section, we first introduce and formalize our model for
recursive programs with resource consumption. In
Section~\ref{sec:Preliminaries}, we present the known results for counter
automata as introduced by T. Colcombet and briefly repeat the important
results. Subsequently, we define and investigate the model of resource
structures and the logic \FORR{} in
Section~\ref{sec:ResourceStructuresAndLogic}. Next, we exhibit an extended
saturation approach which enables us to compute the transitive closure
for prefix replacement systems with annotations in
Section~\ref{sec:ReachabilityWithAnnotations}. At the end of this section, we
use the previously developed framework to prove our main statement on the
bounded reachability problem. Section~\ref{sec:BoundedReachabilityAndCostWMSO}
connects our results to the logic cost-WMSO. Finally, we conclude in
Section~\ref{sec:Conclusion}.

We would like to thank T.~Colcombet and M.~Bojańczyk for the fruitful and
enlightening discussions. Moreover, we want to thank the anonymous reviewers 
for their very constructive and detailed comments. They helped us to 
significantly improve the quality of this article. Additionally, we thank the 
Deutsche Forschungsgemeinschaft, which supports the first author in 
the project ``Automatentheoretische Verifikationsprobleme mit Ressourcenschranken''.

\section{Resource Prefix Replacement Systems}
\label{sec:ResourcePRS}

We define \emph{resource prefix replacement systems} (for short \RPRS{}) to be a
suitable model for recursive programs with resource consumption. This is
achieved by combining prefix replacement systems, which are a well-known model
for recursive programs, with discrete non-negative counters. These counters
support three kinds of operations, which annotate the transitions of the
prefix replacement system. First, the counter can be incremented (\texttt{i}).
This models the use of one unit of one resource. Second, the counter can be
reset to zero (\texttt{r}). This models the complete refresh/refill of this
kind of resources. Third, it is possible to leave the counter unchanged, what we
call no operation (\texttt{n}). This counter model is similar to B-automata,
which are presented in the subsequent section. We formalize the model of
\RPRS{} by the following definition.

\begin{defi}
  A resource prefix replacement system is a triple $\prsR =
(\Sigma,\Delta,\Gamma)$ consisting of a finite  alphabet $\Sigma$, a finite set
of counters $\Gamma$ and a finite transition relation $\Delta \subseteq \Sigma^+
\times \Sigma^* \times \{\nOp,\iOp,\rOp\}^\Gamma$. A configuration is a finite
word over $\Sigma$. A transition $(u,v,f) \in \Delta$ enables a change from the
configuration $uw$ to $vw$ for all $w \in \Sigma^*$ and triggers the counter
operation $f(c)$ for each counter $c \in \Gamma$. If there is only one counter
in the system, we also write $u \apath[\mathtt{op}]{} v$ for a rule
$(u,v,f)$ where $f(c_0) = \mathtt{op}$ for the unique counter $c_0$.
\end{defi}

We are mainly interested in the configuration graph induced by the system. A
path in this graph represents a (partial) run of the modeled recursive program.
The resource usage of such a run is calculated by simulating all counter
operations along the path according to the annotated operations at the
transitions.  We identify the resource usage (or resource-cost) of the path with
the maximal counter value (of all counters) in the sequence. Furthermore, we
write $u \configstepsLesserCost{k} v$ if $v$ is reachable from $u$ with resource
usage of at most $k$. We remark that we do not distinguish between the different
counters when calculating the overall resource usage because we focus on
boundedness questions. 


For example, consider a simple \RPRS{} with only one counter $c_0$ over the
alphabet $\{a,b\}$. The system contains four replacement rules: $a
\apath[\iOp]{} \eps , a \apath[\rOp]{} ba, b \apath[\rOp]{} bb, b \apath[\nOp]{}
a$. Figure~\ref{fig:RPRSExample} shows a part of the resulting configuration
graph. In this example, it can be seen that for all $n \in \nat$, we obtain $a^n
\configstepsLesserCost{2} \eps$. However, this is not possible with simple paths
without loops because detours using the replacement rules annotated with reset
are needed. In detail, we obtain, e.g., $aaa \configstepsLesserCost{2} \eps$ by
$aaa \apath[\iOp]{} aa \apath[\iOp]{} a \apath[\rOp]{} ba \apath[\nOp]{} aa
\apath[\iOp]{} a \apath[\iOp]{} \eps$.

\begin{figure}[t]
  \begin{center}
	\begin{tikzpicture}
	\begin{scope}[level distance=2.2cm,
				level 1/.style={sibling distance=3cm},
				level 2/.style={sibling distance=1.5cm},
				level 3/.style={sibling distance=0.8cm},
				level 4/.style={level distance=0.7cm},
				edge from parent/.style={draw=none},
				font={\large},
				scale=0.7,
				transform shape]

	\tikzstyle{littleN}=[midway,right,font={\small}]
	\tikzstyle{nodehighlight}=[color=blue!70!white]
	\tikzstyle{pathhighlight}=[color=red!70!white,dashed]
	\tikzstyle{counterAnnotation}=[above left=1mm,font={\small},color=red!80]

	\node (eps) at (0,0) {$\eps$} [grow'=right]
		child {
		node (a) {$a$}
		child {
			node (aa) {$aa$}
			child {node (aaa) {$aaa$}
			child {node {$\hdots$}}}
			child {node (baa) {$baa$}
			child {node {$\hdots$}}}
		}
		child {
			node (ba) {$ba$}
			child {node (aba) {$aba$}
			child {node {$\hdots$}}}
			child {node (bba) {$bba$}
			child {node {$\hdots$}}}
		}
		}
		child {
		node (b) {$b$}
		child {
			node (ab) {$ab$}
			child {node (aab) {$aab$}
			child {node {$\hdots$}}}
			child {node (bab) {$bab$}
			child {node {$\hdots$}}}
		}
		child {
			node (bb) {$bb$}
			child {node (abb) {$abb$}
			child {node {$\hdots$}}}
			child {node (bbb) {$bbb$}
			child {node {$\hdots$}}}
		}
		};

	\draw[->] (aaa) -- (aa) node[midway,above] {$\iOp$};
	\draw[->] (aba) -- (ba) node[midway,above] {$\iOp$};
	\draw[->] (aab) -- (ab) node[midway,above] {$\iOp$};
	\draw[->] (abb) -- (bb) node[midway,above] {$\iOp$};
	\draw[->] (aa) -- (a) node[midway,above] {$\iOp$};
	\draw[->] (ab) -- (b) node[midway,above] {$\iOp$};
	\draw[->] (a) -- (eps) node[midway,above] {$\iOp$};

	\draw[<-] (bbb) -- (bb) node[midway,below] {$\rOp$};
	\draw[<-] (bba) -- (ba) node[midway,below] {$\rOp$};
	\draw[<-] (bab) -- (ab) node[midway,below] {$\rOp$};
	\draw[<-] (baa) -- (aa) node[midway,below] {$\rOp$};
	\draw[<-] (bb) -- (b) node[midway,below] {$\rOp$};
	\draw[<-] (ba) -- (a) node[midway,below] {$\rOp$};

	\draw[<-] (aaa) -- (baa) node[littleN] {$\nOp$};
	\draw[<-] (aba) -- (bba) node[littleN] {$\nOp$};
	\draw[<-] (abb) -- (bbb) node[littleN] {$\nOp$};
	\draw[<-] (aab) -- (bab) node[littleN] {$\nOp$};
	\draw[<-] (aa) -- (ba) node[midway,right] {$\nOp$};
	\draw[<-] (ab) -- (bb) node[midway,right] {$\nOp$};
	\draw[<-] (a) -- (b) node[midway,right] {$\nOp$};
	\end{scope}
\end{tikzpicture}
  \end{center}
  \caption{Configuration graph induced by the example \RPRS{}}
  \label{fig:RPRSExample}
\end{figure}
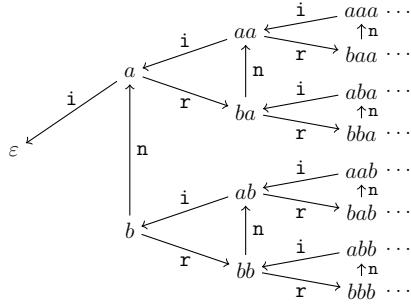

\subsection{The Bounded Reachability Problem}
\label{subsec:BoundedReachability}

Reachability checking is a fundamental building block of many formal
verification procedures. For example, it can be applied as a decision procedure
for the termination problem of a program. The general reachability problem 
on prefix replacement systems can be formulated as follows. Let $A$ (starting 
configurations) and $B$ (final configurations) be two sets of
configurations. We say $B$ is reachable from $A$ if for all elements
in $A$ there is a path to some element of $B$. This resembles the termination
idea that independent of the starting configuration of the program a final
configuration should be reached after finitely many steps.

However, in the context of systems with resources mere reachability is
often not enough to ensure the realizability of the system. 
We therefore extend the question by asking whether bounded
resources are enough to achieve reachability.

\begin{defi}[Bounded Reachability]
  Let $A,B$ be two sets of configurations. We say $B$ is boundedly reachable
  from $A$ if there is a bound $k \in \nat$ such that for all $u \in A$ there
  is a $v \in B$ satisfying $u \configstepsLesserCost{k} v$.
\end{defi}

Reconsider the example depicted in Figure \ref{fig:RPRSExample}. We already
observed
that $a^n \configstepsLesserCost{2} \eps$. Consequently, the set $\{\eps\}$ is
boundedly reachable from $\{ a^n \mid n \in \nat\}$. However, if we remove one
of the rules  $a \apath[\rOp]{} ba, b \apath[\nOp]{} a$, this does not
hold anymore although it is still possible to reach $\eps$ but with increasing
resource usage.

The rest of this work is dedicated to developing tools with the goal of better
understanding and solving the bounded reachability problem. First, we
introduce the basic concepts and known results on cost automata in
Section~\ref{sec:Preliminaries}. These results form the basis for most of our
following work. In Section~\ref{sec:ResourceStructuresAndLogic}, we define
resource structures as a specialized concept to represent systems with resource
consumption. On these structures we define the logic \FORR{} to describe
the behavior of the system in combination with its resource consumption. This
framework allows amongst other things a simple formulation of the bounded reachability problem. To
achieve the desired goal and solve the bounded reachability problem in a
restricted scenario, we provide a method to effectively compute the
semantics of \FORR{} on a restricted class of structures which we call resource
automatic. In Section~\ref{sec:ReachabilityWithAnnotations}, we develop a method
to compute an annotation aware transitive closure of annotated prefix
replacement systems. With this method we are able to prove that \RPRS{} are
resource automatic. Thus, we solve the bounded reachability problem for regular
sets of configurations.

\section{Cost Automata}
\label{sec:Preliminaries}


To study boundedness questions, M.~Bojańczyk and T.~Colcombet
presented general automata with (non-readable) counters called B- and
S-automata in \cite{bounds-in-omega-regularity}. T.~Colcombet compared these
models with an algebraic and a logical approach in \cite{regularcostfunctions}.
All these concepts define functions from $\Sigma^*$ to $\natInf$. In 
\cite{regularcostfunctions} it is shown that all the models are capable of 
expressing the same functions up to some equivalence relation $\costEquiv[]$. 
The equivalence classes resulting from this relation are called 
\emph{cost functions}. We additionally call a cost function \emph{regular} if
it contains a function that is definable by some B-automaton. Correspondingly,
we call the two automata models \emph{cost automata}.

The definition of the equivalence relation $\costEquiv[]$  uses so called
\emph{correction functions} to measure the difference between a pair of cost
functions. A correction function $\alpha$ is a non-decreasing mapping from
$\natInf$ to $\natInf$ which maps $\infty$ and only $\infty$ always to $\infty$,
i.e., $k \le j \Implies \alpha(k) \le \alpha(j)$ and $\alpha(x) = \infty \Equiv
x = \infty$.
\begin{defi}
	Let $x,y \in \natInf$ be two values and $\alpha: \natInf
	\to \natInf$ a correction function. We say that $x$ is $\alpha$-dominated
	by $y$ and write $x \costlea y$ if $x \le \alpha(y)$. If $x \costlea y$ 
	and $y \costlea x$, we say that $x$ and $y$ are $\alpha$-equivalent and
	write $x \costEquiv y$.

	We extend this naturally  to functions. Let $f,g: \Sigma^* \to \natInf$ be 
	two functions. We say $f$ is $\alpha$-dominated by $g$ and write 
	$f \costlea g$ if 
	\[ \forall x \in \Sigma^*: f(x) \costlea g(x) \]
	Analogously, if $f \costlea g$ and $g \costlea f$ we say $f$ and $g$ are
	$\alpha$-equivalent and write $f \costEquiv g$. The two functions are just
	called equivalent (written $f \costEquiv[] g$) if there exists some
	correction function $\alpha$ such that $f \costEquiv g$.
\end{defi}
Note that for a fixed $\alpha$, the relation $\costEquiv$ is not transitive.
From the inequality it becomes clear that for three functions $f \costEquiv g
\costEquiv[\beta] h$, we only obtain $f \costlea[\alpha \circ \beta] h$ and $h
\costlea[\beta \circ \alpha] f$. However, one can easily check that for $\gamma
:= \max(\beta \circ \alpha,\alpha \circ\beta)$, we obtain $f \costEquiv[\gamma]
h$. 

An equivalent characterization of the relation $\costEquiv[]$ can be given by
comparing the subsets of the domain with bounded image. 

\begin{lem}[see \cite{regularcostfunctions}]
\label{lem:SecondCostFunctionDefinition}
	Let $f$, $g$ be two cost functions and $A \subseteq \Sigma^*$. We write
$f(A) < \infty$ if $f$ is bounded on $A$, i.e., if $\sup_{a \in A} f(x) <
\infty$. 

We have $f \costEquiv[] g$ iff for all sets $A \subseteq \Sigma^*$: $f(A) <
\infty \Equiv g(A) < \infty$. 
\end{lem}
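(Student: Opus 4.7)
The plan is to prove both implications by translating the equivalence $\costEquiv[]$ between cost functions into statements about preimages of bounded sets under $f$ and $g$.

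For the easy direction ($\Implies$), suppose $f \costEquiv g$ for some correction function $\alpha$. If $A \subseteq \Sigma^*$ satisfies $g(A) < \infty$, pick $N \in \nat$ with $g(x) \le N$ for all $x \in A$. Since $\alpha$ is non-decreasing and maps $\nat$ to $\nat$ (by the defining property $\alpha(x)=\infty \Equiv x = \infty$), I get $f(x) \le \alpha(g(x)) \le \alpha(N) < \infty$ for every $x \in A$. Hence $f(A) < \infty$, and by symmetry the reverse holds as well.

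For the nontrivial direction ($\Impliedby$), assume $f$ and $g$ agree on which sets are bounded. The natural candidate is the ``tightest'' correction function witnessing domination in each direction:
\[
  \alpha(n) \defeq \sup\{ f(x) \mid x \in \Sigma^*,\; g(x) \le n\}, \qquad
  \beta(n) \defeq \sup\{ g(x) \mid x \in \Sigma^*,\; f(x) \le n\},
\]
with $\alpha(\infty) = \beta(\infty) = \infty$. Set $\gamma \defeq \max(\alpha,\beta)$. I then need to check three things: (i) $\alpha$ and $\beta$ take finite values on $\nat$; (ii) they are non-decreasing, so $\gamma$ is a valid correction function; (iii) $f \costlea[\gamma] g$ and $g \costlea[\gamma] f$. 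Item (i) is exactly where the hypothesis enters: the set $A_n = \{x \mid g(x) \le n\}$ satisfies $g(A_n) < \infty$, so by assumption $f(A_n) < \infty$, meaning $\alpha(n) < \infty$; similarly for $\beta$. Item (ii) is immediate since enlarging $n$ enlarges the set over which the supremum is taken. For (iii), any $x$ lies in the set defining $\alpha(g(x))$, so $f(x) \le \alpha(g(x)) \le \gamma(g(x))$, and symmetrically for $g$.

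The only subtlety to watch is the $\infty$-behaviour: if $g(x) = \infty$, the bound $f(x) \le \gamma(g(x)) = \infty$ holds trivially, and by our definition $\gamma$ satisfies $\gamma(y) = \infty \Equiv y = \infty$, so it really is a correction function in the sense required earlier. I do not expect any step to be a serious obstacle; the only conceptual ingredient is identifying the correct candidate $\alpha$, and once this is written down the remaining verifications are direct from the definitions and the hypothesis.
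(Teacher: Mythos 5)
Your proof is correct. The paper does not prove this lemma itself (it is quoted from the cited reference on regular cost functions), but your argument is the standard one: the forward direction is immediate from monotonicity of the correction function, and the backward direction constructs the canonical witness $\alpha(n) = \sup\{f(x) \mid g(x) \le n\}$ (and symmetrically $\beta$), with the hypothesis used exactly where you use it, namely to guarantee finiteness of these suprema so that $\gamma = \max(\alpha,\beta)$ is a genuine correction function.
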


The definition of $\costEquiv[]$ with correction function has the
advantage that it is also applicable to single values instead of functions.
Thus, it enables us to prove the equivalence of two functions by comparing all
of their values instead of all subsets of the domain. The second
characterization helps to understand the flexibility of the relation
$\costEquiv[]$ and the expressiveness of regular cost functions. 

In order to
reduce the technical overhead and make full use of the expressiveness of 
B-/S-automata as long as possible, we usually work with concrete cost functions 
defined by automata. We view these functions as representatives of their 
$\costEquiv[]$-equivalence class. As a consequence, we will also consider
concrete values of the functions for a single word although this is not
well defined for the equivalence class of functions. We explicitly mention the
situations in which it is important that functions have to be considered
only up to the equivalence $\costEquiv[]$.

In the following, we introduce the models of B- and S-automata as well as some
known results for these models. The structures of B- and S-automata are identical.
They only differ in their semantics. We first provide the structure and
introduce the semantics later.

\begin{defi}[see \cite{regularcostfunctions}]
A B/S-automaton is a nondeterministic finite automaton (NFA) extended with a
finite set of counters. Formally, we have $\automatonA = (Q, \Sigma, \Delta, \In,\Fin,\Gamma)$.
Similar to standard NFAs, $Q$ is a finite set of states, $\In$ the set of
initial states, $\Fin$ the set of final states and $\Sigma$ a finite input
alphabet. $\Gamma$ is the finite set of counters and the counter operations 
are annotated to the transitions. The finite transition relation has the 
form $\Delta \subseteq Q \times \Sigma \times Q \times (\{\iOp,\rOp,\cOp\}^*)^\Gamma$.
\end{defi}

The counter operations in B-/S-automata are similar to \RPRS{}. However,
we don't have the operation $\nOp$ but an operation $\cOp$ which is explained 
below. Additionally, we call B-automata \emph{simple} if
their transition relation contains only counter operations of the form 
$\eps,\rOp$ and $\icOp$. Similarly, we call S-automata \emph{simple} if their transition
relation contains only counter operations of the form $\eps,\iOp,\rOp$ and $\crOp$.

The semantics of cost automata is based on the notion of a run. A run is 
a sequence $t_1,\ldots,t_{n}$ of transitions that is
compatible with the input word, i.e., for a word $w = a_1\ldots a_n$
with $a_i \in \Sigma$, we have $t_i = (q_{i-1},a_{i},q_{i},\fraku_{i}) \in \Delta$.
Similar to \RPRS{}, we
assign a value to each run of a B- or S-automaton based on the annotated counter
operations. However, the counters of cost automata support the additional $\cOp$
counter operation which means \emph{check}. For the value of the run, only the
counter values of checked positions are considered. The need for this additional
action will become clear after the definition of S-automaton semantics. 
For a run $\rho$ we denote the set of all counter values at checked positions 
with $\mathrm{C}(\rho)$. A run is called \emph{accepting} if its first transition
starts in a state of $\In$ and its last transition ends in a state of $\Fin$.
We formally define the B- and S-semantics for a cost automaton $\automatonA$
as follows:
\begin{align*}
	\semantics{\automatonA}_B(w) &:= \inf_{\rho : \text{ acc. run of
$\automatonA$ on $w$}} \sup \mathrm{C(\rho)} \\
	\semantics{\automatonA}_S(w) &:= \sup_{\rho : \text{ acc. run of
$\automatonA$ on $w$}} \inf \mathrm{C(\rho)} 
\end{align*}
We remind the reader that $\inf \emptyset = \infty$ and $\sup \emptyset
= 0$. We see that the value of an S-run would always be $0$ if we had no check 
operation and considered all occurring values because the counters are 
initialized with $0$. 

Figure~\ref{fig:CounterAutomataExaple} gives an example for this definition. The
shown automata have only one counter $c_0$ and both count the maximal number of
subsequent $a$-letters in a word. On the left-hand side of the figure there is
an automaton with B-semantics which defines this function. The automaton on the
right-hand side uses S-semantics. A large class of examples is formed by the
characteristic functions of regular languages. For a language $L \subseteq
\Sigma^*$, we define the characteristic function $\chi_L: \Sigma^* \to \natInf$
to assume the value $0$ if the word is in the language $L$ and $\infty$ 
otherwise. An NFA $\automatonA$ for $L$ can be transformed into a B-automaton
$\automatonA'$ which defines $\chi_L$ by taking $\automatonA$, adding one 
counter and setting all counter operations of the transitions to $\eps$. Thus,
$\sup C(\rho) = \sup \emptyset$ is $0$ for every run $\rho$. So, the value of 
every word in $L$ is $0$ and the value of all other words is $\infty$ since 
there is no accepting run and consequently we have $\inf \emptyset = \infty$.
It is easy to see that the automaton $\automatonA'$ equipped with S-semantics 
defines the characteristic function $\chi_{\overline{L}}$ of the complement of
$L$. 

Conversely, it is also possible to obtain regular languages from functions
defined by B- or S-automata. We remark that the language of all words which
have a value less than a fixed $k \in \nat$ in the given function is regular.
Since the bound $k$ is fixed, a usual finite automaton can simulate the counter
values up to $k$ and thus decide whether a given word is below or above the
threshold. 

\begin{figure}
	\begin{tikzpicture}[initial text=,>=stealth]
	\begin{scope}
		\node[state,initial,accepting] (q_0) {$q_0$};
		\node[above left=1cm] at (1,1) {B-semantics:};
		\path 	(q_0) edge[loop above] node {$a : \icOp$} ()
				(q_0) edge[loop below] node {$b : \rOp$} ();
	\end{scope}

	\begin{scope}[node distance=2.5cm,on grid,auto,xshift=4cm]
		\node[state,initial] (q_0) {$q_0$};
		\node[above left=1cm] at (1,1) {S-semantics:};
		\node[state,accepting,right=of q_0] (q_1) {$q_1$};
		\node[state,accepting,right=of q_1] (q_2) {$q_2$};
		\draw[->] 	(q_0) edge[loop above] node {$a,b : \eps$} ()
				(q_0) edge[bend right] node[swap] {$b: \crOp $} (q_2)
				(q_0) edge node {$a : \iOp$} (q_1)
				(q_1) edge[loop above] node {$a : \iOp$} ()
				(q_1) edge node {$b : \crOp$} (q_2)
				(q_2) edge[loop above] node {$a,b : \eps$} ();
	\end{scope}
\end{tikzpicture}
	\caption{B- and S-automaton which count the maximal subsequent occurrences
of the letter $a$ in a word.}
	\label{fig:CounterAutomataExaple}
\end{figure}
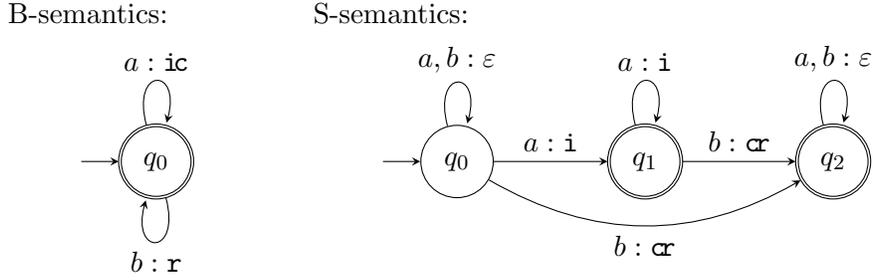

In his work \cite{regularcostfunctions}, T. Colcombet showed several properties
of the models and the functions defined by these models. First, he was able to
prove that the two automata models define, up to the equivalence 
$\costEquiv[]$, the same cost functions and concluded that boundedness 
properties are decidable. Second, he showed extensive closure properties for 
regular cost functions under standard operations such as $\min$, $\max$ or 
special kinds of projections. 

The results in the framework of regular cost functions form the basis for the
applications on the verification of infinite state systems with resources
investigated here. Therefore, we briefly repeat the major results.

\begin{thm}[Equal expressiveness, see
\cite{regularcostfunctions}]\label{thm:EqualExpressivenessOfSandB}
	Let $\automatonA$ be some B-automaton or S-automaton. The following four
    automata are effectively computable and define a function which is
    equivalent ($\costEquiv[]$) to the one defined by $\automatonA$:
	\begin{itemize}
		\item a B-automaton $\automatonA_{B}$
		\item a simple B-automaton $\automatonA_{\mathit{simB}}$
		\item an S-automaton $\automatonA_{S}$
		\item a simple S-automaton $\automatonA_{\mathit{simS}}$
	\end{itemize}
\end{thm}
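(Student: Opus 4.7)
The plan is to set up a chain of four effective conversions forming a cycle among the four automaton classes, so that any of them can be reached from any other. Since simple B-automata (resp.\ simple S-automata) are by definition a subclass of general B-automata (resp.\ S-automata), the nontrivial directions are normalization (general~$\to$~simple) within each semantics, plus a single crossing between the B- and S-worlds. I would therefore carry out: (i) general B-automaton $\to$ simple B-automaton, (ii) simple B-automaton $\to$ simple S-automaton, and symmetrically (iii) general S-automaton $\to$ simple S-automaton and (iv) simple S-automaton $\to$ simple B-automaton; the two crossings in (ii) and (iv) are the substantial content.

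For the normalization steps (i) and (iii), I would split each transition carrying a counter-operation word $\fraku \in (\{\iOp,\rOp,\cOp\}^*)^\Gamma$ into a sequence of transitions labeled by single letters of $\Sigma \cup \{\eps\}$ and atomic operations on one counter at a time, introducing fresh intermediate states. The remaining obstacle is to reach the simple shape: in the B-case every $\iOp$ must be rolled into an $\icOp$, which is achieved by adding a copy of each counter, replacing each bare $\iOp$ by the simple operation $\icOp$ on the copy, and by resetting the copy after every $\rOp$ of the original; the maximum of the two resulting semantics is $\alpha$-equivalent (with $\alpha = \mathrm{id}$) to the original value because at each check the original counter has been incremented at least as often as the last check of the copy. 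For the S-side the analogous trick forces every $\rOp$ to appear in the combined form $\crOp$ by duplicating counters and synchronizing checks with resets.

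For the crossing direction, say simple B $\to$ simple S, I would apply the hierarchical subset construction of Colcombet: maintain in the state space a tuple of sets of states of the B-automaton stratified by the current maximum counter value seen along the simulated run. A fresh $\crOp$-operation on the simulating S-automaton is issued exactly when the B-automaton would have performed $\icOp$, and the $\sup\inf$ semantics of the S-automaton is shown to capture the $\inf\sup$ semantics of the B-automaton up to a correction function $\alpha$ depending polynomially on the size of the state space. For the reverse crossing one uses the dual construction, swapping the roles of reset and increment-check. Soundness of both constructions is proved by exhibiting, for every run on one side achieving value $\le k$, a run on the other side achieving value $\le \alpha(k)$, and conversely, which by Lemma~\ref{lem:SecondCostFunctionDefinition} suffices to establish $\costEquiv[]$.

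The main obstacle will be (ii)/(iv): the $\inf\sup$-versus-$\sup\inf$ duality is not preserved by any straightforward powerset construction, and the correction function has to be bounded carefully to ensure it remains a correction function (monotone, preserving $\infty$). Getting the hierarchical stratification of the subset construction right so that the exchange of quantifiers is compensated by a finite blow-up in the counter values, rather than an unbounded one, is the delicate point; once this is done, the remaining normalizations and the closure under effective construction follow by routine automata-theoretic manipulation.
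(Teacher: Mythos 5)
First, a point of reference: the paper does not prove this theorem at all. It is stated with the attribution ``see \cite{regularcostfunctions}'' and imported as a black box from Colcombet's work on regular cost functions, so there is no in-paper argument to compare yours against. Your sketch therefore has to be judged as a reconstruction of the known proof, and on that footing its overall shape is right: the decomposition into normalization (general to simple within each semantics) plus a single duality crossing between the B- and S-worlds matches how the result is organized in the literature, and you correctly identify the crossing as the substantial content.

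The gap is precisely in that crossing. A ``hierarchical subset construction\ldots{} stratified by the current maximum counter value seen along the simulated run'' cannot be carried out literally: the maximum counter value is unbounded, so it cannot index components of a finite state space, and no straightforward powerset construction exchanges the $\inf\sup$ and $\sup\inf$ quantifiers --- you acknowledge this yourself in the last paragraph, but then leave the delicate point unresolved rather than resolving it. The known proofs of the duality do not proceed by determinization over counter values; they go through either the algebraic route (translation to stabilization monoids, where both semantics are computed via the stabilization operation on idempotents) or through history-deterministic cost automata combined with a Ramsey/factorization-forest argument, and in both cases the correction function $\alpha$ falls out of the bounded factorization depth rather than from the size of a subset automaton. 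Your normalization steps (i) and (iii) are also stated too loosely to check --- in the B-case, replacing every bare $\iOp$ by $\icOp$ \emph{adds} checked values and can only increase the supremum of the checked set, so the claim that the result is $\alpha$-equivalent with $\alpha = \mathrm{id}$ needs the nondeterministic guessing of which check matters, not merely a duplicated counter --- but these are repairable. The crossing step as described is not: without the algebraic or game-theoretic machinery, the ``delicate point'' you defer is the entire theorem.
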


Boundedness questions have been studied for cost automata since their first
introduction. These questions exist in slightly different formulations. In
our application we reduce the boundedness question for \RPRS{} over several
steps to boundedness questions for regular cost functions.

\begin{thm}[Boundedness, see
\cite{regularcostfunctions}]\label{thm:BoundednessOfCounterautomataIsDecidable}
 Let $\automatonA$ be a B- or S-automaton. The following problem is decidable:
 $$\exists M \in \nat\; \forall w \in \Sigma^* :
\semantics{\automatonA}_{B/S}(w) < M$$
\end{thm}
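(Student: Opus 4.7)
The plan is to reduce the boundedness decision problem to a finite algebraic check. As a first step, I would invoke Theorem~\ref{thm:EqualExpressivenessOfSandB} to restrict attention to a single model, say B\nobreakdash-automata: given $\automatonA$, one effectively obtains an equivalent B\nobreakdash-automaton $\automatonA_B$ with $\semantics{\automatonA}\costEquiv[] \semantics{\automatonA_B}_B$. By Lemma~\ref{lem:SecondCostFunctionDefinition} applied to $A=\Sigma^*$, boundedness of a function depends only on its $\costEquiv[]$-class, so the question "$\exists M\in\nat\ \forall w\in\Sigma^*\colon \semantics{\automatonA}(w)<M$" transfers losslessly between the two models.

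Next, I would associate with $\automatonA_B$ a finite \emph{stabilization monoid}: its elements are equivalence classes of matrices recording, for each pair of states, the minimal counter behavior (up to a reset/increment abstraction) of runs on words, and multiplication corresponds to concatenation. The key additional ingredient is a stabilization operation $\sharp$ that formalizes what happens to counter values when an idempotent element is iterated arbitrarily often: an increment that survives between two resets leads to $\infty$ under $\sharp$, whereas an increment dominated by a reset is absorbed. Unboundedness of $\semantics{\automatonA_B}_B$ should then be equivalent to the existence of a monoid element $e$ with $e\cdot e=e$ such that $e^{\sharp}$ witnesses, via the set of accepting states, an unbounded sup of counter checks. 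This is a purely finite condition and can be verified by enumeration.

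The soundness and completeness of this reduction hinges on Colcombet's extension of Simon's factorization forest theorem to stabilization monoids, which guarantees that every word over $\Sigma$ admits a factorization forest of depth bounded by a constant depending only on the monoid. This bound translates unboundedness of counter values on arbitrary words into an algebraic unboundedness pattern visible at a single $\sharp$ on an idempotent, and it translates the absence of such a pattern into a uniform numeric bound valid for all words, yielding the bound $M$ we are looking for. Decidability follows at once from the finiteness of the stabilization monoid.

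The main obstacle is precisely the construction and analysis of the stabilization monoid together with the $\sharp$ operation, and the proof that the algebraic idempotent-witness condition is correct in both directions. This requires both a careful abstraction of counter operations so that the monoid remains finite while retaining sufficient information about resets/increments, and the factorization-forest machinery to lift the algebraic local analysis to a global bound. This is exactly the technical content of Colcombet's work in \cite{regularcostfunctions}, which I would invoke after sketching the correspondence above.
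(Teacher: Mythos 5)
The paper does not prove this theorem at all --- it is imported as a black box from \cite{regularcostfunctions} --- and your sketch correctly outlines the stabilization-monoid and factorization-forest argument that constitutes the proof in that reference, ultimately deferring to the same source, so the two treatments coincide in substance. The one imprecision worth noting is that the unboundedness witness is in general an element of the closure of the transition monoid under \emph{both} product and stabilization (nested $\sharp$'s may be required), not a single $\sharp$ applied to an idempotent of the plain transition monoid; this does not affect decidability, since that closure is still finite and computable.
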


The class of regular languages possesses many closure properties. It is
known that it is closed under boolean operations as well as under projection.
This result was generalized and extended to regular cost functions.
In the context of cost functions, the boolean connectives conjunction and
disjunction correspond to $\max$ and $\min$. Furthermore, two variants of
projection for cost functions have been introduced: 
\emph{inf-projection} and \emph{sup-projection}. Consider an alphabet projection
$\pi: \Lambda \to \Sigma$ and let $\bar\pi: \Lambda^* \to \Sigma^*$ be the
canonical extension of $\pi$ to words. For a cost function 
$f: \Lambda^* \to \natInf$, the $\inf$-projection 
$f\infProjection{\pi}: \Sigma^* \to \natInf$ of $f$ is given by  
\[ f\infProjection{\pi}(w) := \inf_{u \in \bar\pi^{-1}(w)} f(u) \]
The $\sup$-projection is defined analogously. 

\begin{thm}[Closure properties, see \cite{regularcostfunctions}]\
\label{thm:ClosureOfRegularCostFunctions}
	\begin{enumerate}[label=(\roman*)]
		\item Let $\automatonA$ and $\automatonB$ be B/S-automata. There are
effectively computable B-automata $\automatonC_{\min}$ and
$\automatonC_{\max}$ such that $\semantics{\automatonC_{\min}}_B \costEquiv[]
\min(\semantics{\automatonA}_{B/S},\semantics{\automatonB}_{B/S})$ and
$\semantics{\automatonC_{\max}}_B \costEquiv[]
\max(\semantics{\automatonA}_{B/S},\semantics{\automatonB}_{B/S})$.
	Moreover, the automata $\automatonC_{\min}$ and $\automatonC_{\max}$ can 
	be computed exactly (without $\costEquiv[]$) if the input automata are 
	both B-automata.
		\item Let $\automatonA$ be a B/S-automaton over the alphabet $\Lambda$
and $\pi: \Lambda \to \Sigma$ be an alphabet projection function. There are
effectively computable B-automata $\automatonB_{\inf}$ and $\automatonB_{\sup}$
such that $(\semantics{\automatonA}_{B/S})\infProjection{\pi} \costEquiv[]
\semantics{\automatonB_{\inf}}_B$ and
$(\semantics{\automatonA}_{B/S})\supProjection{\pi} \costEquiv[]
\semantics{\automatonB_{\sup}}_B$.
	\end{enumerate}
\end{thm}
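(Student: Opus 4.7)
The plan is to reduce every case to a construction on B-automata by invoking Theorem~\ref{thm:EqualExpressivenessOfSandB}: whenever an input is an S-automaton, I first replace it by an equivalent B-automaton, accepting the $\costEquiv[]$-slack this introduces. The ``exactly'' clause for $\min$ and $\max$ is then automatic, because no conversion takes place when both inputs are already B-automata.

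For $\min$, I would build $\automatonC_{\min}$ as the disjoint union of $\automatonA$ and $\automatonB$ with disjoint counter sets; any accepting run lies entirely in one of the two summands, and the outer infimum of B-semantics distributes over the disjoint union to give
\[ \semantics{\automatonC_{\min}}_B(w) = \min\!\left(\semantics{\automatonA}_B(w), \semantics{\automatonB}_B(w)\right). \]
For $\max$, I would take the synchronized product $\automatonA \times \automatonB$ with the disjoint union of counter sets, so that an accepting run on $w$ is a pair $(\rho_A, \rho_B)$ of accepting runs on $w$ whose checked counter values form $\mathrm{C}(\rho_A) \cup \mathrm{C}(\rho_B)$. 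A standard swap argument (choose near-optimal runs on each side independently) yields
\[ \inf_{\rho_A, \rho_B} \max\bigl(\sup \mathrm{C}(\rho_A), \sup \mathrm{C}(\rho_B)\bigr) = \max\!\left(\inf_{\rho_A} \sup \mathrm{C}(\rho_A), \inf_{\rho_B} \sup \mathrm{C}(\rho_B)\right), \]
which is exactly $\max(\semantics{\automatonA}_B(w), \semantics{\automatonB}_B(w))$.

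For $\inf$-projection I would construct a B-automaton $\automatonB_{\inf}$ over $\Sigma$ whose transitions on $a \in \Sigma$ nondeterministically guess some $b \in \pi^{-1}(a)$ and then mimic the transitions and counter operations of $\automatonA$ on $b$. Accepting runs of $\automatonB_{\inf}$ on $w$ correspond bijectively to pairs $(u,\rho)$ with $u \in \bar\pi^{-1}(w)$ and $\rho$ an accepting run of $\automatonA$ on $u$, so the outer infimum of B-semantics collapses $\inf_u \inf_\rho$ into the projected quantity on the nose.

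The real obstacle is $\sup$-projection, because an infimum over runs cannot directly express a supremum over preimages. The trick is to swap semantics: in the S-world a letterwise guess-and-simulate construction realizes exactly the $\sup$-projection, since S-semantics already takes a supremum over runs and $\sup_u \sup_\rho = \sup_{(u,\rho)}$. Concretely, I would first convert $\automatonA$ into an equivalent S-automaton via Theorem~\ref{thm:EqualExpressivenessOfSandB}, apply this guess-and-simulate construction to obtain an S-automaton computing $(\semantics{\automatonA}_S)\supProjection{\pi}$, and then reconvert into a B-automaton $\automatonB_{\sup}$. The final S-to-B step is precisely where $\costEquiv[]$ enters and cannot be avoided: the equivalence between S- and B-automata is only up to a correction function, so $\automatonB_{\sup}$ captures the $\sup$-projection only up to $\costEquiv[]$ rather than exactly.
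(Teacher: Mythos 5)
This theorem is imported from \cite{regularcostfunctions} and the paper gives no proof of its own, so the only meaningful check is against the standard constructions from that reference — and your proposal reproduces them correctly: disjoint union for $\min$, synchronized product with disjoint counter sets for $\max$ (both exact on B-automata, with the independence-of-runs argument justifying the exchange of $\inf$ and $\max$), letterwise guess-and-simulate under B-semantics for the $\inf$-projection, and the same construction under S-semantics followed by an S-to-B conversion for the $\sup$-projection, which is indeed the one place where $\costEquiv[]$ is unavoidable. The argument is sound and matches the intended route.
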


\noindent We remark that $\min$, $\max$ and $\inf$- as well as
$\sup$-projection preserve the equivalence property $\costEquiv[]$ on
cost functions in the following sense: for $f \costEquiv[] f'$ and $g
\costEquiv[] g'$, we have $\max(f,g) \costEquiv[] \max(f',g')$,
$\min(f,g) \costEquiv[] \min(f',g')$ and for all projections $\pi$ we
also have $f_{\inf,\pi} \costEquiv[] f'_{\inf,\pi}$ and $f_{\sup,\pi}
\costEquiv[] f'_{\sup,\pi}$.

A first, simple consequence of the previous theorem is that we can easily
modify the values of a (concrete) cost function on a regular set of its domain. For
example, we can implement a case distinction between two cost functions.
Let $f,g$ be two regular cost functions (given in form of B-automata) 
and $L$ a regular set. 
The function 
\[ h: \Sigma^* \to \natInf, w \mapsto 
	\begin{cases}
		f(w) & \text{ if } w \in L \\
		g(w) & \text{ otherwise}
	\end{cases}
\]
can be defined using the above result by $h :=
\max(\min(g,\chi_L),\min(f,\chi_{\bar L}))$ where $\bar L$ denotes the
complement of $L$.

\begin{rem}\label{rem:ClosureUnderReversing}
	Similar to regular languages, the functions defined by B- or S-automata are
	closed under reversing the words. 
	Let $\automatonA$ be an B-/S-automaton and $w^\rev = a_n\ldots a_1$ denote
    the letter-wise reversed word of $w = a_1\ldots a_n$ for $w \in \Sigma^*$
    and $a_i \in \Sigma$. There is a correction function $\alpha$ and 
    a B-/S-automaton $\automatonB$ such that 
\[
	\semantics{\automatonA}_{B/S}(w) \costEquiv
    \semantics{\automatonB}_{B/S}(w^\rev)
\]
	If no change between B- and S- automata is made, this is even possible for
	$\alpha = \mathrm{id}_{\natInf}$. 
\end{rem}
\begin{proof}
	By Theorem~\ref{thm:EqualExpressivenessOfSandB} there is a simple
B-automaton equivalent to $\automatonA$. In simple B-automata the value of a
run is the same when reading the run forward or backward. Consequently
reversing all transitions and exchanging the sets $\In$ and $\Fin$ in this 
automaton yields the desired result automaton $\automatonB$. 
The additional statement follows from the idea that one can simulate the 
reversed runs of cost automata by reversing all transitions and additionally 
storing whether the counters have been checked and then applying the check at
the other end of the respective increment block. This preserves exact values.
\end{proof}

\section{Resource Structures and the Logic FO+RR}
\label{sec:ResourceStructuresAndLogic}

In the following section, we introduce resource structures and the logic
\FORR{} as a formal tool to represent systems with resource consumption and
specify combined properties on the behavior and resource consumption
of the system. First, we define a general framework of structures and logic.
Subsequently, we show that this framework is able to express our question
about bounded reachability. Finally, we establish a connection between a
subclass of resource structures and special forms of cost automata. This
connection allows us to effectively evaluate the logic on this subclass of resource
structures.

\subsection{Resource Structures}
\label{subsec:ResourceStructures}

A \emph{resource structure} is a relational structure whose relations
incorporate a notion of resource-cost. In contrast to standard structures, the
relations are not evaluated as a set of tuples but in the form of a function that
maps every tuple to a natural number or infinity. Intuitively, the assigned
value represents the amount of resources which is needed for this tuple to be in the
given relation. A value of infinity means that a tuple is not in the relation 
at all. 

\begin{defi}
 A relational signature $\tau = \{R_1,\ldots,R_m\}$ is a set of $n_i$-ary
relational symbols. A resource structure over some signature $\tau$ is a tuple
$\structureS = (S,R_1^\structureS,\ldots,R_m^\structureS)$ consisting
of a universe $S$ and  valuations $R_1^\structureS,\ldots,R_m^\structureS$ for
the relations in $\tau$. The valuation of each relational symbol is a function
$R_i^\structureS: S^{n_i} \to \natInf$ mapping every tuple to its resource-cost
value or infinity.
\end{defi}

Resource structures can be considered an extension of ordinary
relational structures. A standard relation can be represented in the form
of the characteristic resource function which maps tuples in relation to
0 and others to $\infty$. Conversely, we define the restriction of a
resource structure $\structureS$ to some allowed resource bound $k
\in\nat$. This restricted structure is the ordinary relational structure
given by $\structureS_{\le k} := (S,R_1^{\structureS_{\le
k}},\ldots,R_m^{\structureS_{\le k}})$ with valuations defined by
$R_i^{\structureS_{\le k}} := \{ \bar s \in S^{n_i} \mid R_i^\structureS(\bar s)
\le k\}$.

As an example, we can represent an \RPRS{} as a resource structure in the form of
the configuration graph with a quantitative reachability relation. So, for a
given \RPRS{} $\prsR$ let $\resRep{\prsR} = (\Sigma^*,\tspathTo{}^{\resRep{\prsR}})$. The
resource-cost for a pair of configurations $(a,b)$ is defined to be the
minimal cost of all possible paths from $a$ to $b$. If there is no such path,
the resource-cost is set to $\infty$, i.e., $\tspathTo{}^{\resRep{\prsR}}(a,b)
:= \inf \left\{ k \in \nat \mid a \configstepsLesserCost{k} b\right\}$.

\subsection{The Logic FO+RR}
\label{subsec:LogicFORR}

Specifications for systems with resource consumption should not only be able to
express properties on the behavior of the system but also be capable of modeling
constraints on the resources. A natural question in the context of resource
structures is how many resources are needed in order to satisfy some first-order
property. We define the logic \emph{first-order+resource relations} (for short
\FORR{}) to formalize this question. Its syntax is very similar to ordinary
first-order logic but does not contain negation.

\begin{defi}[Syntax of \FORR{}]
  Let $\tau = \{R_1,\ldots,R_m\}$ be a  relational signature. \FORR{} formulas
  over $\tau$ are:
  \begin{align*}
  \phi &::= x = y \mid x \ne y \mid R_1x_1\ldots x_{n_1} \mid \ldots \mid R_mx_1\ldots
x_{n_m} \\
   &\hphantom{::= }\;\, \phi \vee \phi \mid \phi \wedge \phi \mid \forall x\phi
\mid \exists x \phi 
  \end{align*}
  We denote the set of possible \FORR{} formulas over the signature $\tau$ by
$\FORR(\tau)$.
\end{defi}

The semantics assigns to each formula a finite number or infinity instead of a
truth value. Intuitively, this number is the amount of resources which are
necessary to satisfy the formula. More precisely, it is the minimal number $k$
such that $\structureS_{\le k}$ satisfies the formula when interpreted as normal
first-order formula. This idea also explains the lack of negation in the logic.
The intuitive formulation of the semantics implies that a higher amount of
allowed resources leads to more satisfiable formulas. However, this monotonicity
is incompatible with negation. In the following, we define the formal semantics
in a way to calculate this value directly. It is easy to verify that the above
described intuition and the formal semantics below coincide.

\begin{defi}[Semantics of \FORR{}]
  Let $\varphi \in \FORR(\tau)$, $\tau = \{ R_1, \ldots R_m\}$ a signature with
$n_i$-ary relational symbols and $\structureS =
(S,R_1^\structureS,\ldots,R_m^\structureS)$ a resource structure.

  The semantics $\semantics{\varphi}^\structureS : (\free(\varphi) \to S) \to
\natInf$ of the formula $\varphi$ is a function which takes a valuation for the
free variables of the formula and maps it to a finite number or infinity. It is
defined inductively by:
  \begin{align*}
	\semantics{x = y}^\structureS(\valfunc) &:= \begin{cases} 0 & \text{if }
\valfunc(x) = \valfunc(y) \\ \infty & \text{otherwise} \end{cases} \\
	\semantics{x \ne y}^\structureS(\valfunc) &:= \begin{cases} \infty & \text{if }
\valfunc(x) = \valfunc(y) \\ 0 & \text{otherwise} \end{cases} \\
	\semantics{R_ix_1\ldots x_{n_i}}^\structureS(\valfunc) &:=
R_i^\structureS(\valfunc(x_1),\ldots,\valfunc(x_{n_i})) \\
	\semantics{\varphi_1 \vee \varphi_2}^\structureS(\valfunc) &:=
\min(\semantics{\varphi_1}^\structureS(\valfunc),\semantics{\varphi_2}
^\structureS(\valfunc)) \\
	\semantics{\varphi_1 \wedge \varphi_2}^\structureS(\valfunc) &:=
\max(\semantics{\varphi_1}^\structureS(\valfunc),\semantics{\varphi_2}
^\structureS(\valfunc)) \\
	\semantics{\exists x \varphi}^\structureS(\valfunc) &:= \inf_{a \in S}
\semantics{\varphi}^\structureS(\valfunc[x \to a]) \\
	\semantics{\forall x \varphi}^\structureS(\valfunc) &:= \sup_{a \in S}
\semantics{\varphi}^\structureS(\valfunc[x \to a]) \\
	\text{where } \valfunc[x \to a](y) &:= \begin{cases} a & \text{if } y =
x \\ \valfunc(y) & \text{otherwise} \end{cases}
  \end{align*}
  Note that $\semantics{\varphi}^\structureS$ is a constant when
$\free(\varphi) = \emptyset$. For a formula
$\varphi$ with $\free(\varphi) = \{x_1,\ldots,x_j\}$, we 
also write $\varphi(\bar x)$ and set $\semantics{\varphi(\bar
a)}^\structureS := \semantics{\varphi}^\structureS([x_i \to a_i])$ for
$\bar a \in S^j$. In general, we use the letters $x,y,z$ to indicate
variables of the logic and $a,b,c$ to indicate elements of the structure. 
\end{defi}

The formalism of \FORR{} can be used to express the bounded reachability
problem. By formalizing the definition of bounded reachability and 
simple equivalences using the formal definition of \FORR{}, we obtain:

\begin{prop}\label{prop:BoundedReachabilityIsFORRExpressible}
  Let $\structureS = (\Sigma^*,\tspathTo,\overline{A},B)$ be an extended
resource structure representation of an \RPRS{} where $A$ and $B$ are two 
sets of configurations. Let the relation $\overline{A}$ be valuated by the 
characteristic function of the complement of the set $A$, and the relation $B$ by the 
characteristic function of $B$.
The set $B$ is boundedly reachable in the \RPRS{} from $A$ if and only if
  $(\semantics{\forall x \exists y \overline{A}x \vee (By \wedge x \tspathTo
y)}^\structureS < \infty)$.
\end{prop}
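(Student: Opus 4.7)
The plan is to unfold the semantics of the formula $\forall x\exists y\, \overline{A}x \vee (By \wedge x \tspathTo y)$ step by step according to the definition of $\semantics{\cdot}^{\structureS}$, and then verify by simple calculation that the resulting nested $\inf/\sup/\min/\max$ expression is finite precisely when the bounded reachability condition from Definition \ref{subsec:BoundedReachability} (i.e.~the one recalled just before the proposition) holds.

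First, I would apply the clauses for disjunction and conjunction to obtain, for any valuation $\valfunc$,
\[
\semantics{\overline{A}x \vee (By \wedge x \tspathTo y)}^\structureS(\valfunc)
= \min\bigl(\overline{A}^{\structureS}(\valfunc(x)),\, \max(B^{\structureS}(\valfunc(y)),\, \tspathTo^{\structureS}(\valfunc(x),\valfunc(y)))\bigr).
\]
Since the first argument of the outer $\min$ does not depend on $y$, the $\inf_y$ for the existential quantifier commutes past it, giving
\[
\semantics{\exists y\, \overline{A}x \vee (By \wedge x \tspathTo y)}^\structureS(\valfunc)
= \min\!\bigl(\overline{A}^{\structureS}(\valfunc(x)),\; \inf_{b \in \Sigma^*} \max(B^{\structureS}(b),\, \tspathTo^{\structureS}(\valfunc(x),b))\bigr).
\]
Taking the $\sup$ over $a \in \Sigma^*$ for the universal quantifier then yields the whole value of the (closed) formula.

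The second step uses the specific shape of the characteristic-function valuations: $\overline{A}^{\structureS}(a)$ is $0$ if $a \notin A$ and $\infty$ if $a \in A$, and $B^{\structureS}(b)$ is $0$ if $b \in B$ and $\infty$ otherwise. Consequently, $\min(\overline{A}^{\structureS}(a), X)$ equals $0$ whenever $a \notin A$ and equals $X$ whenever $a \in A$, so the supremum over all $a \in \Sigma^*$ reduces to a supremum over $a \in A$ (the other terms contribute $0$). Similarly, for any fixed $a$, the expression $\max(B^{\structureS}(b), \tspathTo^{\structureS}(a,b))$ is $\infty$ unless $b \in B$, so the infimum over all $b \in \Sigma^*$ may be restricted to $b \in B$. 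Altogether, the value of the formula equals
\[
\sup_{a \in A}\; \inf_{b \in B}\; \tspathTo^{\structureS}(a,b)
\;=\; \sup_{a \in A}\; \inf_{b \in B}\; \inf\{k \in \nat \mid a \configstepsLesserCost{k} b\}.
\]

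Finally, I would conclude by unpacking when this nested expression is $< \infty$: by definition of the supremum, finiteness is equivalent to the existence of a uniform bound $k \in \nat$ such that for every $a \in A$ there is some $b \in B$ with $a \configstepsLesserCost{k} b$, which is exactly the definition of bounded reachability of $B$ from $A$. I do not anticipate a genuine obstacle here; the only care needed is in justifying that the constant term can be pulled out of $\inf_y$ in the existential step, and in the two ``restriction of the quantifier'' simplifications that exploit the $\{0,\infty\}$-valued nature of the characteristic resource relations.
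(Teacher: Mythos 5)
Your proposal is correct and is exactly the argument the paper leaves implicit (the proposition is stated with only the remark that it follows ``by formalizing the definition of bounded reachability and simple equivalences using the formal definition of \FORR{}''): unfolding the semantics to $\sup_{a\in A}\inf_{b\in B}\tspathTo^{\structureS}(a,b)$ and reading off finiteness as the existence of a uniform bound. The only point worth making explicit is that the inner infima are attained because they range over $\nat\cup\{\infty\}$, which is what lets a finite supremum $k$ translate into an actual witness $b\in B$ with $a\configstepsLesserCost{k}b$ for each $a\in A$.
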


\subsection{Resource Automatic Structures}
\label{subsec:ResourceAutomaticStructures}

B. Khoussainov and A. Nerode introduced the concept of automatic structures in
\cite{automatic-structures}. Automatic structures are relational structures
which are representable by automata in two aspects. First, they must have a
representation of the universe in the form of a regular language. Second, their
relations must be representable with synchronous transducers which operate over
the language representation of the universe. In \cite{automatic-structures},
automata theoretic methods are used to prove that the FO-theory of every 
automatic structures is decidable. 

We extend this concept of automatic structures to the area of resource
structures. First, we define synchronous resource transducers as a
straight-forward extension of usual synchronous transducers (see
\cite{khoussainov-automata-theory-and-app} for a comprehensive introduction on
synchronous transducers). Based on this model, we define
resource automatic structures as resource structures whose relations are
representable by these transducers. Finally, we prove that the semantics of
\FORR{} is effectively computable over resource automatic structures.

A synchronous resource transducer can be seen as a cost automaton operating
over an alphabet $\Sigma'$ consisting of vectors of elements from some original
alphabet $\Sigma$. Additionally, the vector can contain special padding symbols
(we use $\$$). With this notation, it is possible to describe a vector of words
over $\Sigma$ in the form of a word over the vector alphabet $\Sigma'$. Since the
words may have different lengths, one pads all words to the length of the
longest word either on the left- or the right end of the word.  We formalize
these ideas for the left-aligned case in the following definitions.

\begin{defi}[Word-Relations]\label{def:PaddingAndWordVectors}
	Let $\Sigma$ be a finite alphabet. In order to differentiate between words
of length $n$ and a vector of dimension $n$, the set of vectors of dimension
$n$ with entries in $\Sigma \cup \{\$\}$ is denoted by $\alphVector{\Sigma}{n}$
and the words of length $n$ still by $\Sigma^n$. Vectors with arbitrary words
as entries are denoted by $(\Sigma^*)^n$. The symbol $\pad$ (read ``pad'') is 
used for the vector consisting only of padding symbols ($\$$). We define the 
function $\bindL$ to translate between vectors of words and words over
vector-alphabets. The abbreviation \emph{conv} was introduced by Khoussainov and
Nerode and means \emph{convolution}.
\begin{align*}
	&\wsel: \Sigma^* \times \nat \to \Sigma \cup \{\$\}, (w,i) \mapsto
\begin{cases}
	a_i & \text{if $i \le |w|$ and $w = a_1 \ldots a_{|w|}$ for $a_i \in
\Sigma$} \\
	\$ & \text{otherwise}
\end{cases} \\
	&\bindL: (\Sigma^*)^n \to \alphVectorS{\Sigma}{n}, \\ 
&(w_1,\ldots,w_n) \mapsto
\left(\begin{matrix} \wsel(w_1,1) \\ \vdots \\ \wsel(w_n,1)\end{matrix}\right)
\cdots \left(\begin{matrix} \wsel(w_1,\ell) \\ \vdots \\
\wsel(w_n,\ell)\end{matrix}\right) \text{ with } \ell := \max_{i = 1,\ldots,n}
|w_i| \\
\end{align*}
	Furthermore, we define $\strip: (\Sigma \cup \{\$\})^* \to \Sigma^*$ to be
the function which just removes all occurrences of $\$$. With this, we define
$\unbind$ by
\begin{align*}
 & \unbind: \alphVectorS{\Sigma}{n} \to (\Sigma^*)^n , \\
 & 	\left(\begin{matrix} a_{1,1} \\ \vdots \\ a_{n,1}\end{matrix}\right) 
	\cdots 
	\left(\begin{matrix} a_{1,\ell} \\ \vdots \\ a_{n,\ell}\end{matrix}\right)
	\mapsto (\strip(a_{1,1}\ldots a_{1,\ell}), \ldots, \strip(a_{n,1}\ldots
a_{n,\ell}))
\end{align*}

We call a word over the vector alphabet correctly padded if it is generated
by some tuple of words. Formally, we denote the
set of all left-aligned correctly padded words over the $n$ dimensional vector 
alphabet by $\alphVectorL{\Sigma}{n} := \bindL((\Sigma^*)^n)$. Accordingly, the complement
(the set of not correctly padded words) is called $\NalphVectorL{\Sigma}{n}$.
We remark that correctly padded words do not contain the $\pad$ symbol and 
in every component there are only $\$$ after the first position where $\$$ 
occurs. Moreover, the correctly padded words $\alphVectorL{\Sigma}{n}$ form
a regular set.

To simplify the notation in the constructions we write $\padprodL$ to
combine two correctly padded words from vector alphabets
$\alphVectorL{\Sigma}{n_1}$, $\alphVectorL{\Sigma}{n_2}$ to a new correctly
padded word from $\alphVectorL{\Sigma}{n_1+n_2}$ which combines the two vectors.
Formally, this can also be written in the form $\bar v \padprodL \bar w =
\bindL((\unbind(\bar v),\unbind(\bar w)))$.
\end{defi}

\begin{defi}[Synchronous Resource
Transducer]\label{def:SynchronousResourceTransducer}
  A synchronous resource transducer for an $n$-ary relation $R \subseteq
(\Sigma^*)^n$ over a finite alphabet $\Sigma$ is a B-automaton $\automatonT$
operating over the alphabet $\alphVector{\Sigma}{n}$.

  The semantics is given by:
  \begin{align*}
   \semantics{\automatonT}_{\padprodL} &: (\Sigma^*)^n \to \natInf, \bar w
      \mapsto \semantics{\automatonT}_B(\bindL(\bar w)) 
  \end{align*}

  We also define $\semantics{\automatonT}_{\padprodL}^S$ which is identical
to the above definition but uses S-automaton semantics instead of B-automaton
semantics on the automaton $\automatonT$. We remark that in addition to this
relational semantics, we sometimes view these automata as plain cost automata
over $\alphVector{\Sigma}{n}$. This point of view is mainly used in
constructions.
\end{defi}

We remark that we could repeat all the definitions with index $\mathsf{R}$ for
right-aligned relations. These relations are essentially the (word-wise) reversed 
relations of left-aligned ones. For a quantiative relation $R: \Sigma^* \times
\Sigma^* \to \natInf$, let $R^\rev(u,v) := R(u^\rev,v^\rev)$. By 
Remark~\ref{rem:ClosureUnderReversing}, it is easy to see that $R$ can be 
defined by a left-aligned B-automaton iff $R^\rev$ can be defined by a 
right-aligned B-automaton. 
We follow the notation in most of the literature
regarding automatic structures and present the definitions and ideas around 
resource automatic structures with left-aligned relations. However, we show 
that there is no conceptual difference to right-alignment and use right-aligned 
relations in the context of prefix replacement systems because we consider 
it to be more natural.

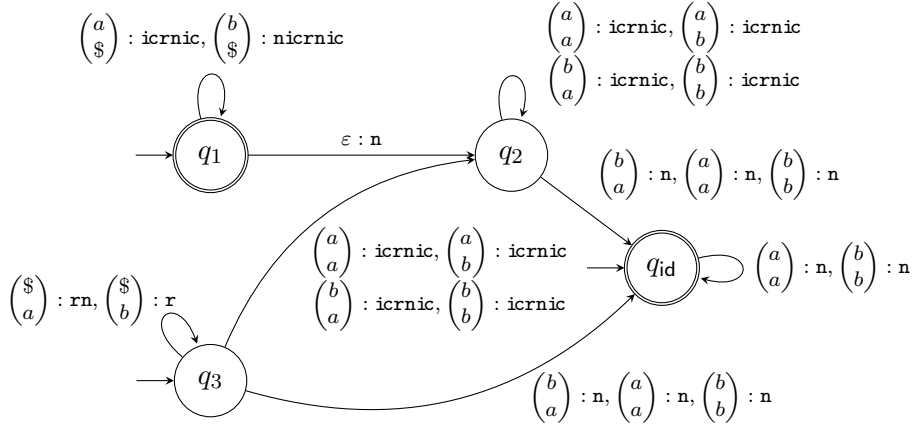
\begin{figure}[t]
	\begin{center}
			\begin{tikzpicture}[initial text=,>=stealth]
	\tikzstyle{ln}=[auto,font={\scriptsize}]

	\node[state,initial,accepting] (q1) at (2,1.5) {$q_1$};
	\node[state] (q2) at (6,1.5) {$q_2$};
	\node[state,initial] (q3) at (2,-1.5) {$q_3$};
	\node[state,initial,accepting] (qId) at (8,0) {$q_{\mathsf{id}}$};
	
	\draw[->] 	(q1) edge[loop above] node[ln] {$\twovec{a}{\$} : \icOp\rOp\nOp\icOp, \twovec{b}{\$} : \nOp\icOp\rOp\nOp\icOp$} (q1)
				(q1) edge node[ln] {$\eps:\nOp$} (q2)
				(q2) edge[loop above] node[ln,right=3mm,yshift=3mm] {$\begin{matrix}	\twovec{a}{a} : \icOp\rOp\nOp\icOp, \twovec{a}{b} : \icOp\rOp\nOp\icOp \\ 
																		\twovec{b}{a} : \icOp\rOp\nOp\icOp, \twovec{b}{b} : \icOp\rOp\nOp\icOp \end{matrix}$} ()
				(q2) edge node[ln] {$\twovec{b}{a} : \nOp, \twovec{a}{a} : \nOp,\twovec{b}{b} : \nOp$} (qId);

	\draw[->]	(q3) edge[bend right] node[ln,pos=0.7,below right=-2mm] {$\twovec{b}{a} : \nOp, \twovec{a}{a} : \nOp,\twovec{b}{b} : \nOp$} (qId)
				(q3) edge[in=110,out=140,loop] node[ln,above left=-4mm] {$\twovec{\$}{a} : \rOp\nOp, \twovec{\$}{b} : \rOp$} ()
				(q3) edge[bend left] node[ln,below right=-2mm,pos=0.45] {$\begin{matrix}	\twovec{a}{a} : \icOp\rOp\nOp\icOp, \twovec{a}{b} : \icOp\rOp\nOp\icOp \\ 
																				\twovec{b}{a} : \icOp\rOp\nOp\icOp, \twovec{b}{b} : \icOp\rOp\nOp\icOp \end{matrix}$} (q2);
	
	\draw[->] (qId) edge[loop right] node[ln] {$\twovec{a}{a} : \nOp, \twovec{b}{b} : \nOp$} ();
\end{tikzpicture}
	\end{center}
	\caption{Synchronous resource transducer for the reachability-cost of the
\RPRS{} in Figure~\ref{fig:RPRSExample}}
	\label{fig:ExampleSyncResTransducer}
\end{figure}

To illustrate and motivate the previous definition, we give an example
transducer in Figure~\ref{fig:ExampleSyncResTransducer}. In order to simplify
the presentation of the automaton, we use the usual concept of $\eps$-transitions 
although they are not originally part of the definition 
of B-automata. An $\eps$-transition changes the state and executes the associated counter
operation but does not consume symbols from the input word. It is known that 
$\eps$-transitions do not change the expressive power of the model and that it
is possible to algorithmically obtain an equivalent B-automaton. The idea behind 
this $\eps$-elimination procedure is similar to the one for normal $\eps$-NFAs. The 
transducer in Figure~\ref{fig:ExampleSyncResTransducer} computes the 
reachability-cost relation of the example \RPRS{} 
whose configuration graph is shown in Figure~\ref{fig:RPRSExample}. It 
operates on right-aligned words since this is more common for prefix
replacement systems. This behavior resembles the replacement 
operation which allows to replace some prefix followed by a common postfix. 
We remarked earlier, that it is possible to remove an arbitrary number of 
$a$s or $b$s in front of a word with a resource-cost of at most 2. This is 
reflected in the loops of $q_1$ which are labeled with the counter operations 
$(\nOp)\icOp\rOp\nOp\icOp$. This sequence of counter operations corresponds 
to one removal step with detour. We remark that we left the $\nOp$s in the counter
sequence although they are not part of B-automaton counter operations. 
This way, the counter operations in the automaton  
resemble more closely the accumulated operations of all replacement steps. In a similar 
way we can also replace one letter by another. However, this requires first to 
pop the stack until the letter which should be changed and afterwards to push the old 
contents again. We saw that only the pop operations influence the 
resource-cost. This is handled in the state $q_2$ which calculates the cost of
the pop operations, the push operations match the number of pop operations but
do not cost anything and thus do not occur explicitly in the transitions.
The state $q_3$ covers the ``free'' addition at the 
beginning of a word. Nevertheless, in all cases we have to ensure that the 
last $b$ in the word is not replaced by an $a$ since this is not possible in 
the \RPRS{}. This is ensured by the transitions to the state 
$q_{\mathrm{id}}$. This state finally recognizes the identity with no more 
resource-cost after the replaced front was completely read. Although this 
particular transducer was constructed manually, we see in 
Section~\ref{sec:ReachabilityWithAnnotations} how to derive such a 
transducer algorithmically. We remark that the algorithmic construction
works in a completely different way. 

We define \emph{resource automatic structures} as a combination of the
ideas of resource- and automatic structures. Similar to automatic structures,
we require a representation of the universe in the form of a regular language.
Additionally, the resource relations are required to have a synchronous
resource transducer which computes their semantics. 

\begin{defi}[Resource Automatic Structure]
A resource structure $\structureS = (S,R_1^\structureS,\ldots,R_m^\structureS)$
is called resource automatic if it satisfies two conditions. First, there is a
finite alphabet $\Sigma$ such that $S \subseteq \Sigma^*$ is a regular language.
Second, for all relations $R_i^\structureS$ there exists a synchronous resource
transducer $\automatonT_{R_i}$ such that $R_i^\structureS(\bar a) =
\semantics{\automatonT_{R_i}}_{\padprodL}(\bar a)$.

Additionally, we also call a structure resource automatic if it is isomorphic to
a resource automatic structure as defined above.
This way, we also allow for resource automatic structures with a universe 
which is not a word-language.
\end{defi}

We first remark that for every resource automatic structure $\structureS$ and
every $k \in \nat$ the structure $\structureS_{\le k}$ is automatic. As
remarked earlier, it is possible to simulate the behavior of cost automata
up to a fixed bound $k$ with normal finite automata by extending their state
space. Thus, one can easily obtain synchronous transducers for the relations in
$\structureS_{\le k}$ from the resource transducers.

\begin{rem}\label{rem:ResourceAutomaticRightAligned}
	One can also consider resource automatic structures given by right-aligned
	synchronized resource transducers. These structures are equally expressive
	as resource automatic structures given by left-aligned transducers.
\end{rem}
\begin{proof}
	As remarked earlier, if there is a right-aligned (B-)resource transducer for 
	some relation $R$, there is a left-aligned resource transducer for 
	$R^\rev$. Consequently, one directly obtains an isomorphic structure
	given by left-aligned resource transducers (with the isomorphism $\vartheta:
	S \to S^\rev, w \mapsto w^\rev$). 
\end{proof} 

The rest of this section is dedicated to show how we can compute the semantics
of \FORR{} formulas in a given resource automatic structure. The proof is
divided into a lemma followed by the main result. The proof of the main
result is quite similar to the decidability proof of \FO{} over automatic
structures. We inductively construct synchronous resource transducers which
(approximately) compute the semantics of \FORR{} formulas with free variables.
In Section~\ref{sec:Preliminaries}, we saw that the $\min$ and $\max$ of two
cost automata is again representable by a cost automaton. This directly
enables a simulation of $\wedge$ and $\vee$ in \FORR{} formulas on the level of
automata. Although we also saw that $\sup$ and $\inf$-projection can be realized
by cost automata, these operations do not directly correspond to the
semantics of $\forall$ and $\exists$ in \FORR{}. The reason for this is that
alphabet projection does not preserve the encoding of convoluted words. If one
projects away the longest word in a convolution of words, the result contains
a sequence of $\pad$ symbols at the end. Hence, the result is not correctly
padded. In the following lemma we show how to solve this problem.
The main idea
is to divide the calculation of the semantics of $\forall$ and $\exists$ into
two $\sup$ or $\inf$ operations by the following idea. We use
$\sup$- and $\inf$-projection in a first step and add an additional step to
cover sequences of $\pad$ symbols at the end. In this second step, we use that
computing the $\inf$ and $\sup$ over all runs is part of B- and S-automaton
semantics. Thus, we can implement such a computation by adding new transitions
to an automaton.

\begin{lem}\label{lem:InfAndSupLikeFORRAreComutable}
	Let $\Sigma$ be a finite alphabet, $\automatonT$ a synchronous resource
transducer operating over $\alphVectorS{\Sigma}{n+1}$. There are effectively 
computable synchronous resource transducers $\automatonT_{\sup}$, 
$\automatonT_{\inf}$ operating over $\alphVectorS{\Sigma}{n}$ and a correction
function $\alpha$ such that for all $\bar u \in \alphVectorL{\Sigma}{n}$:
\begin{enumerate}[label=(\roman*)]
	 \item $\semantics{\automatonT_{\sup}}_B(\bar u) \costEquiv
   \sup\limits_{w \in \Sigma^*} \semantics{\automatonT}_B(\bar u \padprodL w)$
	\item $\semantics{\automatonT_{\inf}}_B(\bar u) \costEquiv \inf\limits_{w
     \in \Sigma^*} \semantics{\automatonT}_B(\bar u \padprodL w)$
\end{enumerate}
\end{lem}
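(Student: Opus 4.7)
The plan is to combine the sup-/inf-projection of Theorem~\ref{thm:ClosureOfRegularCostFunctions} with a secondary operation that handles the $\pad$-symbols appearing when the quantified word $w$ is longer than the longest component of $\bar u$. Write $\ell := \max_i |\unbind(\bar u)_i|$ and let $\pi: \alphVector{\Sigma}{n+1} \to \alphVector{\Sigma}{n}$ be the letter-wise projection dropping the last component. A direct check shows that $\pi(\bar u \padprodL w) = \bar u$ if $|w| \le \ell$ and $\pi(\bar u \padprodL w) = \bar u \pad^{|w|-\ell}$ otherwise, and conversely a word $\bar z \in \alphVectorL{\Sigma}{n+1}$ lies in $\pi^{-1}(\bar u \pad^k)$ exactly when $\bar z = \bar u \padprodL w$ for some $w$ with $|w| \le \ell$ (if $k = 0$) or $|w| = \ell + k$ (if $k \ge 1$). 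Consequently the required value equals $\sup_{k \ge 0} \sup_{\bar z \in \pi^{-1}(\bar u \pad^k) \cap \alphVectorL{\Sigma}{n+1}} \semantics{\automatonT}_B(\bar z)$, and analogously with both sups replaced by infs, so the task is reduced to realizing these nested quantifications by one automaton.

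First I discard invalid convolutions. Using closure under $\min/\max$ from Theorem~\ref{thm:ClosureOfRegularCostFunctions}(i), I build $\automatonT^{(0)}_{\sup} := \min(\automatonT, \chi_{\NalphVectorL{\Sigma}{n+1}})$ and $\automatonT^{(0)}_{\inf} := \max(\automatonT, \chi_{\alphVectorL{\Sigma}{n+1}})$, which agree with $\automatonT$ on $\alphVectorL{\Sigma}{n+1}$ and take the neutral values $0$ resp.\ $\infty$ outside. Applying the $\sup$- and $\inf$-projections of Theorem~\ref{thm:ClosureOfRegularCostFunctions}(ii) along $\pi$ then yields B-automata $\automatonB^{\sup}, \automatonB^{\inf}$ over $\alphVector{\Sigma}{n}$ whose $B$-semantics on any input $\bar v$ is $\costEquiv$ to the sup (resp.\ inf) of $\semantics{\automatonT}_B(\bar z)$ over valid $\bar z \in \pi^{-1}(\bar v)$; the non-valid preimages contribute only the neutral value.

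It remains to absorb the outer $\inf_{k \ge 0}$ (resp.\ $\sup_{k \ge 0}$) of $\semantics{\automatonB^{\sup/\inf}}_B(\bar u \pad^k)$ into a single automaton reading only $\bar u$. For $\automatonT_{\inf}$ I duplicate the state space of $\automatonB^{\inf}$ into a reading copy $Q$ and a padding copy $Q'$: the original transitions stay inside $Q$; from every $q \in Q$ an $\epsilon$-transition moves to its copy $q' \in Q'$; and every $\pad$-transition $(q, \pad, q'', u)$ of $\automatonB^{\inf}$ becomes an $\epsilon$-transition $(q', \epsilon, q''', u)$ inside $Q'$, with the final states being the $Q'$-copies of the original finals. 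Since $\bar u$ contains no $\pad$, no real input can be consumed after entering $Q'$, so accepting runs on $\bar u$ are in value-preserving bijection with runs of $\automatonB^{\inf}$ on $\bar u \pad^k$ for some $k \ge 0$; the inf-over-runs of $B$-semantics then realizes $\inf_k \semantics{\automatonB^{\inf}}_B(\bar u \pad^k)$, and standard $\epsilon$-elimination delivers $\automatonT_{\inf}$. The main obstacle is the $\sup$ case, because the inf-over-runs of $B$-semantics cannot directly express a $\sup_k$. I resolve it by first converting $\automatonB^{\sup}$ into an equivalent S-automaton via Theorem~\ref{thm:EqualExpressivenessOfSandB}, performing the same duplication-with-$\epsilon$ construction, and interpreting the result under S-semantics (whose sup-over-runs now precisely realizes $\sup_k$); a final conversion back into a B-automaton via Theorem~\ref{thm:EqualExpressivenessOfSandB} produces $\automatonT_{\sup}$, at the cost of a composed correction function $\alpha$ arising from the two B/S-translations, which explains the $\costEquiv$ in the statement.
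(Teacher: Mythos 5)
Your proposal is correct, and for part (i) it follows essentially the same route as the paper: neutralize invalid convolutions, apply the $\sup$-projection of Theorem~\ref{thm:ClosureOfRegularCostFunctions}, pass to an S-automaton, and absorb the residual $\sup_{\bar p\in\pad^*}$ by turning $\pad$-transitions into $\eps$-transitions in a two-copy state space so that S-semantics (sup over runs) realizes it. For part (ii) you diverge: you reuse the same duplication-with-$\eps$ construction under B-semantics, letting the inf-over-runs realize $\inf_k$ exactly. The paper explicitly notes that this alternative works, but instead makes final every state of the (simple) inf-projection automaton from which a final state is reachable via $\pad$-transitions only; this avoids any $\eps$-elimination at the price of an additive correction $\delta(x)=x+m$ ($m$ the number of states), whereas your construction is exact at that step but requires B-automaton $\eps$-elimination. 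The one place where your write-up is thinner than it needs to be is the S-automaton side of part (i): after inserting $\eps$-transitions you appeal to Theorem~\ref{thm:EqualExpressivenessOfSandB} to return to a B-automaton, but that theorem is stated for $\eps$-free automata, and $\eps$-elimination for S-automata is not routine --- the inserted $\eps$-transitions inherit the counter operations of the $\pad$-transitions, so $\eps$-loops containing increments can be pumped to raise the sup-over-runs value, and the paper devotes a separate argument (marking counters as ``arbitrarily large'' and skipping their checks) to handle exactly this. You should either cite such a procedure explicitly or argue that no increasing $\eps$-loops arise in your construction; as written, ``standard $\eps$-elimination'' covers the B-case but not the S-case.
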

\begin{proof}
The proof of both parts of the lemma is divided into two similar parts. First,
we show how to transform the single $\sup$ or $\inf$ operation into two
operations of which the first is the $\sup$- or $\inf$-projection as defined in
\cite{regularcostfunctions}. In the second part, we exploit the definition of
cost automata to compute the remaining ``$\sup_{\bar p \in \pad^*}$'' and
``$\inf_{\bar p \in \pad^*}$''.

We start with part (i) of the statement in the lemma. So, let $f :
\alphVectorS{\Sigma}{n+1} \to \natInf$ be the function defined by $\automatonT$
when interpreted as normal B-automaton. Since $\NalphVectorL{\Sigma}{n+1}$ is a
regular set, we can assume w.l.o.g. that $f(\bar u) = 0$ for all $\bar u \in
\NalphVectorL{\Sigma}{n+1}$. This ensures that the value of the $\sup$-projection
is determined only by correctly encoded (vector-)words. Let furthermore $\pi:
\alphVector{\Sigma}{n+1}\to \alphVector{\Sigma}{n}$ be the projection function
removing the last component. With the above explained argument, we can divide
the $\sup$ in the following way. We obtain that for
all $\bar u \in \alphVectorL{\Sigma}{n}$:
\[
  \sup_{w \in \Sigma^*} \semantics{\automatonT}_B(\bar u \padprodL w) = 
  \sup_{w \in \Sigma^*} f(\bar u \padprodL w) =  \sup_{\bar p \in \pad^*}
    \underbrace{\sup_{\twosm{\bar v \in
\alphVectorS{\Sigma}{n+1}:}{\bar\pi(\bar v)
     = \bar u \bar p}} f(\bar v)}_{\sup\text{-projection}}
\]
 
By Theorem~\ref{thm:ClosureOfRegularCostFunctions}, we obtain a
B-automaton $\automatonT'$ and by Theorem~\ref{thm:EqualExpressivenessOfSandB}
an S-automaton $\automatonT'_S$ operating over $\alphVectorS{\Sigma}{n}$ such
that for some correction functions $\beta$ and $\beta'$:
\[
 \sup_{w \in \Sigma^*} \semantics{\automatonT}_B(\bar u \padprodL w)
 \costEquiv[\beta] 
   \sup_{\bar p \in \pad^*} \semantics{\automatonT'}_B(\bar u  \bar p)
 \costEquiv[\beta'] 
  \sup_{\bar p \in \pad^*} \semantics{\automatonT'_S}_S(\bar u \bar p)
\]
We now show how to construct an S-automaton $\automatonT'_{\sup}$ which computes
the last $\sup$. The idea of this construction exploits the special semantics
of S-automata. For S-automata, the value of a word is the $\sup$ of the values
of all accepting runs. Thus, we can construct an automaton computing $\sup_{\bar
p \in \pad^*} \semantics{\automatonT'_S}_S(\bar u \bar p)$ by making all runs of
$\automatonT'_S$ on some $\bar u \bar p$ also accepting for $\bar u$. We
implement this by introducing $\eps$-transitions in positions with
$\pad$-transitions.  Formally, let $\automatonT'_S =
(Q,\alphVector{\Sigma}{n},\Delta,\In,\Fin,\Gamma)$. We define
$\automatonT_{\sup}' = (Q \times \{0,1\}, \alphVector{\Sigma}{n}, \Delta', 
\In \times \{0\}, \Fin \times \{1\},\Gamma)$ with:
	\begin{align*}
	  \Delta' &:= \{ ((p,0),a,(q,0),\fraku) \mid (p,a,q,\fraku) \in \Delta \} \\
			  &\;\cup \{ ((p,1),\eps,(q,1),\fraku) \mid (p,\pad,q,\fraku) \in
              \Delta\} \\
			  &\;\cup \{ ((p,0),\eps,(p,1),\fraku) \mid p \in Q, \fraku(\gamma) := \eps \text{ for all } \gamma \in \Gamma\}
	\end{align*}
	Let $\rho$ be an accepting run of $\automatonT'_S$ on $\bar u \pad^k$.
We construct the run $\rho'$ of $\automatonT_{\sup}'$ on $\bar u$ out of the run
$\rho$. First, copy the $\bar u$ part of the run $\rho$ into the first
component of the state vector and set the second component to $0$. Subsequently,
take the $\eps$-transition to change the second component to $1$. Then, copy the
$\pad^k$ part of the run $\rho$ into the first component and set the second
component to $1$. This is possible because of the $\eps$-transitions which are
inserted at the positions of the $\pad$-transitions. By the construction of the
transition relation, this is a valid accepting run of $\automatonT_{\sup}'$ on
$\bar u$ which induces the same counter sequence as $\rho$ and thus has the
same cost-value. Consequently, $\sup\limits_{\bar p \in \pad^*}
\semantics{\automatonT'_S}_S(\bar u \bar p) \le
\semantics{\automatonT_{\sup}'}_S(\bar u)$.

	Conversely, let $\rho$ be an accepting run of $\automatonT_{\sup}'$ on $\bar
u$. Since all final states have a $1$ in their second component and the run 
can only change the second component from $0$ to $1$ (and not back), the run
$\rho$ can be split into a part which uses states with a $0$ in the second
component and part which uses states with a $1$ there. By construction, the
first part consumes $\bar u$ and the second part uses $\eps$-transitions at
positions with $\pad$-transitions in $\automatonT'_S$. Consequently, it is
possible to construct a run $\rho'$ of $\automatonT'_S$ on $\bar u \pad^k$ for
some $k \in \nat$ which induces the same counter sequence as $\rho$. Therefore,
$\sup\limits_{\bar p \in \pad^*} \semantics{\automatonT'_S}_S(\bar u \bar p) \ge
\semantics{\automatonT_{\sup}'}_S(\bar u)$.

We use a rather lengthy procedure to eliminate $\eps$-transitions in S-automata,
which is described in more detail in \cite{la11}. The main problem in 
$\eps$-elimination for S-automata arises from the fact that loops of 
$\eps$-transitions with $\iOp$ counter operations are meaningful for the 
semantics of S-automata since the supremum over all runs is built. Consequently,
one could loop more and more often through the $\eps$-increment-loop in order
to obtain large counter values before the next check. The $\eps$-elimination 
procedure systematically searches for these loops and adds a control component 
to the automaton indicating that a certain counter may have an arbitrarily 
large value (due to looping in $\eps$-increment-loops). 
If a counter that is marked to have such an arbitrary large value hits a 
$\crOp$ operation, it just gets reset but not checked. 

First note that for every run $\rho$ of the $\eps$-automaton, there is a similar 
run $\rho'$ of the $\eps$-free automaton with at least the same value because
skipping a check can only increase the value of the run. 

For the converse, consider a run $\rho$ of the automaton with eliminated 
$\eps$-transitions that skips a check because the counter was indicated to be 
arbitrarily large. We distinguish two cases. First, if no counter is checked,
the value of $\rho$ is $\infty$. However, if we construct
a run $\rho'$ similar to $\rho$ on the $\eps$-automaton, the counter is checked
at the position where the check is skipped in $\rho$. In order to obtain the
value $\infty$ on the $\eps$-automaton, we construct a sequence $\rho_k'$ of
runs such that each run loops $k$ 
times through the $\eps$-increment-loop before the check. All these runs are valid
for the input word and yield a value of at least $k$. Since the value of a word on an
S-automaton is determined by the supremum over all runs, the value is $\infty$
as the value of $\rho$ in this case. Second, if there is some counter checked
in $\rho$, the value of $\rho$ is some finite $k$. When transferring the run
$\rho$ to a similar run $\rho'$ of the $\eps$-automaton, all the counters 
checked in $\rho$ are also checked in $\rho'$ with the same values. 
Consequently, we just have to ensure that the
additional check, which is skipped in $\rho$ but occurs in $\rho'$,
does not decrease the value of the run. To achieve this, we construct the run
$\rho'$ such that it loops $k$ times through the $\eps$-increment-loop before 
the check. Thereby, the counter has at least value $k$ before
the additional check occurs and the value of $\rho'$ stays $k$ because the
value of a run is determined by the minimal checked counter value in S-automata.

By Theorem~\ref{thm:EqualExpressivenessOfSandB}, we obtain an equivalent
B-automaton $\automatonT_{\sup}$ and thus obtain for appropriate correction 
functions $\delta$ and $\beta''$ in total:
\[
 \semantics{\automatonT_{\sup}}_B(\bar u) \costEquiv[\delta]
 \semantics{\automatonT_{\sup}'}_S(\bar u) = \sup\limits_{\bar p \in \pad^*}
 \semantics{\automatonT'_S}_S(\bar u \bar p) \costEquiv[\beta''] \sup_{w \in
 \Sigma^*} \semantics{\automatonT}_B(\bar u \padprodL w) 
\]
 
We now prove part (ii) of the lemma with similar techniques. However, we now
exploit the semantics of B-automata to implement the additional
$\inf$-computation.  So, let $f$ and $\pi$ be like in the previous case but now
assume w.l.o.g. that all words $\bar u \in \NalphVectorL{\Sigma}{n+1}$ have
$f(\bar u) = \infty$. For the same reasons as above, we obtain
\[ 
  \inf_{w \in \Sigma^*} \semantics{\automatonT}_B(\bar u \padprodL w) =
  \inf_{w \in \Sigma^*} f(\bar u \padprodL w) =  \inf_{\bar p \in \pad^*}
   \underbrace{\inf_{\twosm{\bar v \in \alphVectorS{\Sigma}{n+1}:}{\bar\pi(\bar v) =
   \bar u \bar p}} f(\bar v)}_{\inf\text{-projection}}
\]

Again by Theorem~\ref{thm:ClosureOfRegularCostFunctions} and
Theorem~\ref{thm:EqualExpressivenessOfSandB}, we obtain a simple B-automaton
$\automatonT'$ which computes the $\inf$-projection:
\[ 
 \inf_{w \in \Sigma^*} \semantics{\automatonT}_B(\bar u \padprodL w)
  \costEquiv[\beta] \inf_{\bar p \in \pad^*} \semantics{\automatonT'}_B 
 (\bar u \bar p) 
\]
We now create the automaton $\automatonT_{\inf}$ by taking $\automatonT'$ and
making all states final from which one can reach a final state with only
$\pad$-transitions. This yields an automaton which computes a function which is
slightly different from the real $\inf$ but still correct w.r.t. a
correction function $\delta$. Let $m$ be the number of states in
$\automatonT'$ and $\delta(x) = x + m$. We show that
\[
     \semantics{\automatonT_{\inf}}_B(\bar u) \costlea[\delta] 
     \inf_{\bar p \in \pad^*} \semantics{\automatonT'}_B(\bar u \bar p) 
     \text{ and }
     \inf_{\bar p \in \pad^*} \semantics{\automatonT'}_B(\bar u \bar p)
     \costlea[\delta]
     \semantics{\automatonT_{\inf}}_B(\bar u) 
\]
For the first inequality note that $\natInf$ is well-ordered. Thus, there
is a $k \in \nat$ such that $\semantics{\automatonT'}_B(\bar u \pad^k) = c$
assumes the value of the infimum. W.l.o.g. this infimum is smaller than $\infty$
(otherwise there is nothing to show in the first inequality). By the definition
of the model, there is an accepting run $\rho$ of $\automatonT'$ on $\bar
u\pad^k$ such that the cost-value of this run is $c$. Let $\rho'$ the front part
of $\rho$ which reads $\bar u$, $q$ the state after reading $\bar u$ and $c_q$
the maximal checked counter value at this point in the run. By the definition of
the semantics of B-automata, the maximal checked counter value can only increase
in the course of a run. Moreover, the run $\rho$ shows that the state $q$ can
reach a final state with only $\pad$-transitions. Therefore, $\rho'$ is an
accepting run of $\automatonT_{\inf}$ with the value $c_q$ and
\[
   \semantics{\automatonT_{\inf}}_B(\bar u) \le c_q \le c =
   \semantics{\automatonT'}_B(\bar u \pad^k) = 
   \inf\limits_{\bar p \in \pad^*} \semantics{\automatonT'}_B(\bar u \bar p) \le
   \delta\left(\inf\limits_{\bar p \in \pad^*} \semantics{\automatonT'}_B
   (\bar u \bar p)\right)
\]
Conversely, let now $\rho$ be an accepting run of $\automatonT_{\inf}$ on $\bar
u$ with cost-value $c = \semantics{\automatonT_{\inf}}_B(\bar u)$. If there is
no such run, we have $c = \infty$ and there is nothing to show. By construction,
there is a final state $q_f$ of $\automatonT'$ which is
reachable from $q$ with $k \ge 0$ $\pad$-transitions. Since there is a loop-free
path, we have $k \le m$. We look at the run $\rho'$ which is created by
appending these $\pad$-transitions to $\rho$. The run $\rho'$ is accepting for
$\automatonT'$ on $\bar u \pad^k$. Since every transition has at most one
$\icOp$-operation ($\automatonT'$ is simple), the cost-value of $\rho'$ is
limited by $c + k = \semantics{\automatonT_{\inf}}_B + k$. Altogether:
\[
 \inf\limits_{\bar p \in \pad^*} \semantics{\automatonT'}_B(\bar u \bar p) 
 \le  \semantics{\automatonT'}_B(\bar u \pad^k) \le c + k 
 \le \semantics{\automatonT_{\inf}}_B(\bar u) + m 
 = \delta\left(\semantics{\automatonT_{\inf}}_B(\bar u)\right)
\]
Alternatively, it also would have been possible to use the same approach as for
part (i). However, the presented way has the advantage that a costly
$\eps$-elimination procedure is not necessary. The major reason why such an
approach is not possible for the first part of the lemma is the semantics of
loops consisting only of $\eps$-transitions in cost automata. In S-automata,
paths with high values are preferred. Consequently, it would change the value
of a run to loop through some increment again and again. In the case of
B-automata it is sufficient to take it once or twice in order to obtain a low
counter value with a reset located on the loop. 
\end{proof}

We now have all prerequisites for a concise formulation of the main theorem on
resource automatic structures. We first inductively translate a formula
$\varphi$ with free variables into a synchronous resource transducer which
calculates a function that is $\alpha$-equivalent to the function defined by the
semantics of $\varphi$. Finally, we
explain why the equivalence relation $\costEquiv[]$ is no real restriction and
how to calculate precise values of the semantics.

\begin{thm}\label{thm:FORRisEffectivelyComputableOnRAStructures}
 There is an algorithm which takes as input a resource automatic structure 
$\structureS = (\Sigma^*,R_1^\structureS,\ldots,R_m^\structureS)$ in form of resource
transducers and an \FORR{} formula with at least one free variable over 
the signature of $\structureS$ and outputs a synchronous resource transducer 
that defines an equivalent function to $\semantics{\varphi}^\structureS$. 
\end{thm}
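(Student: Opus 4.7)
The proof proceeds by structural induction on $\varphi$. To keep arities well-behaved (in particular, to respect the requirement $n\ge 1$ of Lemma~\ref{lem:InfAndSupLikeFORRAreComutable}), I strengthen the statement: for every subformula $\psi$ and every tuple $(y_1,\ldots,y_n)$ of pairwise distinct variables with $\free(\psi)\subseteq\{y_1,\ldots,y_n\}$ and $n\ge 1$, one can construct a synchronous resource transducer $\automatonT_\psi$ over $\alphVector{\Sigma}{n}$ such that $\semantics{\automatonT_\psi}_{\padprodL}(\bar a)\costEquiv[] \semantics{\psi}^\structureS(\valfunc)$ where $\valfunc(y_i)=a_i$. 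Since the outermost formula has at least one free variable, I can always pick the tuple of free variables of the outer formula as the ambient tuple, guaranteeing $n\ge 1$ at every step of the recursion.

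The base cases are routine. For $x=y$ and $x\ne y$ in an ambient tuple, a small B-automaton accepts exactly the (complement of the) diagonal and assigns cost $0$ on acceptance and $\infty$ otherwise. For an atom $R_i\,x_{j_1}\cdots x_{j_{n_i}}$, start from the given resource transducer for $R_i^\structureS$, relabel the vector alphabet to permute components, and, when a variable is reused, add a component-wise equality check in parallel via the product with a diagonal automaton. The resulting transducer defines the desired function on the tuple of variables actually appearing in the atom; one then extends it to the ambient tuple by the cylindrification step described below.

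For the inductive cases, disjunction and conjunction are handled by first \emph{cylindrifying} both transducers to a common ambient tuple $(y_1,\ldots,y_n)$: for each component not used by the subformula, take a product with a small B-automaton that accepts any word in $\Sigma^*$ in that component and assigns cost $0$, while restricting inputs to $\alphVectorL{\Sigma}{n}$ so that the padding convention is preserved. Once both transducers operate over the same vector alphabet, Theorem~\ref{thm:ClosureOfRegularCostFunctions}(i) yields B-automata for $\min$ and $\max$. For $\exists x\,\psi$ and $\forall x\,\psi$, first permute components (a pure relabeling of the vector alphabet) so that $x$ is the last coordinate; by the strengthened hypothesis applied with ambient tuple $(y_1,\ldots,y_{n-1},x)$, where the $y_i$ are the ambient variables of the outer subformula, one obtains a transducer over $\alphVector{\Sigma}{n}$. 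Since the outer subformula retains at least one free variable by assumption, we have $n-1\ge 1$, so Lemma~\ref{lem:InfAndSupLikeFORRAreComutable}(ii) (for $\exists$) resp.\ (i) (for $\forall$) produces a synchronous resource transducer over $\alphVector{\Sigma}{n-1}$ whose semantics is $\costEquiv[]$-equivalent to the projected function, matching exactly the semantic clauses $\inf_{a\in S}\semantics{\psi}(\valfunc[x\to a])$ and $\sup_{a\in S}\semantics{\psi}(\valfunc[x\to a])$.

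The main technical obstacle is the interplay between the left-aligned padding convention and the cylindrification and projection steps: naively inserting or deleting a component destroys correct padding, and as in the proof of Lemma~\ref{lem:InfAndSupLikeFORRAreComutable} one must be careful to assign the right default value ($0$ or $\infty$) to incorrectly padded inputs depending on whether a $\sup$- or an $\inf$-style operation is subsequently applied. Propagating the $\costEquiv[]$-equivalence through the induction is then bookkeeping: each use of Theorem~\ref{thm:ClosureOfRegularCostFunctions} or Lemma~\ref{lem:InfAndSupLikeFORRAreComutable} introduces a correction function, and one collects them using the symmetrization $\gamma:=\max(\alpha\circ\beta,\beta\circ\alpha)$ noted after the definition of $\costEquiv[]$. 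All constructions are effective, giving the desired algorithm.
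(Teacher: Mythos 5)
Your proposal is correct and follows essentially the same route as the paper: structural induction translating each connective via Theorem~\ref{thm:ClosureOfRegularCostFunctions} and each quantifier via Lemma~\ref{lem:InfAndSupLikeFORRAreComutable}, with cylindrification and component reordering to align free-variable tuples. Your strengthened induction hypothesis with an explicit ambient tuple, and the explicit treatment of repeated variables in atoms, are just slightly more careful bookkeeping of steps the paper handles informally.
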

\begin{proof}
Let $\varphi(\bar x)$ be the formula with $n > 0$ free variables.

We show by induction on the structure of the formula how to construct a
synchronous resource transducer $\automatonT_{\varphi}$ such that for some
correction function $\alpha$:
\[ 
  \forall \bar a \in (\Sigma^*)^n: \semantics{\varphi(\bar a)}^\structureS \costEquiv
  \semantics{\automatonT_{\varphi}}_{\padprodL}(\bar a) 
\]

\inductionstart{Let $\varphi = (x = y)$}
  The following transducer obviously captures the semantics of $x = y$:
\[
 \automatonT_{x = y} = (\{q\},\alphVector{\Sigma}{2},\{q\},\{q\},\{\gamma_0\},
 \{(q,(a,a),q,\fraku) \mid a \in \Sigma, \fraku(\gamma_0) := \eps\})
\]
\inductionend\inductionstart{Let $\varphi = (x \ne y)$}
  The following transducer captures the semantics of $x \ne y$:
\[
 \automatonT_{x \ne y} = (\{q_=,q_{\ne},q_l,q_r\},\alphVector{\Sigma}{2},\{q_=\},\{q_{\ne},q_l,q_r\},\{\gamma_0\},
 \Delta)
\]
where $\Delta$ is given as follows with $\fraku$ s.t. $\fraku(\gamma_0) = \eps$
\begin{align*}
	\Delta &= \{ (q_=,(a,a),q_=,\fraku),(q_{\ne},(a,a),q_{\ne},\fraku)  \mid a \in \Sigma \} \\
			&\cup \{ (q_=,(a,b),q_{\ne},\fraku) \mid a,b \in \Sigma, a \ne b \} \\
			&\cup \{ (q_=,(a,\$),q_l,\fraku),(q_l,(a,\$),q_l,\fraku)  \mid a \in \Sigma \} \\
			&\cup \{ (q_=,(\$,a),q_r,\fraku),(q_r,(\$,a),q_r,\fraku)  \mid a \in \Sigma \}
\end{align*}

\inductionend\inductionstart{Let $\varphi = R_ix_1\ldots x_{n_i}$}
  By the definition of a resource automatic structure, there is a transducer
  $\automatonT_{R_i}$ such that  
\[
    \semantics{\automatonT_{R_i}}_{\padprodL}(\bar a) =
    R_i^\structureS(\bar a) = 
    \semantics{R_ia_1\ldots a_{n_i}}^\structureS =
    \semantics{\varphi(\bar a)}^\structureS 
\]

\inductionstep{Let $\varphi = \varphi_1 \vee \varphi_2$}
By the induction hypothesis, there are transducers $\automatonT_{\varphi_1}$,
$\automatonT_{\varphi_2}$ such that $\semantics{\varphi_1(\bar b)}^\structureS
\costEquiv[\beta] \semantics{\automatonT_{\varphi_1}}_{\padprodL}
(\bar b)$ and $\semantics{\varphi_2(\bar c)}^\structureS
\costEquiv[\delta] \semantics{\automatonT_{\varphi_2}}_{\padprodL}(\bar c)$. The
free variables of $\varphi$ consist of the free variables of $\varphi_1$
combined with the free variables of $\varphi_2$. Let $\bar x =
\{x_1,\ldots,x_n\}$ be the free variables of $\varphi$. In a
first step, the two automata $\automatonT_{\varphi_1}$ and
$\automatonT_{\varphi_2}$ have to be adapted such that they take all free
variables of $\varphi$ as input. This can easily be achieved by adding new
components to the input vector which are not taken into account by the
automaton. Furthermore, we possibly have to reorder the components in the input
of the automaton. This can be achieved by reordering them in the transition
relation accordingly. As a result, we obtain automata $\automatonT_{\varphi_1}'$
and $\automatonT_{\varphi_2}'$ operating over the input alphabet
$\alphVector{\Sigma}{n}$ such that:
\[
  \semantics{\automatonT_{\varphi_1}'}_{\padprodL}(\bar a)
  \costEquiv[\beta] \semantics{\varphi_1}^\structureS([x_i \to a_i])
  \text{ and }
  \semantics{\automatonT_{\varphi_2}'}_{\padprodL}(\bar a)
  \costEquiv[\delta] \semantics{\varphi_2}^\structureS([x_i \to a_i])
\]
By Theorem~\ref{thm:ClosureOfRegularCostFunctions}, we can construct an
automaton $\automatonT_{\varphi}$ such that the equation
$\semantics{\automatonT_{\varphi}}_B =
\min(\semantics{\automatonT_{\varphi_1}'}_B,\semantics{\automatonT_{\varphi_2}'}
_B)$ holds. For this automaton, we have
\begin{align*}
 \semantics{\automatonT_{\varphi}}_{\padprodL}(\bar a) 
   &= \min\left(
     \semantics{\automatonT_{\varphi_1}'}_{\padprodL}(\bar a),
     \semantics{\automatonT_{ \varphi_2 }'} (\bar a)
    \right) \\
	&\costEquiv 
	\min\left( 
	 \semantics{\varphi_1}^\structureS([x_i \to a_i]),
     \semantics{\varphi_2}^\structureS([x_i \to a_i]) 
    \right) \\
    &= \semantics{\varphi(\bar a)}^\structureS
\end{align*}

\inductionstep{Let $\varphi = \varphi_1 \wedge \varphi_2$}
	This step is analog to the previous one when replacing $\min$ by $\max$. 
\inductionstep{Let $\varphi = \forall y \psi$}
	By the induction hypothesis, there is an automaton $\automatonT_{\psi}$
such that $\semantics{\automatonT_{\psi}}_{\padprodL}(\bar c) \costEquiv[\beta]
\semantics{\psi(\bar c)}^\structureS$. We assume w.l.o.g that the free variable
$y$ is represented by the last component of the input vector of
$\automatonT_{\psi}$. By Lemma~\ref{lem:InfAndSupLikeFORRAreComutable}, there
is an automaton $\automatonT_{\varphi}$ such that
$\semantics{\automatonT_{\varphi}}_B(\bar u) \costEquiv[\delta] \sup\limits_{w
\in \Sigma^*} \semantics{\automatonT_{\psi}}_B(\bar u \padprodL w)$ for all
$\bar u \in \alphVectorL{\Sigma}{n}$. Altogether, we have:
\begin{align*}
	\semantics{\automatonT_{\varphi}}_{\padprodL}(\bar a) &=
    \semantics{\automatonT_{\varphi}}_B(\bindL(\bar a)) \\
	&\costEquiv[\delta] \sup\limits_{w \in \Sigma^*}
   \semantics{\automatonT_{\psi}}_B(\bindL(\bar a) \padprodL w) \\
    &= \sup\limits_{b \in S} \;
    \semantics{\automatonT_{\psi}}_{\padprodL}(\bar a, b) \\
    &\costEquiv[\beta] \sup_{b \in S} \semantics{\psi(\bar a, b)}^\structureS \\
    &= \semantics{\forall y \psi(\bar a)}^\structureS =
    \semantics{\varphi(\bar a)}^\structureS
\end{align*}

\inductionstep{Let $\varphi = \exists y \psi$}
	This step is again analog to the previous one. We just use the second part
of Lemma~\ref{lem:InfAndSupLikeFORRAreComutable} instead of the first part.\qedhere
\inductionend
\end{proof}

\noindent It remains to explain how to cover \FORR{} sentences. We remind the reader of 
the fact that the value of an \FORR{} sentence is a constant. 
Calculating this constant value of a sentence up to the equivalence relation 
$\costEquiv[]$ means just checking whether the value is infinite or not.
We present this separately from the case with free variables in order to 
emphasize the computational steps necessary to decide whether a formula 
has a finite value.  

\begin{cor}
	Let $\structureS$ be a resource automatic structure like in the previous 
	theorem and $\varphi$ an \FORR{} sentence over the signature of $\structureS$.
	It is decidable whether $\semantics{\varphi}^\structureS < \infty$. 
\end{cor}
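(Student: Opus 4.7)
My plan is to reduce this to the case of formulas with free variables, which is already handled by Theorem~\ref{thm:FORRisEffectivelyComputableOnRAStructures}, and then invoke the decidability of boundedness for B-automata from Theorem~\ref{thm:BoundednessOfCounterautomataIsDecidable}.

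Concretely, from the sentence $\varphi$ I would construct the formula $\psi(x) := \varphi \wedge (x = x)$, which has exactly one free variable. By the semantics, $\semantics{x = x}^\structureS([x \to a]) = 0$ for every $a \in \Sigma^*$, so $\semantics{\psi(a)}^\structureS = \semantics{\varphi}^\structureS$ for every $a$. In particular, the function $a \mapsto \semantics{\psi(a)}^\structureS$ is a constant function whose value is precisely the constant $\semantics{\varphi}^\structureS$.

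Applying Theorem~\ref{thm:FORRisEffectivelyComputableOnRAStructures} to $\psi$ yields, effectively, a synchronous resource transducer $\automatonT_\psi$ over $\alphVector{\Sigma}{1}$ (so no padding issues arise) together with a correction function $\alpha$ such that $\semantics{\automatonT_\psi}_{\padprodL}(a) \costEquiv \semantics{\psi(a)}^\structureS$ for all $a \in \Sigma^*$. Since $\alpha$ satisfies $\alpha(x) = \infty \Equiv x = \infty$ and the relation $\costEquiv$ preserves boundedness on any set (cf.\ Lemma~\ref{lem:SecondCostFunctionDefinition}), we obtain the equivalence
\[
  \semantics{\varphi}^\structureS < \infty \quad\Equiv\quad
  \sup_{a \in \Sigma^*} \semantics{\automatonT_\psi}_{\padprodL}(a) < \infty,
\]
because a constant function is bounded iff its value is finite, and a function $\costEquiv$-equivalent to a bounded one is bounded.

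Finally, the right-hand side is exactly the boundedness question for the B-automaton $\automatonT_\psi$ (viewed as a plain cost automaton over $\Sigma$), which is decidable by Theorem~\ref{thm:BoundednessOfCounterautomataIsDecidable}. Hence the whole procedure is effective: construct $\psi$, run the algorithm from Theorem~\ref{thm:FORRisEffectivelyComputableOnRAStructures} to obtain $\automatonT_\psi$, and decide its boundedness. There is no real obstacle here beyond the bookkeeping of the correction function; the work is entirely carried out by the two cited theorems.
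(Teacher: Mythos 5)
Your proof is correct, but it takes a slightly different route from the paper's. The paper strips the \emph{outermost quantifier} from the sentence, applies Theorem~\ref{thm:FORRisEffectivelyComputableOnRAStructures} to the resulting formula with one free variable, and then distinguishes two cases: for an outermost $\exists$ it checks non-emptiness of the resulting transducer, and for an outermost $\forall$ it checks boundedness via Theorem~\ref{thm:BoundednessOfCounterautomataIsDecidable}. You instead attach a dummy free variable via $\varphi \wedge (x = x)$, so that the transducer produced by Theorem~\ref{thm:FORRisEffectivelyComputableOnRAStructures} computes a function $\costEquiv[]$-equivalent to the constant $\semantics{\varphi}^\structureS$, and you reduce everything uniformly to a single boundedness test. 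Your variant buys two things: it avoids the case distinction, and it also covers sentences whose outermost connective is a Boolean connective rather than a quantifier (e.g.\ a conjunction of two sentences), a case the paper's sketch does not explicitly address; the paper's variant buys a cheaper test (plain emptiness) in the existential case. One small point to tighten: for $n=1$ the transducer's alphabet still contains the padding symbol $\$$, so the supremum in Theorem~\ref{thm:BoundednessOfCounterautomataIsDecidable} ranges over more words than the correctly padded ones. You should first force the value $0$ on the (regular) set of ill-formed words, using the $\min$/$\max$ construction described after Theorem~\ref{thm:ClosureOfRegularCostFunctions} --- the same normalization the paper performs inside the proof of Lemma~\ref{lem:InfAndSupLikeFORRAreComutable} --- before invoking the boundedness test.
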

\begin{proof}
First, we remove the outermost quantifier from the sentence. Now we can
apply Theorem~\ref{thm:FORRisEffectivelyComputableOnRAStructures} and get 
a synchronized resource transducer $\automatonT$ operating over $\Sigma$. 
If the outermost quantifier is existential, the formula has a finite value 
if and only if there is some accepting run on $\automatonT$. 
If the outermost quantifier is universal, the automaton is accepting if and 
only if the automaton is bounded. This can be checked by 
Theorem~\ref{thm:BoundednessOfCounterautomataIsDecidable}.
\end{proof}

The decision procedure described above paves the way for computing precise
values of \FORR{} formulas. First, one can check whether the value of a
formula $\varphi$ (for some possible valuation) is infinite by the presented
decision procedure. If it is infinite, we are done because $\costEquiv[]$
preserves boundedness and thus the precise value of this formula is also
$\infty$. Otherwise, it is known that the formula is bounded and there is some
$k \in \nat$ with $\semantics{\varphi}^\structureS = k$. We remarked earlier
that this $k$ is exactly the smallest $k$ such that $\structureS_{\le k} \models
\varphi$ when $\varphi$ is interpreted as normal \FO{}-formula. Since the
structures $\structureS_{\le k}$ are (standard) automatic structures for all
fixed $k$, this can be checked algorithmically. Thus, one can check the above
conditions for all possible $k$ and it will terminate because we already know
that the value is finite.

We remark that the previous theorem only covered resource automatic
structures whose universe contains all words in $\Sigma^*$. However, this is no real
restriction. By definition, the universe $S \subseteq \Sigma^*$ is a regular
language. Consequently, we can extend the universe to full $\Sigma^*$ and
set the value of all relations for elements outside of $S$ to $\infty$. 
Moreover, we introduce two new relations $S$ and $\overline{S}$ valuated
with the characteristic functions of $S$ and its complement language. 
Similar to standard first-order logic, it is then possible to adapt $\FORR$
formulas by restricting the quantifications to elements of $S$ by 
transforming $\exists x \varphi$ into $\exists x (Sx \wedge \varphi)$ 
and $\forall x \varphi$ into $\forall x (\overline{S}x \vee \varphi)$. 
With the standard arguments, one can show that this provides a reduction
of arbitrary resource automatic structures to those with universe $\Sigma^*$.

\section{Computing Reachability with Annotations}
\label{sec:ReachabilityWithAnnotations}

The bounded reachability problem can be seen as a reachability problem with
annotations at the transitions. The value associated with a path can be computed from the
sequence of annotations -- in our scenario the counter operations. When
computing the transitive closure, we not only have to calculate basic 
reachability but also the (combined) annotations of the paths among the nodes.

In the following, we introduce a general algorithm to compute reachability with
annotations on prefix replacement systems. This procedure is based on the
widely known saturation principle. It creates an output which can directly be
transformed into a synchronous transducer calculating the annotations for paths
between a pair of system configurations. First, we formally describe the
requirements for an annotation domain. Subsequently, we explain the actual
saturation procedure and show how the reachability-cost problem for \RPRS{} can
be represented in the developed framework.

\subsection{Annotation Domains}

An annotation domain which is compatible with saturation has to satisfy certain
requirements. First, the concatenation operation on the annotations has to be
associative because the order in which paths in the pushdown system are
combined during saturation must not be important. Second, the termination of
saturation normally results from the fact that there are only finitely many
possible transitions in a finite automaton which can be added. However, this
argument does not suffice anymore if additional annotations from a potentially
infinitely large domain are considered. Since there may exist infinitely many
paths which are all annotated differently, this problem is inherent to the
considered question. Motivated by the search for \emph{good} or \emph{cheap}
paths in the context of the bounded reachability problem, we equip the
annotations with a partial order. This order has to be well-founded and must
not contain infinite anti-chains. Such an order is often called well-partial
order or partial well-order in the literature. Finally, we need to be able to
reverse paths because the saturation procedure we adapt operates in both directions
(predecessors and successors) simultaneously. Consequently, we require a
compatible reverse operation on the annotations. We formalize these requirements in the
form of \emph{well-partially ordered monoids with involution}.

\begin{defi}[Well-Partially Ordered Monoid with Involution]
	A well-partially ordered monoid with involution $\annotationMonoidM$
    is a 5-tuple  $(M,\circ,\neutralM,\le,\rev)$ where:
	\begin{enumerate}[label=(\roman*)]
		\item $(M,\circ,\neutralM)$ is a monoid
		\item $\le$ is a well-partial order. The order is compatible with the
           monoid operation, i.e., $a \le a' \wedge b \le b' \Implies a \circ b
            \le a' \circ b'$.
		\item $\rev: M \to M$ is the so called \emph{reverse} function. It is
an involution, i.e., $\rev(\rev(a)) = a$. Moreover, it satisfies the functional
equation $\rev(a \circ b) = \rev(b) \circ \rev(a)$.
		\item Order and the reverse function are compatible, i.e., $a \le a'
\Implies \rev(a) \le \rev(a')$ 
	\end{enumerate}
\end{defi}

\noindent We present a concrete example for such a structure later. In
Section~\ref{subsec:CostReachability}, we show how we can use this
formalism to obtain a quite natural and concise representation of sequences of B-counter operations. 

Before constructing specific instances of well-partially ordered monoids with involution,
we exhibit some general properties. Direct products are a useful tool
in many automaton constructions. Hence, we verify that the direct product of two
well-partially ordered monoids with involution is well-defined. This is
mostly formal checking that the properties are satisfied. 

\begin{lem}\label{lem:ProductOfRIOrderedMisRIOrderedM}
	Let $\annotationMonoidM_1 = (M_1,\circ_1,\neutralM[1],\le_1,\rev_1)$ and
$\annotationMonoidM_2 = (M_2,\circ_2,\neutralM[2],\le_2,\rev_2)$ be two
well-partially ordered monoids with involution. We define the
\emph{direct product} $\annotationMonoidM$ of these monoids in the usual way 
\[ \annotationMonoidM = (M_1 \times
M_2,\circ,(\neutralM[1],\neutralM[2]),\le,\rev)
\]
  with component-wise application of the operation $\circ$, the
involution $\rev$ and the compo\-nent-wise order $\le$, i.e., $ (m_1,m_2) \le
(m_1',m_2') \defEquiv m_1 \le_1 m_1' \wedge m_2 \le m_2'$.

  The monoid $\annotationMonoidM$ is a well-partially-ordered monoid with
involution.
\end{lem}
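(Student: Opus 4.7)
The plan is to verify the four axioms of a well-partially ordered monoid with involution directly on $\annotationMonoidM$, exploiting the fact that each axiom decomposes componentwise. For (i), associativity of $\circ$ and the two-sided identity property of $(\neutralM[1],\neutralM[2])$ on $M_1 \times M_2$ are inherited from $\annotationMonoidM_1$ and $\annotationMonoidM_2$. For (iii), the componentwise involution $\rev(a_1,a_2) := (\rev_1(a_1),\rev_2(a_2))$ satisfies $\rev \circ \rev = \mathrm{id}$ and the equation $\rev(x \circ y) = \rev(y) \circ \rev(x)$ by applying the corresponding identity in each factor. For (iv), if $(a_1,a_2) \le (a_1',a_2')$ then $a_i \le_i a_i'$ and hence $\rev_i(a_i) \le_i \rev_i(a_i')$, so $\rev(a_1,a_2) \le \rev(a_1',a_2')$. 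Similarly, compatibility of $\le$ with $\circ$ in (ii) reduces to the compatibility of $\le_i$ with $\circ_i$ applied in each coordinate.

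The main (and only nontrivial) obstacle is showing that the componentwise order is itself a well-partial order; this is essentially Dickson's lemma. I plan to use the standard characterization: a partial order is a well-partial order iff every infinite sequence admits an infinite weakly increasing subsequence (equivalently, there are neither infinite strictly descending chains nor infinite antichains). Given an infinite sequence $((a_i,b_i))_{i \in \nat}$ in $M_1 \times M_2$, I first extract, using that $\le_1$ is a well-partial order on $M_1$, an infinite index set $I_1 \subseteq \nat$ such that $(a_i)_{i \in I_1}$ is weakly increasing in $\le_1$. Applying the same property to $\le_2$ on the subsequence $(b_i)_{i \in I_1}$ yields an infinite $I_2 \subseteq I_1$ such that $(b_i)_{i \in I_2}$ is weakly increasing in $\le_2$. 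On $I_2$ both components are weakly increasing, so $((a_i,b_i))_{i \in I_2}$ is weakly increasing in $\le$. This precludes infinite antichains and infinite strictly descending chains, establishing that $\le$ is a well-partial order.

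Combining these four verifications yields that $\annotationMonoidM$ satisfies all the defining properties, which completes the proof.
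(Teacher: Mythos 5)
Your proposal is correct and follows essentially the same route as the paper: the algebraic axioms (monoid structure, involution, and the compatibility conditions) are checked componentwise, and the only substantive point is that the componentwise order on a product of two well-partial orders is again a well-partial order. The paper simply cites Higman for this last fact, whereas you prove it via the standard characterization (every infinite sequence has an infinite weakly increasing subsequence) and a double extraction of subsequences, which is the usual proof of that cited result.
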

\begin{proof}
  It is clear that $\annotationMonoidM$ is a monoid. The compatibility of the
reverse function and the order are easy to check.
  In \cite{Higman52}, G. Higman showed that the component-wise order on the
product of two well-partial orders is also a well-partial order.
\end{proof}

We now define a general form of prefix replacement systems whose replacement
rules are annotated with elements from a well-partially ordered monoid with
involution. With the above remark that we can encode sequences of 
B-counter operations in a well-partially ordered monoid, the following 
definition can be seen as a generalization of counter automata.

\begin{defi}[Monoid Annotated Prefix Replacement System]\label{def:ResourcePRS}\ 

Let $\annotationMonoidM = (M,\circ,\neutralM,\le,\rev)$ be a well-partially ordered
  monoid with involution.
  A monoid annotated prefix replacement system is a triple $\prsR =
(\Sigma,\Delta,\annotationMonoidM)$ consisting of a finite alphabet $\Sigma$, a
finite transition relation $\Delta$ and a well-partially ordered monoid with
involution $\annotationMonoidM$.
  The prefix replacement rules in the transition relation $\Delta \subseteq
\Sigma^+ \times \Sigma^* \times M$ are annotated with elements from the monoid.
We also write $\pdsrule{u}{m}{v}$ for a prefix replacement rule $(u,v,m) \in
\Delta$.

 Let $w_1,w_2$ be two configurations. We say $w_2$ is an $m$-successor of $w_1$
and write $w_1 \configstepCost{m} w_2$ if $\exists (u,v,m) \in \Delta \;\exists
x \in \Sigma^*: w_1 = ux \wedge w_2 = vx$. Let $w_1, \ldots, w_n$ be a sequence
of configurations such that $w_i \configstepCost{m_i} w_{i+1}$ for all
$i = 1, \ldots, n-1$. We write $w_1 \configstepsCost{m} w_n$ with $m = m_1 \circ
\ldots \circ m_n$.
\end{defi}

\subsection{Annotation Aware Saturation}
\label{subsec:AnnotationAwareSat}

It is already known for quite some time that saturation methods can be used to
calculate the point-to-point reachability relation of pushdown- or prefix
replacement systems. In addition, it is also known that it is possible to
construct a synchronous transducer which computes this reachability relation.
This shows that the configuration space of a prefix replacement system with the
reachability relation is an automatic structure. One approach for such a
construction can be found in \cite{gtt-saturation}. Although this 
algorithm is for ground term replacement systems, it is easy to see that a
prefix replacement system is just a special case. We adapt the algorithm for
this special case and extend it to fit our needs.

The algorithm performs a two-sided saturation. Our adapted variant operates
over two $\eps$-NFAs that share some of their states. These two automata read
the changed prefixes of the two configurations. The common suffix is
ignored\footnote{This definition comes from the algorithm's origin as ground
tree replacement algorithm. There, the two subtrees which are framed by the
common (tree-)context are read by the automata.}. A pair $(uw,vw)$ of
configurations with common suffix $w$ is accepted by the pair of automata if
there is a run of the first automaton on $u$ and a run of the second automaton
on $v$ such that both end in the same (shared) state. The saturation algorithm
starts with two NFAs that recognize one rewrite step of a given prefix
replacement system. Then, subsequently, new $\eps$-transitions which each
simulate one or more prefix replacement steps are added in both automata. The
major reason for this two-sided construction is the possibility of very
different word lengths on both sides of a prefix replacement rule. In contrary
to usual pushdown systems, the left-hand side of a rule in prefix replacement
systems may be much longer than the right-hand side. 

Our adaptation extends the original algorithm to keep track of the annotations. We
implement this by annotating the transitions of the $\eps$-NFAs with elements
from the annotation domain of the prefix replacement system. Formally, we
define those NFAs by:
\begin{defi}[Monoid annotated $\eps$-NFA]
A monoid annotated $\eps$-NFA is a tuple $\automatonA =
(Q,\Sigma,\In,\Fin,\Delta,\annotationMonoidM)$. The components
$Q,\Sigma,\In,\Fin$ are as in usual NFAs. $\annotationMonoidM$ is a
well-partially ordered monoid with reverse-function. The finite transition
relation $\Delta$ is annotated with elements from $\annotationMonoidM$. It 
has the form $\Delta \subseteq Q \times (\Sigma \cup \{\eps\}) \times Q \times
M$. 
  
Each run of the automaton on a word $w \in \Sigma^*$ naturally defines a
monoid element. The value $m$ of a run is defined by the concatenation of the
values of the used transitions along the run. Consequently, the (finite) set of all possible
accepting runs of the automaton on the word $w$ induces a set of monoid elements
$S_w \subseteq M$. Additionally, we write $\automatonA: q_0 \apathS[m]{w} q$ to
indicate that there exists a run from $q_0$ to $q$ on the word $w$ with
accumulated annotation $m$.
\end{defi}

\begin{algorithm}[t]
 \SetKwInOut{Input}{input}\SetKwInOut{Output}{output}
 \SetKw{Continue}{continue}
 \SetKwFor{Find}{Find}{}{}
 \SetKwIF{If}{ElseIf}{Else}{if}{then}{else if}{else}{endif}
 \Input{monoid annotated prefix replacement system $\prsR =
(\Sigma,\DeltaR,\annotationMonoidM)$}
 \Output{monoid annotated automata $\automatonA^*_1 =
(Q_1,\Sigma,\{q_{1,\eps}\},\emptyset,\Delta^*_1)$ and $\automatonA^*_2 =
(Q_2,\Sigma,\{q_{2,\eps}\},\emptyset,\Delta^*_2)$}

  $\ell := \max\left\{ |u|,|v| \mid (u,v,m) \in \DeltaR \right\}$ 
  
   $\automatonA^0_1 := (Q_1,\Sigma,\{q_{1,\eps}\},\emptyset,\Delta^0_1)$ and
$\automatonA^0_2 := (Q_2,\Sigma,\{q_{2,\eps}\},\emptyset,\Delta^0_2)$ with
	$Q_{\mathrm{shared}} := \{ q_{(u,v,m)} \mid (u,v,m) \in \DeltaR \}$,
	$Q_1 := \{ q_{1,v} \mid v \in \Sigma^*, |v| \le \ell\} \cup
Q_{\mathrm{shared}}$
	$Q_2 := \{ q_{2,v} \mid v \in \Sigma^*, |v| \le \ell\} \cup
Q_{\mathrm{shared}}$ 
	$\Delta_{C,i} := \{ (q_{i,u},a,q_{i,v},\neutralM) \mid u,v \in \Sigma^*, a
\in \Sigma : v= ua \wedge |v| \le \ell\}$
	$\Delta^0_1 := \Delta_{C,1} \cup \{ (q_{1,u},\eps,q_{(u,v,m)},\neutralM)
\mid (u,v,m) \in \DeltaR \}$
	$\Delta^0_2 := \Delta_{C,2} \cup \{ (q_{2,v},\eps,q_{(u,v,m)},\neutralM)
\mid (u,v,m) \in \DeltaR \}$
  
  $i := 0$
  
  \While{automata can be updated}{
	$i := i +1$
	
	\Find{$w \in \Sigma^{\le \ell}$ and states $q, q_{(u,v,\bar m)}$ s.t.
$\automatonA^i_2: q_{2,\eps} \apathS[\rr]{w} q_{(u,v,\bar m)}$,
$\automatonA^i_1: q_{1,\eps} \apathS[\rl]{w} q$}{
	  $m_u := \bar m \rev(\rr)\rl$
	  
	  \uIf{$\exists (q_{(u,v,\bar m)},\eps,q,m_o) \in \Delta^i_1$ with $m_o >
m_u$}{
		  $\Delta^{i+1}_1 := \Delta^{i}_1 \setminus \{ (q_{(u,v,\bar
m)},\eps,q,\tilde m) \in \Delta^i_1 \mid \tilde m > m_u \}  \cup \{
(q_{(u,v,\bar m)},\eps,q,m_u)\}$
		}
	  \ElseIf{$\neg(\exists (q_{(u,v,\bar m)},\eps,q,m_o) \in \Delta^i_1 \text{
with } m_o \le m_u)$}{
		  $\Delta^{i+1}_1 := \Delta^i_1 \cup \{ (q_{(u,v,\bar
m)},\eps,q,m_u) \}$
		}
	  $\automatonA^{i+1}_2 := \automatonA^{i}_2$, $\automatonA^{i+1}_1 :=
(Q_1,\Sigma,\{q_{1,\eps}\},\emptyset,\Delta^{i+1}_1)$
	  
	  \Continue{}%
	}%
	\Find{$w \in \Sigma^{\le \ell}$ and states $q,q_{(u,v,\bar m)}$ s.t.
$\automatonA^i_2: q_{2,\eps} \apathS[\rr]{w} q$, $\automatonA^i_1: q_{1,\eps}
\apathS[\rl]{w} q_{(u,v,\bar m)}$}{
	  $m_u := \rev(\bar m) \rev(\rl)\rr$
	  
	  \uIf{$\exists (q_{(u,v,\bar m)},\eps,q,m_o) \in \Delta^i_2$ with $m_o >
m_u$}{
		  $\Delta^{i+1}_2 := \Delta^{i}_2 \setminus \{ (q_{(u,v,\bar
m)},\eps,q,\tilde m) \in \Delta^i_2 \mid \tilde m > m_u \}  \cup \{
(q_{(u,v,\bar m)},\eps,q,m_u)\}$
		}
	  \ElseIf{$\neg(\exists (q_{(u,v,\bar m)},q,m_o) \in \Delta^i_2 \text{ with
} m_o \le m_u)$}{
		  $\Delta^{i+1}_2 := \Delta^i_2 \cup \{ (q_{(u,v,\bar
m)},\eps,q,m_u) \}$
		}
	  $\automatonA^{i+1}_1 := \automatonA^{i}_1$, $\automatonA^{i+1}_2 :=
(Q_2,\Sigma,\{q_{2,\eps}\},\emptyset,\Delta^{i+1}_2)$
	  
	  \Continue{}%
	}%
  }%
  \KwResult{$\automatonA^*_1 := \automatonA^i_1$, $\automatonA^*_2 :=
\automatonA^i_2$}
  \caption{Two-sided saturation procedure\label{algo:GTTSaturation}}
\end{algorithm}

\begin{figure}[t]
	\begin{center}
		\begin{tikzpicture}
  \begin{scope}[shift={(0,0)}]
	\node[anchor=base] (reprule) at (0,1.5) {$\pdsrule{u}{\bar m}{v} \in \DeltaR$};
	\node (reprelation) at (0,0) {$u \configstep[\bar m] v$};
	\draw[-implies,double equal sign distance,double,thick] (reprule) -- (reprelation);
  \end{scope}
  
  \begin{scope}[shift={(4,0)}]
	\node[dotstyle] (WBegin) at (1,1.5) {};
	\node[left=0.7cm] at (WBegin) {$\mathfrak A_2^{i}:$};
	\node[left] at (WBegin) {$q_{2,\eps}$};
	\node[dotstyle] (WEnd) at (3.5,1.5) {};
	\node[right] at (WEnd) {$q_{(u,v,\bar m)}$};
	
	\draw[automatapath] (WBegin) -- (WEnd) 
		node[midway,above=1mm] {$w$} 
		node[midway,below=1mm] (w2Annot) {$\rr$}
		coordinate[midway] (pathmid);
	
	\path let \p1 = (pathmid) in node (autrelation) at (\x1,0) {$v \configstepsCost{\rev(\rr)} w$};
	
	\draw[-implies,double equal sign distance,double,thick] (w2Annot) -- (autrelation);
  \end{scope}
  
  \begin{scope}[shift={(1 ,-4)}]
	\node[dotstyle] (AnonBegin) at (0,2) {};
	\node[left=0.7cm] at (AnonBegin) {$\mathfrak A_1^{i+1}:$};
	\node[left] at (AnonBegin) {$q_{1,\eps}$};
	\node[dotstyle] (AnonEnd) at (3,2) {};
	\node[above] at (AnonEnd) {$q_{(u,v,\bar m)}$};
	
	\draw[automatapath] (AnonBegin) -- (AnonEnd) node[midway,above=1mm] {$u$};
	
	\node[dotstyle] (WA1Begin) at (0,1) {};
	\node[left=0.7cm] at (WA1Begin) {$\mathfrak A_1^{i}:$};
	\node[left] at (WA1Begin) {$q_{1,\eps}$};
	\node[dotstyle] (WA1End) at (3.5,1) {};
	\node (WA1Ext) at (4.5,1) {};
	\node[below] at (WA1End) {$q$};
	
	\draw[automatapath,>=] (WA1Begin) -- (WA1End) node[midway,above=1mm] {$w$} node[midway,below=1mm] (w1Annot) {$\rl$};
	\draw[automatapath,dotted] (WA1End) -- (WA1Ext);
	
	\draw[->,red] (AnonEnd) -- (WA1End) node[midway,right=2mm,anchor=base] {$\eps,$} 
		node[midway,right=6mm,anchor=base] (epsAnnotM) {$\bar m$} 
		node[midway,right=15mm,anchor=base] (epsAnnotMR) {$\rev(\rr)$} 
		node[midway,right=26mm,anchor=base] (epsAnnotML) {$\rl$};
  \end{scope}

	\draw[->] (reprelation.south) .. controls +(-90:2cm) and +(90:2.3cm) .. (epsAnnotM);
	\draw[->] (autrelation.south) to[out=-90,in=90] (epsAnnotMR);
	\draw[->] (w1Annot.east) to[out=-20,in=-90] (epsAnnotML);

 \end{tikzpicture}
	\end{center}
	\caption{Illustration of the idea of Algorithm~\ref{algo:GTTSaturation}}
	\label{fig:SaturationIdea}
\end{figure}
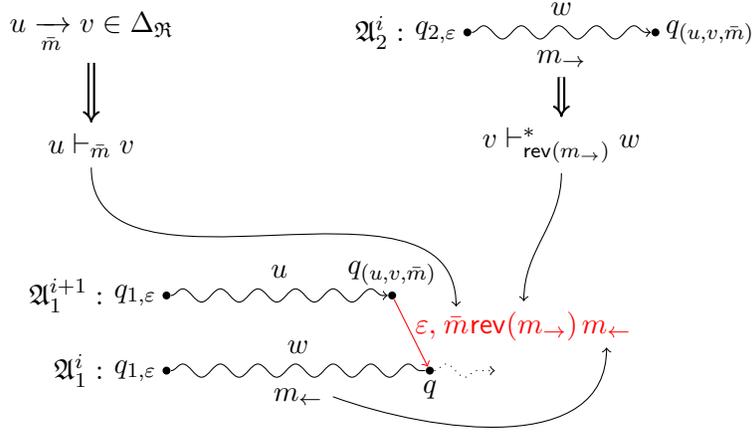

We describe the intuitive idea of Algorithm~\ref{algo:GTTSaturation} by first
explaining the goal of the algorithm followed by the explanation of one
saturation step in the first of the two automata. We remarked already that the
two automata start with recognizing the successor relation of prefix
replacement. Formally, this means that for every prefix replacement rule
$\pdsrule{u}{\bar m}{v} \in \DeltaR$, there is a (shared) final state
$q_{(u,v,\bar m)}$ in $\automatonA^0_1$ and $\automatonA^0_2$. Moreover, there
are runs $\automatonA^0_1: q_{1,\eps} \apathS[\neutralM]{u} q_{(u,v,\bar m)}$
and $\automatonA^0_2: q_{2,\eps} \apathS[\neutralM]{v} q_{(u,v,\bar m)}$. During
the saturation procedure, we want to add $\eps$-transitions which enable the
automata to simulate one or more prefix replacement steps on their own, such
that there is a run $\automatonA^*_1: q_{1,\eps} \apathS[\rl]{u'} q_{(u,v,\bar
m)}$ iff $u' \configstepsCost{\rl} u$ and symmetrically (but reversed) also that there is a run
$\automatonA^*_2: q_{2,\eps} \apathS[\rr]{v'} q_{(u,v,\bar m)}$ iff $v
\configstepsCost{\rev(\rr)} v'$. Note that this especially means that both runs
imply $u' \configstepsCost{\rl \bar m \rev(\rr)} v'$. 

In order to enable $\automatonA^i_1$ to simulate one more application of the
replacement rule $\pdsrule{u}{\bar m}{v}$, the algorithm uses the following
strategy, which is illustrated in Figure~\ref{fig:SaturationIdea}. First, it
finds a word $w \in \Sigma^{\le \ell}$ and a fitting run $\automatonA^i_2: q_{2,\eps}
\apathS[\rr]{w} q_{(u,v,\bar m)}$ (notice that $v = w$ is always possible). By
our intuition this means $v \configstepsCost{\rev(\rr)} w$. Subsequently, the
algorithm searches for a run $\automatonA^i_1: q_{1,\eps} \apathS[\rl]{w} q$ and
adds a transition from $q_{(u,v,\bar m)}$ to $q$ in $\automatonA^{i+1}_1$. After reading $u$, the
automaton can use the new $\eps$-transition and is now in a state as if it would
have read $w$ in the first place. This captures the sequence of replacement
operations $u \configstep[\bar m] v \configstepsCost{\rev(\rr)} w$, which is 
possible because $(u,v,\bar m) \in \DeltaR$ for $q_{(u,v,\bar m)}$ by construction.
Thereby, the automaton $\automatonA^{i+1}_1$ can now simulate one execution of this replacement sequence on its own. 

In addition to the correct operation of the replacement rules, the algorithm
also keeps track of the annotation. We saw that the $\eps$-transition simulates
replacement rules with accumulated annotation $\bar m\rev(\rr)$. Additionally, it
skips the part of reading $w$ on $\automatonA^i_1$, which also contains some
annotation $\rl$. Since $\automatonA^i_1$ would have read this annotation after
the replacements were made,  we also have to include $\rl$ into the
$\eps$-transition. For this reason, the complete annotation of the
$\eps$-transition to add is $\bar m\rev(\rr)\rl$. However, we only add this
transition if there is not already a transition with the same source and
destination and a better ($\le$-smaller) annotation. This ensures that we only
have finitely many transitions in the automaton although there might be
replacement paths with arbitrarily many different annotations for a pair of
configurations. Additionally, we remove all similar $\eps$-transitions with a
larger annotation for the same reason. 

The saturation in $\automatonA^i_2$ is symmetric with the roles of
$\automatonA^i_1$ and $\automatonA^i_2$ exchanged. However, the computation of
the annotation is a mirror-image of the computation in $\automatonA^i_2$
because $\automatonA^i_2$ simulates the replacement steps backwards. 

The result of the algorithm is a pair of automata which recognizes the
reachability relation and also contains all minimal annotations for paths from
the first to the second configuration. In the following, we provide formal
arguments for termination and correctness of the algorithm. This includes a
detailed analysis of the saturation steps and the book-keeping of the
annotations.

\begin{rem}
	Algorithm~\ref{algo:GTTSaturation} terminates for every input.
\end{rem}
\begin{proof}
	There are only finitely many pairs of states in $Q_1$ and $Q_2$.
Moreover, there are only two possibilities where a pair of states is considered
several times in the algorithm. First, the transitions can be updated because a pair of runs
leading to a smaller annotation was found. However, this can occur only finitely
many times since the order on the annotation domain is well-founded. Second,
several transitions can be added because the annotations are incomparable. This
can occur only finitely often, too. Otherwise this yields a (size)
increasing sequence of anti-chains. The union of all sets in the sequence would
be an infinite anti-chain in contradiction to the definition of the annotation
domain. Hence, the algorithm terminates after a finite number of steps.
\end{proof}

\begin{lem}\label{lem:CorrectnessOfGTTSaturation}
  Let $\prsR$ be a monoid annotated prefix replacement system, $\automatonA^*_1$
and $\automatonA^*_2$ the result of Algorithm~\ref{algo:GTTSaturation}. For two
configurations $w_1,w_2 \in \Sigma^*$ with $w_1 \ne w_2$ the following holds:
  \begin{enumerate}[label=(\roman*)]
   \item If $w_1 \configstepsCost{m} w_2$, then there are runs $\automatonA^*_1:
q_{1,\eps} \apathS[\rl]{w_1'} q_{(u,v,\bar m)}$ and $\automatonA^*_2: q_{2,\eps}
\apathS[\rr]{w_2'} q_{(u,v,\bar m)}$ for some $(u,v,\bar m) \in \DeltaR$ and $x
\in \Sigma^*$ such that $w_1 = w_1'x$, $w_2 = w_2'x$  and $\rl \bar m \rev(\rr)
\le m$.
	\item If there is a run $\automatonA^*_1: q_{1,\eps} 
	\apathS[\rl]{w_1} q_{(u,v,\bar m)}$ for some $(u,v,\bar m) \in \DeltaR$, 
	then $w_1z \configstepsCost{\rl} uz$ for all $z \in \Sigma^*$.
	\item If there is a run $\automatonA^*_2: q_{2,\eps} 
	\apathS[\rr]{w_2} q_{(u,v,\bar m)}$ for some $(u,v,\bar m) \in \DeltaR$, 
	then $vz \configstepsCost{\rev(\rr)} w_2z$ for all $z \in \Sigma^*$.
  \end{enumerate}
\end{lem}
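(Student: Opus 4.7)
Parts (ii) and (iii) are soundness statements about the saturated automata, while part (i) is a completeness statement. I would prove (ii) and (iii) simultaneously by induction on the saturation stage $i$, and derive (i) by induction on the length of the rewrite sequence. In both cases the key is a careful analysis of the saturation update rule together with the annotation algebra.

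For soundness, my invariant at stage $i$ is: every run $\automatonA^i_1 : q_{1,\eps} \apathS[a]{w_1} q_{(u,v,\bar m)}$ yields $w_1 z \configstepsCost{a} u z$ for all $z$, and symmetrically for $\automatonA^i_2$. The base case $i=0$ is immediate, since reaching a shared state in $\automatonA^0_1$ forces the run to read exactly $u$ via letter transitions (annotated $\neutralM$) and then take the initial $\eps$-transition, trivially matching zero rewrite steps. In the inductive step, a new $\eps$-transition $(q_{(u',v',\bar m')},\eps,q,\bar m' \rev(b) c)$ is justified by an $\automatonA^i_2$-run on some word $w'$ with annotation $b$ ending at $q_{(u',v',\bar m')}$ and an $\automatonA^i_1$-run on $w'$ with annotation $c$ ending at $q$. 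To interpret a run of $\automatonA^{i+1}_1$ that uses this new transition, I would split the run at the new transition, apply the IH to both pieces, and chain the rewrites $u'z \configstepCost{\bar m'} v'z \configstepsCost{\rev(b)} w'z$ (the last coming from the $\automatonA^i_2$-IH) to show that the combined annotation matches what the $\automatonA^{i+1}_1$-run computes. A subsidiary induction on the number of ``new'' $\eps$-transitions used ensures that runs employing the update multiple times are handled.

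For completeness, induct on the length $k$ of $w_1 \configstepsCost{m} w_2$. The base case $k=1$ is captured directly by the initial runs in $\automatonA^0_1$ and $\automatonA^0_2$ with $\rl = \rr = \neutralM$. For the inductive step, split as $w_1 \configstepCost{m_0} w_1' \configstepsCost{m_1} w_2$ using a rule $(u_0, v_0, m_0)$ with $w_1 = u_0 y$, $w_1' = v_0 y$, and apply the IH to the suffix to obtain runs reaching some shared $q_{(u,v,\bar m)}$ with $w_1' = w_1'' x'$, $w_2 = w_2'' x'$, annotations $a$ and $b$, and $a \bar m \rev(b) \le m_1$. Case-distinguish on whether $|w_1''| \ge |v_0|$. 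In the outer case, write $w_1'' = v_0 y'$, decompose the IH-run as $q_{1,\eps} \apathS[a']{v_0} q' \apathS[a'']{y'} q_{(u,v,\bar m)}$ with $a' \circ a'' = a$, and observe that the initial $\automatonA^0_2$-run on $v_0$ reaches $q_{(u_0,v_0,m_0)}$ with annotation $\neutralM$; the saturation update then guarantees an $\eps$-transition from $q_{(u_0,v_0,m_0)}$ to $q'$ with annotation $\le m_0 a'$, yielding a run on $w_1 = u_0 y' x'$ with common suffix $x'$ and total annotation $\le m_0 a$, so $m_0 a \bar m \rev(b) \le m_0 m_1 = m$ by compatibility of $\le$ with the monoid operation. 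The inner case $|w_1''| < |v_0|$ is dual and requires re-deriving a smaller common suffix, extending the $\automatonA^*_2$-run via a symmetric saturation transition on the $\automatonA_2$ side.

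The main obstacle is the case analysis in the completeness step, particularly the inner case where the common suffix from the IH cuts through the left-hand side of the first rewrite rule: there, neither IH-run extends directly and the pair of runs must be rebuilt at a different splitting point. A secondary subtlety is that the algorithm's update rule only guarantees annotations $\le$ the expected composite (rather than equal), but the inequality $\rl \bar m \rev(\rr) \le m$ in (i) matches this precisely via compatibility of $\le$ with both the monoid operation and the reverse function. Finally, one has to reassure oneself that the chain of $\le$-improvements the algorithm performs before termination preserves the existence of some transition realising each required composite, not merely the currently recorded minimal one, which is guaranteed because the update rule replaces transitions by $\le$-smaller versions rather than simply deleting them.
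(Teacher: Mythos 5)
Your proposal is correct and matches the paper's proof in all essentials: parts (ii) and (iii) are shown by a joint outer induction on the saturation stage with an inner induction on the number of uses of the newly added $\eps$-transition, and part (i) by induction on the length of the rewrite sequence with exactly your two-case analysis of how the inherited common suffix overlaps the newly incorporated rule, using that at termination every dischargeable pair of runs is witnessed by an $\eps$-transition with a $\cple$-smaller annotation. The only (immaterial) difference is that you peel off the first rewrite step in part (i) where the paper peels off the last, which merely swaps which of the two symmetric Find-blocks of the algorithm is invoked in each of the two cases.
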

\begin{proof} \ 
	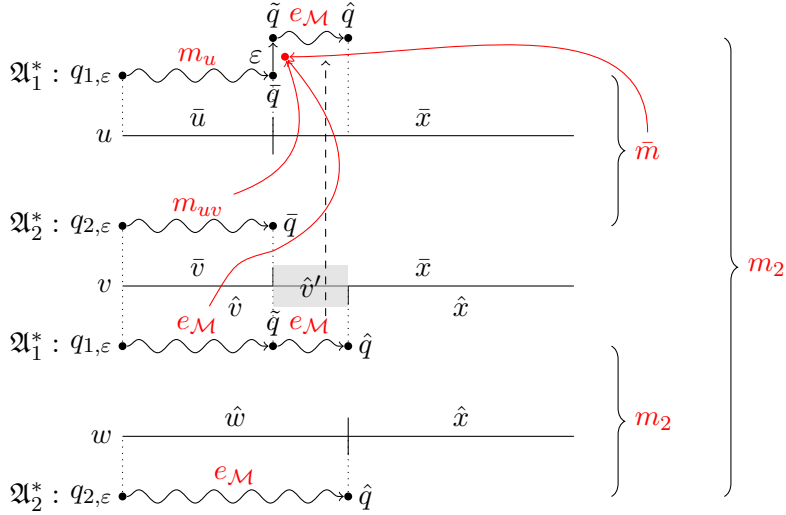
\begin{figure}[t]
		\begin{center}
			\begin{tikzpicture}
  \begin{scope}[shift={(0,4)}]
	\draw (0,0) -- (6,0);
	\node[left] (u) at (0,0) {$u$};
	\draw (2,-0.25) -- (2,0.25);
	\node[above] (uBar) at (1,0) {$\bar u$};
	\node[above] (xBarU) at (4,0) {$\bar x$};

	\node[dotstyle] (A1StartU) at (0,0.8) {};
	\draw[dotted] (A1StartU) -- (0,0);
	\node[left] at (A1StartU) {$q_{1,\eps}$};
	\node[left=0.7cm] at (A1StartU) {$\mathfrak A_1^*:$};

	\node[dotstyle] (A1EndU) at (2,0.8) {};
	\node[below] at (A1EndU) {$\bar q$};
	\draw[dotted] (A1EndU) -- (2,0);

	\draw[automatapath] (A1StartU) -- (A1EndU) node[midway,above,color=red] {$m_u$};

	\node[dotstyle] (A1ExtStartU) at (2,1.3) {};
	\node[above] at (A1ExtStartU) {$\tilde q$};
	
	\draw[->] (A1EndU) -- (A1ExtStartU) node[midway,left,color=black] {$\eps$} node[midway,right=1mm,dotstyle,color=red] (EpsTransitionResources) {};

	\node[dotstyle] (A1ExtEndU) at (3,1.3) {};
	\node[above] at (A1ExtEndU) {$\hat q$};
	\draw[dotted] (A1ExtEndU) -- (3,0);

	\draw[automatapath] (A1ExtStartU) -- (A1ExtEndU) coordinate[midway,below=0.3cm] (A1ExtMid) node[midway,above,color=red] {$\neutralM$};

  \end{scope}

  \begin{scope}[shift={(0,2)}]
	\draw (0,0) -- (6,0);
	\node[left] (v) at (0,0) {$v$};
	\draw (2,0) -- (2,0.25);
	\draw (3,-0.25) -- (3,0);
	\node[above] (vBar) at (1,0) {$\bar v$};
	\node[below] (vHat) at (1.5,0) {$\hat v$};
	\node[above] (xBarV) at (4,0) {$\bar x$};
	\node[below] (xHatV) at (4.5,0) {$\hat x$};

	\node[dotstyle] (A2StartV) at (0,0.8) {};
	\draw[dotted] (A2StartV) -- (0,0);
	\node[left] at (A2StartV) {$q_{2,\eps}$};
	\node[left=0.7cm] at (A2StartV) {$\mathfrak A_2^*:$};
  
	\node[dotstyle] (A1StartV) at (0,-0.8) {};
	\draw[dotted] (A1StartV) -- (0,0);
	\node[left] at (A1StartV) {$q_{1,\eps}$};
	\node[left=0.7cm] at (A1StartV) {$\mathfrak A_1^*:$};

	\node[dotstyle] (A2EndV) at (2,0.8) {};
	\node[right] at (A2EndV) {$\bar q$};
	\draw[dotted] (A2EndV) -- (2,0);
	\node[dotstyle] (A1EndV) at (3,-0.8) {};
	\draw[dotted] (A1EndV) -- (3,0);
	\node[right] at (A1EndV) {$\hat q$};
	\node[dotstyle] (A1MidV) at (2,-0.8) {};
	\draw[dotted] (A1MidV) -- (2,0);
	\node[above] at (A1MidV) {$\tilde q$};

	\draw[automatapath] (A2StartV) -- (A2EndV) node[midway,above,color=red] (Ruv) {$m_{uv}$};
	\draw[automatapath] (A1StartV) -- (A1MidV) node[midway,above,color=red] (Rwv) {$\neutralM$};
	\draw[automatapath] (A1MidV) -- (A1EndV) coordinate[midway,above=0.3cm] (A1RegMid) node[midway,above,color=red] {$\neutralM$};
	
	\node[anchor=west,minimum width=0.99cm,fill=black!20,fill opacity=0.5,text opacity=1] (vHatPrime) at (2,0) {$\hat v'$};
  \end{scope}

  \draw[->,dashed] ($(A1RegMid) + (0.2,0.1)$) -- ($(A1ExtMid) + (0.2,0)$);

  \begin{scope}[shift={(0,0)}]
	\draw (0,0) -- (6,0);
	\node[left] (w) at (0,0) {$w$};
	\draw (3,-0.25) -- (3,0.25);
	\node[above] (wHat) at (1.5,0) {$\hat w$};
	\node[above] (xHatW) at (4.5,0) {$\hat x$};

	\node[dotstyle] (A2StartW) at (0,-0.8) {};
	\draw[dotted] (A2StartW) -- (0,0);
	\node[left] at (A2StartW) {$q_{2,\eps}$};
	\node[left=0.7cm] at (A2StartW) {$\mathfrak A_2^*:$};

	\node[dotstyle] (A2EndW) at (3,-0.8) {};
	\draw[dotted] (A2EndW) -- (3,0);
	\node[right] at (A2EndW) {$\hat q$};

	\draw[automatapath] (A2StartW) -- (A2EndW) node[midway,above,color=red] {$\neutralM$};
  \end{scope}

  \draw[decorate,decoration={brace,amplitude=5pt}] (6.5,4.8) -- (6.5,2.8) node[midway,right=5pt,color=red] (Rbar) {$\bar m$};
  \draw[decorate,decoration={brace,amplitude=5pt}] (6.5,1.2) -- (6.5,-0.8) node[midway,right=5pt,color=red] {$m_2$};

  \draw[decorate,decoration={brace,amplitude=5pt}] (8,5.3) -- (8,-0.8) node[midway,right=5pt,color=red] {$m_2$};

  \draw[->,color=red] (Rbar) .. controls +(90:2cm) and +(0:2cm) .. (EpsTransitionResources);
  \draw[->,color=red] (Ruv) .. controls +(20:1.5cm) and +(-70:1cm) .. (EpsTransitionResources);
  \draw[->,color=red] (Rwv) .. controls +(60:1cm) .. +(1cm,1cm) 
							.. controls +(35:3cm) and +(-50:1cm) .. (EpsTransitionResources);
 \end{tikzpicture}
		\end{center}
		\caption{Illustration of the 1st case of part (i) of Lemma
	\ref{lem:CorrectnessOfGTTSaturation}}\label{fig:PartOneGTTSaturationProof}
  \end{figure}
  We first show (i) by induction on the number of replacement steps. Let 
  $\ell$ be as in the algorithm.
  
  \inductionstart{Let $w_1 \configstepCost{m} w_2$:}
	By definition of the successor relation, there is a replacement rule
$(u,v,\bar m) \in \DeltaR$ and a common suffix $x \in \Sigma^*$ such that $w_1 =
ux$, $w_2 = vx$ and $m = \bar m$. By definition of the automata
$\automatonA^0_1$ and $\automatonA^0_2$, which are included in $\automatonA^*_1$
and $\automatonA^*_2$, there are runs $\automatonA^*_1: q_{1,\eps}
\apathS[\neutralM]{u} q_{(u,v,\bar m)}$ and $\automatonA^*_2: q_{2,\eps}
\apathS[\neutralM]{v} q_{(u,v,\bar m)}$. Additionally, $\neutralM \bar m \rev(\neutralM) = \bar m$.
  \inductionstep{Let $u \configstepsCost[n+1]{m} w$:}
	By the definition of $\configstepsCost[n+1]{m}$, there is a $v$ such that $u
\configstepsCost[n]{m_1} v \configstepCost{m_2} w$ with $m_1m_2 = m$. By the
induction hypothesis, there is a common suffix $\bar x$ such that $u = \bar u
\bar x$, $v = \bar v \bar x$ and there are runs $\automatonA^*_1: q_{1,\eps}
\apathS[m_u]{\bar u} \bar q$ and $\automatonA^*_2: q_{2,\eps}
\apathS[m_{uv}]{\bar v} \bar q$ for some state $\bar q = q_{(w_1,w_2,\bar m)}$
such that $m_u\bar m\rev(m_{uv}) \le m_1$.
Additionally, there is a common suffix $\hat x$ such that  $v = \hat v \hat x$,
$w = \hat w \hat x$ and $(\hat v, \hat w, m_2) \in \DeltaR$. Now, distinguish
two cases depending on the length of $\hat x$ and $\bar x$.

	\textbf{1st case:} $|\hat x| \le |\bar x|$:
	\begin{indt}
	  The used words and runs in this case are shown in Figure
\ref{fig:PartOneGTTSaturationProof}.

	  One can write $v$ in the form $v = \bar v \hat v' \hat x$. With this
notation, we can also write $u = \bar u \hat v' \hat x$.

	  By construction of the automata $\automatonA^*_1$ and $\automatonA^*_2$,
there are two runs $\automatonA^*_1: q_{1,\eps} \apathS[\neutralM]{\bar v}
\tilde q \apathS[\neutralM]{\hat v'} \hat q$ and $\automatonA^*_2: q_{2,\eps}
\apathS[\neutralM]{\hat w} \hat q$ with $\hat q = q_{(\hat v, \hat w,m_2)}$. 
Additionally, we have $\hat v = \bar v \hat v'$ and thus 
$|\bar v| \le |\hat v| \le \ell$.

	  By the saturation algorithm, there is a transition $\bar q
\apath[m']{\eps} \tilde q$ in $\automatonA^*_1$ such that $m' \le \bar m
\rev(m_{uv})\neutralM$. Using this $\eps$-transition, one can create the
following run:
	  $$ \automatonA^*_1: q_{1,\eps} \apathS[m_u]{\bar u} \bar q
\apath[m']{\eps} \tilde q \apathS[\neutralM]{\hat v'} \hat q$$
	  So, this run and the run $\automatonA^*_2: q_{2,\eps}
\apathS[\neutralM]{\hat w} \hat q$ satisfy the first condition of the lemma. By
the induction hypothesis,  we have $m_u \bar m \rev(m_{uv}) \le m_1$. By the
monotonicity of the monoid operator, we have 
$$m_u m' m_2 \neutralM \le m_u \bar m \rev(m_{uv}) m_2 \le m_1m_2 = m$$

	\end{indt}
	\textbf{2nd case:} $|\hat x| > |\bar x|$:
	\begin{indt}
		This case is mostly analogous to the first one. The roles of
$\automatonA^*_1$ and $\automatonA^*_2$ are exchanged. One can write $w$ in the
form $w = \hat w \bar v' \bar x$ and $v$ in the form $v = \hat v \bar v' \bar
x$.

		By construction of the automata $\automatonA^*_1$ and $\automatonA^*_2$,
there are runs $\automatonA^*_1: q_{1,\eps} \apathS[\neutralM]{\hat v} \hat q$
and $\automatonA^*_2: q_{2,\eps} \apathS[\neutralM]{\hat w} \hat q$ with $\hat q
= q_{(\hat v, \hat w,m_2)}$. Furthermore, the inductively given run
$\automatonA^*_2: q_{2,\eps} \apathS[m_{uv}]{\bar v} \bar q$ can be divided into
two parts $\automatonA^*_2: q_{2,\eps} \apathS[m_{\hat v}]{\hat v} \tilde q
\apathS[m_{\bar v'}]{\bar v'} \bar q$.

		Since $|\hat v| \le \ell$, the the saturation algorithm guarantees that 
		there is a transition $\hat q \apath[m']{\eps} \tilde q$ in 
		$\automatonA^*_2$ such that $m' \le  \rev(m_2) 
		\rev(\neutralM)m_{\hat v} = \rev(m_2) m_{\hat v}$. 
		Using this $\eps$-transition, one can create the following run:
		$$ \automatonA^*_2: q_{2,\eps} \apathS[\neutralM]{\hat w} \hat q
\apath[m']{\eps} \tilde q \apathS[m_{\bar v'}]{\bar v'} \bar q$$
		So, this run and the run $\automatonA^*_1: q_{1,\eps} \apathS[m_u]{\bar
u} \bar q$ satisfy the first condition of the lemma. 
		Consequently, we obtain by the application of the functional equation of
$\rev$ and the compatibility with the order:
		\begin{align*}
		  m_u \bar m \rev(m'm_{\bar v'}) &= m_u \bar m \rev(m_{\bar v'})
\rev(m') \\
		  &\le   m_u \bar m \rev(m_{\bar v'}) \rev(\rev(m_2) m_{\hat v}) \\
		  &= m_u \bar m \rev(m_{\bar v'}) \rev(m_{\hat v})m_2 \\
		  &= m_u \bar m \rev(\underbrace{m_{\hat v}m_{\bar v'}}_{m_{uv}})m_2 \\
		  &\ihle m_1 m_2 = m
		\end{align*}
	\end{indt}
  \inductionend
  
\noindent	Now we show the parts (ii) and (iii) by induction on the number of
steps of the algorithm. Depending on the automaton in which the saturation step was
executed either the statement of part (ii) or of part (iii) needs to be proven.
Nevertheless, we need to prove both statements together because of the
interplay of both automata in the saturation procedure.

	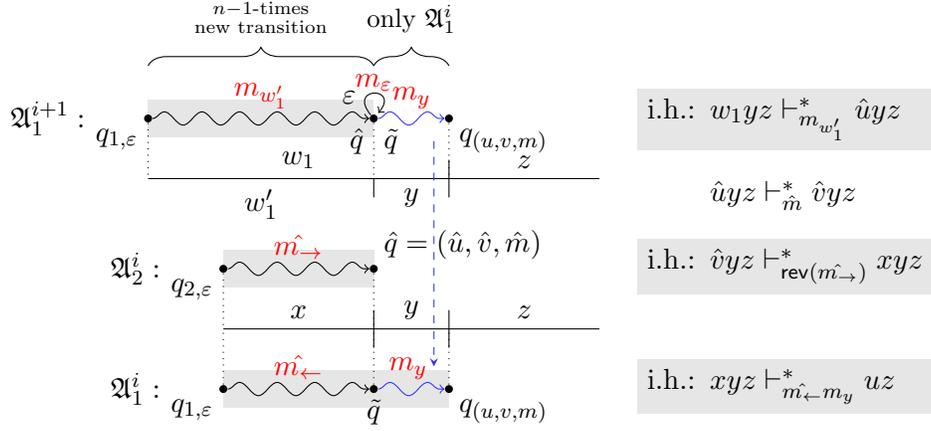
\begin{figure}[t]
		\begin{center}
			\begin{tikzpicture}
	\tikzstyle{graybox}=[fill=black!10]
	\tikzstyle{grayboxtext}=[anchor=base west,text width=3.75cm,align=left]
	\tikzstyle{highlightrun}=[color=blue!80]
  \begin{scope}[shift={(0,4)}]
	  \draw (0,0) -- (6,0);
	  \node[left] (u) at (0,0) {};
	  
	  \draw (3,-0.25) -- +(0,0.25);
	  \draw (4,-0.25) -- +(0,0.5);
		
	  \node[above] at (5,0) {$z$};
	  \node[above] at (2,0) {$w_1$};
	  \node[below] at (1.5,0) {$w_1'$};
	  \node[below] at (3.5,0) {$y$};

	  \node[dotstyle] (UBegin) at (0,0.8) {};
	  \draw[dotted] (UBegin) -- (0,0);
	  \node[left=0.7cm] at (UBegin) {$\mathfrak A_1^{i+1}:$};
	  \node[below left] at (UBegin) {$q_{1,\eps}$};

	  \node[dotstyle] (UMid) at (3,0.8) {} edge [out=130,in=50,loop] node[pos=0.3,left] {$\eps$} node[midway,above=-1mm,color=red] {$m_\eps$} ();
	  \draw[dotted] (UMid) -- (3,0);
	  \node[below left] at (UMid) {$\hat q$};
	  \node[below right] at (UMid) {$\tilde q$};
	  
	  \node[dotstyle] (UEnd) at (4,0.8) {};
	  \draw[dotted] (UEnd) -- (4,0);
	  \node[below right] at (UEnd) {$q_{(u,v,m)}$};

	  \draw[automatapath] (UBegin) -- (UMid) node[midway,above,color=red] {$m_{w_1'}$};
	  \draw[automatapath,highlightrun] (UMid) -- (UEnd) coordinate[midway,below=3mm] (UMidMarker) node[midway,above,color=red] {$m_y$};
	  
	  \begin{pgfonlayer}{background}
		\fill[graybox] ($(UBegin) - (0,0.25)$) rectangle ($(UMid) + (0,0.25)$);
	  \end{pgfonlayer}

	  \node[graybox,grayboxtext] at (6.5,0.8) {i.h.: $w_1yz \configstepsCost{m_{w_1'}} \hat u yz$}; 
	  
	  \node[grayboxtext] at (6.5,-0.3) {\hphantom{i.h.: }$\hat u yz \configstepsCost{\hat m} \hat v yz$};

	  \draw[curlybracket] ($(UBegin)+(0,0.7)$)--($(UMid)+(0,0.7)$) node[midway,above=3mm,font={\small}] {$\begin{smallmatrix} n-1\text{-times} \\ \text{new transition}\end{smallmatrix}$};
	  \draw[curlybracket] ($(UMid)+(0,0.7)$)--($(UEnd)+(0,0.7)$) node[midway,above=3mm,font={\small}] {only $\mathfrak A^i_1$};
	  
  \end{scope}

  \begin{scope}[shift={(0,2)}]
	  \draw (1,0) -- (6,0);
	  \node[left] (v) at (1,0) {};

	  \draw (3,-0.25) -- +(0,0.5);
	  \draw (4,-0.25) -- +(0,0.5);

	  \node[above] at (5,0) {$z$};
	  \node[above] at (2,0) {$x$};
	  \node[above] at (3.5,0) {$y$};

	  \node[dotstyle] (V2Begin) at (1,0.8) {};
	  \draw[dotted] (V2Begin) -- (1,0);
	  \node[dotstyle] (V2End) at (3,0.8) {};
	  \draw[dotted] (V2End) -- (3,0);
	  \node[left=0.7cm] at (V2Begin) {$\mathfrak A_2^{i}:$};
	  \node[below left] at (V2Begin) {$q_{2,\eps}$};
	  \node[above right] at (V2End) {$\hat q = (\hat u,\hat v, \hat m)$};

	  \draw[automatapath] (V2Begin) -- (V2End) node[midway,above,color=red] {$\hat\rr$};
	   \begin{pgfonlayer}{background}
		\fill[graybox] ($(V2Begin) - (0,0.25)$) rectangle ($(V2End) + (0,0.25)$);
	  \end{pgfonlayer}
	  
	  \node[graybox,grayboxtext] at (6.5,0.8) {i.h.: $\hat vyz \configstepsCost{\rev(\hat \rr)} xyz$}; 

	  \node[dotstyle] (V1Begin) at (1,-0.8) {};
	  \draw[dotted] (V1Begin) -- (1,0);
	  \node[dotstyle] (V1Mid) at (3,-0.8) {};
	  \draw[dotted] (V1Mid) -- (3,0);
	  \node[dotstyle] (V1End) at (4,-0.8) {};
	  \draw[dotted] (V1End) -- (4,0);
	  \node[left=0.7cm] at (V1Begin) {$\mathfrak A_1^{i}:$};
	  \node[below left] at (V1Begin) {$q_{1,\eps}$};
	  \node[below] at (V1Mid) {$\tilde q$};
	  \node[below right] at (V1End) {$q_{(u,v,m)}$};

	  \draw[automatapath] (V1Begin) -- (V1Mid) node[midway,above,color=red] {$\hat\rl$};
	  \draw[automatapath,highlightrun] (V1Mid) -- (V1End) coordinate[midway,above=3mm] (V1MidMarker) node[pos=0.45,above,color=red] {$m_y$};
	   \begin{pgfonlayer}{background}
		\fill[graybox] ($(V1Begin) - (0,0.25)$) rectangle ($(V1End) + (0,0.25)$);
	  \end{pgfonlayer}
	  
	  \node[graybox,grayboxtext] at (6.5,-0.8) {i.h.: $xyz \configstepsCost{\hat \rl m_y} uz$}; 
  \end{scope}

  \draw[->,dashed,highlightrun,>=stealth] ($(UMidMarker) + (3mm,0)$) -- ($(V1MidMarker) + (3mm,0)$);


\end{tikzpicture}
		\end{center}
		\caption{Illustration of part (ii) of Lemma
	\ref{lem:CorrectnessOfGTTSaturation}}\label{fig:PartTwoGTTSaturationProof}
	\end{figure}

	\inductionstart{Let $\automatonA^0_1: q_{1,\eps} \apathS[m_1]{w_1}
q_{(u,v,m)}$ or $\automatonA^0_2: q_{2,\eps} \apathS[m_2]{w_2} q_{(u,v,m)}$
for some $(u,v,m) \in \DeltaR$ and $z \in \Sigma^*$:}
	By the construction of the automata $\automatonA^0_1$ and $\automatonA^0_2$,
we have $w_1 = u$, $w_2 =  v$, $m_1 = \neutralM$, $m_2 = \neutralM$ and thus obtain
$w_1z \configstepsCost[0]{\neutralM} uz$ and $vz
\configstepsCost[0]{\neutralM} w_2z$ as desired.

  \inductionstep{Let $\automatonA^{i+1}_1: q_{1,\eps} \apathS[\rl]{w_1}
q_{(u,v,m)}$ for some $(u,w,m) \in \DeltaR$ and $z \in \Sigma^*$:}
	We assume that the new/updated transition $\hat q \apath[m_\eps]{\eps}
\tilde q$ is in $\automatonA^{i+1}_1$ (otherwise there is nothing to show here).
	Let $n \in \nat$ be the number of occurrences of the new transition in
$\automatonA^{i+1}_1: q_{1,\eps} \apathS[\rl]{w_1} q_{(u,v,m)}$.

	\inductionstart{Let $n = 0$:}
	If the run $\automatonA^{i+1}_1: q_{1,\eps} \apathS[\rl]{w_1} q_{(u,v,m)}$
does not contain the new transition, the claim follows directly by the induction
hypothesis of the outer induction. 
	\inductionstep{Let $n > 0$:}
	The used words and runs of the construction are shown in Figure
\ref{fig:PartTwoGTTSaturationProof}.

	The run on $\automatonA^{i+1}_1$ can be represented by:
	$$ \automatonA^{i+1}_1: \underbrace{q_{1,\eps} \apathS[m_{w_1'}]{w_1'} \hat
q}_{\begin{smallmatrix} \text{new trans. only} \\ \text{$n-1$ times}
\end{smallmatrix}} \qquad \hat q \apath[m_\eps]{\eps} \tilde q \qquad
\underbrace{\tilde q \apathS[m_y]{y} q_{(u,v,m)}}_{\text{only
}\automatonA^i_1}  \text{ with } w_1 = w_1'y$$
	By the saturation algorithm, there is a word $x \in \Sigma^*$ and a pair
of runs $\automatonA^i_1: q_{1,\eps} \apathS[\hrl]{x} \tilde q$ and
$\automatonA^i_2: q_{2,\eps} \apathS[\hrr]{x} \hat q$ with $\hat q = q_{(\hat
u, \hat v, \hat m)}$ such that $m_\eps = \hat m \rev(\hrr)\hrl$ which led
to the construction of the new transition.

	Since $\automatonA^{i+1}_1: q_{1,\eps} \apathS[m_{w_1'}]{w_1'} \hat q$ uses
the new transition only $n-1$ times we have $w_1z = w_1'yz
\configstepsCost{m_{w_1'}} \hat uyz$ by the inner induction hypothesis.
Furthermore, we obtain from the outer induction hypothesis (part (iii) of the
lemma) that $\hat vyz \configstepsCost{\rev(\hrr)}  xyz$.

	Additionally, it is possible to construct the following run with the
given run of $\automatonA^{i}_1$ on $w_1$ and the run of $\automatonA^{i}_1$
which led to the construction of the new transition:
	$$ \automatonA^{i}_1 : q_{1,\eps} \apathS[\hrl]{x} \tilde q
\apathS[m_y]{y} q_{(u,v,m)}$$

	This run does not use the new transition. Consequently, the outer
induction hypothesis yields $xyz \configstepsCost{\hrl m_y} uz$. In total:
	$$ w_1z = w_1'yz \configstepsCost{m_{w_1'}} \hat uyz
\configstepCost{\hat m} \hat vyz \configstepsCost{\rev(\hrr)}  xyz
\configstepsCost{\hrl m_y} uz$$
	with 
	$$ m_{w_1'} \underbrace{\hat m \rev(\hrr) \hat \rl}_{m_\eps} m_y =
m_{w_1'}m_\eps m_y = \rl$$
	\inductionend
	  \inductionstep{Let $\automatonA^{i+1}_2: q_{2,\eps} \apathS[\rr]{w_2}
q_{(u,v,m)}$ for some $(u,v,m) \in \DeltaR$ and $z \in \Sigma^*$:}
	  This step is mostly analogous to the previous case -- just the roles of the
two automata are exchanged. Thus, we now assume that the new/updated
transition $\hat q \apath[m_\eps]{\eps} \tilde q$ is in $\automatonA^{i+1}_2$.
	Let $n \in \nat$ be the number of occurrences of the new transition in
$\automatonA^{i+1}_2: q_{2,\eps} \apathS[\rr]{w_2} q_{(u,v,m)}$.

	\inductionstart{Let $n = 0$:}
	If the run $\automatonA^{i+1}_2: q_{2,\eps} \apathS[\rr]{w_2} q_{(u,v,m)}$
does not contain the new transition, the claim follows directly by the induction
hypothesis of the outer induction. 
	\inductionstep{Let $n > 0$:}
	The run on $\automatonA^{i+1}_2$ can be represented by:
	$$ \automatonA^{i+1}_2: \underbrace{q_{2,\eps} \apathS[m_{w_2'}]{w_2'} \hat
q}_{\begin{smallmatrix} \text{new trans. only} \\ \text{$n-1$ times}
\end{smallmatrix}} \qquad \hat q \apath[m_\eps]{\eps} \tilde q \qquad
\underbrace{\tilde q \apathS[m_y]{y} q_{(u,v,m)}}_{\text{only
}\automatonA^i_2}  \text{ with } w_2 = w_2'y$$
	By the saturation algorithm, there is a word $x \in \Sigma^*$ and a pair
of runs $\automatonA^i_1: q_{1,\eps} \apathS[\hrl]{x} \hat q$ and
$\automatonA^i_2: q_{2,\eps} \apathS[\hrr]{x} \tilde q$ with $\hat q =
q_{(\hat u, \hat v, \hat m)}$ such that $m_\eps = \rev(\hat m)
\rev(\hrl)\hrr$ which led to the construction of the new transition.

	Since $\automatonA^{i+1}_2: q_{2,\eps} \apathS[m_{w_2'}]{w_2'} \hat q$ uses
the new transition only $n-1$ times, the inner induction hypothesis yields $\hat v
yz \configstepsCost{\rev\left(m_{w_2'}\right)} w_2'yz$.
Furthermore, we obtain from the outer induction hypothesis (part (ii) of the
lemma) that $xyz \configstepsCost{\hat \rl} \hat uyz$.

	Additionally, it is possible to construct the following run with the
given run of $\automatonA^{i}_2$ on $w_2$ and the run of $\automatonA^{i}_2$
which led to the construction of the new transition:
	$$ \automatonA^{i}_2 : q_{2,\eps} \apathS[\hrr]{x} \tilde q
\apathS[m_y]{y} q_{(u,v,m)}$$

	This run does not use the new transition. Consequently, the outer
induction hypothesis yields $vz \configstepsCost{\rev(\hrr m_y)} xyz$. In
total:
	$$ vz \configstepsCost{\rev(\hrr m_y)} xyz \configstepsCost{\hrl}
\hat uyz \configstepsCost{\hat m} \hat v yz
\configstepsCost{\rev\left(m_{w_2'}\right)} w_2'yz$$
	with 
	\begin{align*}
		\rev(\hrr m_y) \hat \rl \hat m \rev\left(m_{w_2'}\right) &=
\rev(m_y) \rev(\hrr) \hrl \hat m \rev\left(m_{w_2'}\right) \\
		&= \rev(m_y)\rev( \underbrace{\rev(\hat m) \rev( \rl) \hrr}_{m_\eps}
)\rev\left(m_{w_2'}\right) \\
		&= \rev( m_{w_2'} m_\eps m_y) \\
		&= \rev( \rr )
	\end{align*}
	\inductionend
	\inductionend\vspace{-\baselineskip}
\end{proof}

\subsection{Cost-Reachability in \RPRS{}}
\label{subsec:CostReachability}

Although the previously presented algorithm shows that we are able to compute
the reachability relation of prefix replacement systems with annotations, there
are still two problems to settle in order to solve the problem stated in our
motivating question. First, we have to encode the counter operations in the form
of a well-partially ordered monoid with involution. The main problem here
is that evaluating (mapping partial sequences to maximal counter values)
sequences of counter operations is inherently not associative. Thus, we need to
capture the essential information contained in a sequence of counter
operations but in a more accessible representation. Second, we have to
construct a synchronous resource transducer from the result of the algorithm. 

We define \emph{counter profiles} to capture the behavior of counter sequences
and provide a way to store a sequence of counter operations such that an
associative concatenation can be defined. A counter profile is a triple whose
elements are either natural numbers or $\na$ (read ``n/a'') meaning \emph{not applicable}. 
We map a sequence of counter operations to a counter profile based on the following
ideas. If there is no reset in the sequence, we just store the number of
increments in the first component. All other components are set to $\na$.
If it contains a reset, we store the number of increments before the first reset 
in the first component and the number of increments after the last reset in the 
third component. The second component contains the maximal number of
increments between two subsequent resets. If there are less than two resets in
the sequence, this component remains $\na$. It is easy to see that such a
profile contains sufficient information to define an associative concatenation
operator and that the profile $(0,\na,\na)$ is neutral for this operator. We
denote the set of all valid counter profiles, i.e., those profiles resulting
from some counter sequence, by $\counterProfiles$ and the above described 
mapping from counter sequences to counter profiles by 
$\cprofile: \{\iOp,\rOp,\nOp\}^* \to \counterProfiles$.

\begin{prop}\label{prop:CounterProfilesFormAnnotationMonoid}
	The structure $(\counterProfiles,\circ,(0,\na,\na),\cple,\rev)$ is a
	well-partially ordered mon\-oid with involution where $\circ$ is the
	concatenation operator induced by the concatenation of counter sequences,
	$\cple$ is the component-wise order on the profiles. In each component the
	natural numbers are ordered canonically, the element $\na$ is incomparable
	to all numbers. The function $\rev$ is defined by:
\[
	\rev: \counterProfiles \to \counterProfiles, (\ipl,\cmax,\ipr) \mapsto
	\begin{cases} 
		(\ipl,\na,\na) & \text{if } \ipr = \na \\
		(\ipr,\cmax,\ipl) & \text{otherwise}
	\end{cases}
\]
Note that the first case corresponds to counter sequences without reset, which
are reverse-invariant.
\end{prop}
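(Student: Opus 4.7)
The plan is to reduce each axiom to an elementary property of the underlying counter sequences via the map $\cprofile$. First I would verify that $\circ$ is well-defined on $\counterProfiles$, i.e., that $\cprofile(s_1 s_2)$ depends only on $\cprofile(s_1)$ and $\cprofile(s_2)$. This is established by a case analysis on whether each of the two sequences contains a reset. For instance, when both do, one obtains the formula whose first and third components are $\ipl$ of the first profile and $\ipr$ of the second, and whose middle component is the maximum (ignoring $\na$ arguments) of $\cmax$ of the first profile, $\ipr$ of the first plus $\ipl$ of the second, and $\cmax$ of the second. The other three cases are analogous. Once $\circ$ is well-defined on profiles, associativity and neutrality of $(0,\na,\na) = \cprofile(\eps)$ are inherited directly from concatenation of counter sequences.

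For the order, I would first observe that $\nat \cup \{\na\}$ is itself a well-partial order: any infinite sequence over it either contains infinitely many copies of $\na$, which yield a constant weakly increasing subsequence, or infinitely many naturals, among which a weakly increasing pair exists by well-foundedness of $(\nat,\le)$. Lemma~\ref{lem:ProductOfRIOrderedMisRIOrderedM}'s use of Higman's result then gives that the componentwise order on $(\nat \cup \{\na\})^3$ is a well-partial order, which $\counterProfiles$ inherits as a subset. For compatibility of $\cple$ with $\circ$, the key observation is that $\na$ is incomparable to every natural, so $\counterProfile{a} \cple \counterProfile{a}'$ forces the two profiles to share the same $\na$-pattern. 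Consequently the same case of the definition of $\circ$ applies on both sides of the desired inequality, which thereby reduces to monotonicity of $+$ and $\max$ on $\nat$ applied componentwise.

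For the reverse function, I would prove the single identity $\rev(\cprofile(s)) = \cprofile(s^{\rev})$: if $s$ contains no reset, both sides equal the profile with $\na$ in the last two components; if $s$ contains at least one reset, reversing swaps the roles of ``before the first reset'' and ``after the last reset'' while preserving the maximum number of increments between two consecutive resets, which matches the definition of $\rev$. From this identity, involutivity of $\rev$ and the functional equation $\rev(a \circ b) = \rev(b) \circ \rev(a)$ follow immediately from $(s^{\rev})^{\rev} = s$ and $(s_1 s_2)^{\rev} = s_2^{\rev} s_1^{\rev}$. Order-compatibility of $\rev$ is again immediate from the matching-$\na$-pattern observation applied to the swap of the first and third components.

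The main obstacle is the bookkeeping of the case analysis for $\circ$ (and, to a lesser extent, for the functional equation of $\rev$): everything ultimately reduces either to the chosen explicit formulas or to elementary facts about concatenation and reversal of counter sequences, so I expect no conceptual difficulty, only care in enumerating the cases and tracking when arguments are $\na$.
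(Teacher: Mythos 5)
Your proposal is correct and follows essentially the same route as the paper: monoid axioms are pulled back from concatenation of counter sequences, compatibility of $\cple$ with $\circ$ and $\rev$ is a componentwise case check, and the well-partial-order property comes from viewing $\counterProfiles$ inside the threefold product of $\nat\cup\{\na\}$ via Higman's result as in Lemma~\ref{lem:ProductOfRIOrderedMisRIOrderedM}. You merely make explicit what the paper leaves as ``by construction'' and ``checking all cases'' --- in particular the well-definedness of $\circ$ on profiles, the identity $\rev(\cprofile(s)) = \cprofile(s^{\rev})$, and the observation that comparable profiles share the same $\na$-pattern --- all of which are accurate.
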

\begin{proof}
	By construction the structure $(\counterProfiles,\circ,(0,\na,\na))$ is a
monoid. Furthermore, one can verify by checking all cases that the order $\cple$
is compatible with $\circ$ and the reverse function. We show that the order
$\cple$ is well-founded and has only finite anti-chains by showing that it is
the product of three orders satisfying these conditions. This argumentation is
analogous to Lemma~\ref{lem:ProductOfRIOrderedMisRIOrderedM}. The canonical
order on $\nat$ with one additional element $\na$ which is incomparable to all
numbers is well-founded and has only finite anti-chains since every set of more
than one natural number forms a chain. Thus only two element sets containing one
number and $\na$ can be anti-chains. The order on counter profiles is the
product of three times this order.
\end{proof}

We remark that the framework of well-partially ordered monoids with involution
can be used to capture sequences of counters with S-automaton semantics as well.
The idea to construct S-counter profiles is very similar to B-counter profiles.
Analogously, they store the number of increments before the first and after
the last reset. They additionally store whether these resets were $\crOp$ or
$\rOp$. Following the semantics of S-automata, the profiles do not store the
maximal counter value between subsequent (check-)resets but the minimal ones. 
Due to the fact that there are $\crOp$ and $\rOp$ in S-semantics, they have 
to store this for both directions: when reading the counter sequence from 
left to right and when reading from right to left.

(B-)Counter profiles allow the translation of \RPRS{} into monoid annotated prefix
replacement systems with a direct relation between their semantics. This
translation transforms $\iOp$ into $(1,\na,\na)$, $\rOp$ into
$(0,\na,0)$ and $\nOp$ into the neutral element $(0,\na,\na)$. In this
translated system we define $u \configstepsLesserCost{k} v$ iff $u
\configstepsCost{m} v$ for some $m = (\ipl,\cmax,\ipr) \in \counterProfiles$
with $\max(\ipl,\cmax,\ipr) \le k$. We already explained that $\cprofile$ is a
\mbox{(monoid-)}homomorphism. Consequently, the relation
$\configstepsLesserCost{k}$ is preserved under the transformation of the \RPRS{}
to the annotated prefix replacement system. Additionally, this also holds the
other way around. For a pair $u \configstepsLesserCost{k} v$ in the profile
annotated prefix replacement system one can directly construct a path also
satisfying $u \configstepsLesserCost{k} v$ in the corresponding \RPRS{}.
Moreover, we remark that if $u \configstepsLesserCost{k} v$ holds for a pair of
configurations in the profile annotated prefix replacement system, then there is
a $\cple$-minimal profile which witnesses that. Formally, let $M$ be the set of
all those elements $m$ s.t. $u \configstepsCost{m} v$. There is a
$(\ipl,\cmax,\ipr) = m_{\downarrow} \in \min M$ s.t.\ $\max(\ipl,\cmax,\ipr) \le
k$ because $\cple$-smaller elements can only have a smaller $\max$ than the
$\max$ of all components. 

Analogously, we can transform counter profiles back into equivalent counter
sequences. We define the mapping $\profileToOp: (\ipl,\cmax,\ipl) \mapsto
\iOp^{\ipl}[\rOp\iOp^{\cmax}][\rOp\iOp^{\ipr}]$, where the parts in square
brackets are only needed if $\cmax$ or $\ipr$ are not $\na$, to convert a 
profile back into an equivalent sequence of counter operations. This 
transformation allows us to obtain normal B-automata in the end.

Even though the above description did not consider \RPRS{} with several
counters, we are not restricted to the single counter scenario. By Lemma
\ref{lem:ProductOfRIOrderedMisRIOrderedM} the direct product of two monoids as
presented in Proposition \ref{prop:CounterProfilesFormAnnotationMonoid} is
again a well-partially ordered monoid with reverse-function. Since all counters
are handled independently from each other in an \RPRS{}, the direct product of
$n$ copies of the counter profiles exactly corresponds to the behavior of an $n$
counter \RPRS{} for the same reasons as explained in the previous paragraph.
Accordingly, we have $u \configstepsLesserCost{k} v$ if the maximum of all
entries in all profiles is less than or equal to $k$. 

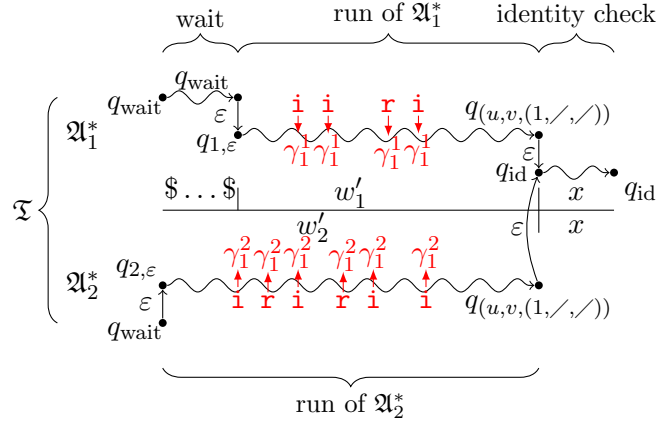
\begin{figure}[t]
  \begin{center}
  \begin{tikzpicture}
  \node[dotstyle] (A2Start) at (0,-0.5) {};
  \node[dotstyle] (A2Start2) at (0,0) {};
  \node[dotstyle] (A1Start) at (0,2.5) {};
  \node[dotstyle] (A1WaitEnd) at (1,2.5) {};
  \node[dotstyle] (A1WaitEnd2) at (1,2) {};
  \node[dotstyle] (A1NormEnd) at (5,2) {};
  \node[dotstyle] (A2NormEnd) at (5,0) {};

  \node[dotstyle] (CommonStart) at (5,1.5) {};
  \node[dotstyle] (CommonEnd) at (6,1.5) {};

  \node[above] at (A1NormEnd) {$q_{(u,v,(1,\na,\na))}$};
  \node[below] at (A2NormEnd) {$q_{(u,v,(1,\na,\na))}$};
  
  \node[below right] at (CommonEnd) {$q_{\mathrm{id}}$};
  \node[left] at (CommonStart) {$q_{\mathrm{id}}$};

  \draw[automatapath] (A1Start) -- (A1WaitEnd);
  \draw[curlybracket] ($(A1Start) + (0,0.6)$) -- ($(A1WaitEnd) + (0,0.6)$) node[midway,above=2mm,font={\small}] {wait}; 
  \draw[automatapath] (A1WaitEnd2) -- (A1NormEnd);
  \draw[curlybracket] ($(A1WaitEnd) + (0,0.6)$) -- ($(A1NormEnd) + (0,1.1)$) node[midway,above=2mm,font={\small}] {run of $\automatonA^*_1$}; 
 
  \draw[curlybracket] ($(CommonStart) + (0,1.6)$) -- ($(CommonEnd) + (0,1.6)$) node[midway,above=2mm,font={\small}] {identity check};

  \draw[automatapath] (A2Start2) -- (A2NormEnd);
  \draw[curlybracket] ($(A2NormEnd) + (0,-1.1)$) -- ($(A2Start) + (0,-0.6)$) node[midway,below=2mm,font={\small}] {run of $\automatonA^*_2$}; 

  \draw[automatapath] (CommonStart) -- (CommonEnd);

  \draw[->] (A2NormEnd) to[out=110,in=-110] node [midway,left=-1mm] {$\eps$}
(CommonStart);
  \draw[->] (A1NormEnd) -- node [midway,left=-1mm] {$\eps$} (CommonStart);

  \draw[-] (0,1) -- (6,1);
  \draw[-] (5,0.7) -- (5,1.3);
  \node[above] at (5.5,1) {$x$};
  \node[below] at (5.5,1) {$x$};

  \draw[-] (1,1) -- (1,1.3);
  \node[above] at (0.5,1) {$\$ \ldots \$$};
  \node[above] at (2.5,0.9) {$w_1'$};
  \node[below] at (2,1.11) {$w_2'$};
  
  \begin{scope}[>=latex,anchor=base]
	\node[color=red,inner sep=1pt] (i11) at (1.8,2.3) {$\iOp$};
	\node[color=red,below=4mm,inner sep=0pt] (g11) at (i11)  {$\gamma^1_1$};
	\draw[->,color=red] (i11) -- (g11);
	
	\node[color=red,inner sep=1pt] (i12) at (2.2,2.3) {$\iOp$};
	\node[color=red,below=4mm,inner sep=0pt] (g12) at (i12)  {$\gamma^1_1$};
	\draw[->,color=red] (i12) -- (g12);
	
	\node[color=red,inner sep=1pt] (r13) at (3,2.3) {$\rOp$};
	\node[color=red,below=4mm,inner sep=0pt] (g13) at (r13)  {$\gamma^1_1$};
	\draw[->,color=red] (r13) -- (g13);
	
	\node[color=red,inner sep=1pt] (i14) at (3.4,2.3) {$\iOp$};
	\node[color=red,below=4mm,inner sep=0pt] (g14) at (i14)  {$\gamma^1_1$};
	\draw[->,color=red] (i14) -- (g14);

	\node[color=red,inner sep=1pt] (i21) at (1,-0.3) {$\iOp$};
	\node[color=red,above=4mm,inner sep=0pt] (g21) at (i21)  {$\gamma^2_1$};
	\draw[->,color=red] (i21) -- (g21);
	\node[color=red,inner sep=1pt] (r22) at (1.4,-0.3) {$\rOp$};
	\node[color=red,above=4mm,inner sep=0pt] (g22) at (r22)  {$\gamma^2_1$};
	\draw[->,color=red] (r22) -- (g22);
	\node[color=red,inner sep=1pt] (i23) at (1.8,-0.3) {$\iOp$};
	\node[color=red,above=4mm,inner sep=0pt] (g23) at (i23)  {$\gamma^2_1$};
	\draw[->,color=red] (i23) -- (g23);
	
	\node[color=red,inner sep=1pt] (r24) at (2.4,-0.3) {$\rOp$};
	\node[color=red,above=4mm,inner sep=0pt] (g24) at (r24)  {$\gamma^2_1$};
	\draw[->,color=red] (r24) -- (g24);
	
	\node[color=red,inner sep=1pt] (i25) at (2.8,-0.3) {$\iOp$};
	\node[color=red,above=4mm,inner sep=0pt] (g25) at (i25)  {$\gamma^2_1$};
	\draw[->,color=red] (i25) -- (g25);
	
	\node[color=red,inner sep=1pt] (i26) at (3.5,-0.3) {$\iOp$};
	\node[color=red,above=4mm,inner sep=0pt] (g26) at (i26)  {$\gamma^2_1$};
	\draw[->,color=red] (i26) -- (g26);
  \end{scope}
  
  \node[left=0.7cm] (A1) at (0,2) {$\automatonA^*_1$};
  \node[left=0.7cm] (A2) at (0,0) {$\automatonA^*_2$};
  \coordinate[left=1.4cm] (Upper) at (A1Start);
  \coordinate[left=1.4cm] (Lower) at (A2Start);
  
  \node[below left=-2mm] at (A1Start) {$q_{\mathrm{wait}}$};
  \node[above left=-0.5mm] at (A1WaitEnd) {$q_{\mathrm{wait}}$};
  \node[below left=-2mm] at (A2Start) {$q_{\mathrm{wait}}$};
  
  \draw[->] (A2Start) -- (A2Start2) node[midway,left] {$\eps$};
  \node[above left=-1mm] at (A2Start2) {$q_{2,\eps}$};
  
  \draw[->] (A1WaitEnd) -- (A1WaitEnd2) node[midway,left] {$\eps$};
  \node[below left=-2mm] at (A1WaitEnd2) {$q_{1,\eps}$};

  \draw[curlybracket] (Lower) -- (Upper) node[midway,left=2mm] {$\automatonT$};

\end{tikzpicture}
  \end{center}
  \caption{Construction of the synchronous transducer}
  \label{fig:r1Rr2ReverseRecognition}
\end{figure}

The following lemma formally describes how the previous results can be 
used to compute a synchronous resource transducer for the resource-cost 
reachability relation. The proof is divided into three parts. First, we
use the above described ideas to transform an \RPRS{} into an equivalent
profile annotated prefix replacement system. Second, 
Algorithm~\ref{algo:GTTSaturation} is used to compute the annotation-aware
transitive closure. Third, we construct a synchronous resource transducer
from the result of the saturation procedure. The idea of this construction is
depicted in Figure~\ref{fig:r1Rr2ReverseRecognition}. It is based on a 
product construction of the two output automata from the saturation. The 
construction is extended to read an entire pair of configurations. Both state 
components of the product automaton start in an additional wait-state 
($q_{\mathrm{wait}}$). This wait-state is used to skip the padding in front of 
the shorter configuration. Following, both state components process the 
changed prefix of the two configurations individually. To do so, each 
component has its own copy of every counter of the \RPRS{}. The detailed 
analysis shows that this is sufficient to simulate the counters up to a factor
of two. When the changed prefix of the two configurations is completely read,
the constructed transducer nondeterministically guesses that the 
common postfix starts. Whenever both state components are in the same state 
(one of the states of $Q_{\mathrm{shared}}$ in the algorithm), the transducer
can change to the newly introduced state  $q_{\mathrm{id}}$. It then only 
verifies until the end that the remainder of both configurations is identical.

In the following, we only write $\iOp$ but mean the counter operation $\icOp$
of B-automata. This reduces the notation overhead and it resembles more closely
the notation in \RPRS{} where all increments count for the resource-cost value
and there is no notion of \emph{check}.

\begin{lem}
	Let $\prsR = (\Sigma,\DeltaR,\Gamma)$ be a resource prefix replacement
system and $\alpha(k) = 2k + 1$ be a correction function. There is a synchronous
resource transducer $\automatonT$ such that for all $w_1, w_2 \in \Sigma^*$ we have 
\[
\semantics{\automatonT}_{\padprodR}((w_1,w_2)) \costEquiv 
\inf \{ j \in \nat \mid  w_1 \configstepsLesserCost{j} w_2 \}
\]
\end{lem}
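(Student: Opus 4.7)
The plan is to combine the encoding of counter operation sequences as counter profiles, the two-sided saturation of Algorithm~\ref{algo:GTTSaturation}, and a product construction yielding a synchronous transducer over $\alphVectorS{\Sigma}{2}$. First, I translate $\prsR$ into a monoid annotated prefix replacement system $\prsR'$ whose annotations live in the $\Gamma$-fold direct product of the counter profile monoid from Proposition~\ref{prop:CounterProfilesFormAnnotationMonoid}: each rule $(u,v,f) \in \DeltaR$ becomes $(u,v,\bar m)$ with $\bar m_\cntr := \cprofile(f(\cntr))$. By Lemma~\ref{lem:ProductOfRIOrderedMisRIOrderedM} this product remains a well-partially ordered monoid with involution, and as discussed in Section~\ref{subsec:CostReachability} we have $w_1 \configstepsLesserCost{k} w_2$ in $\prsR$ iff there is some $\bar m \in \annotationMonoidM$ with $w_1 \configstepsCost{\bar m} w_2$ in $\prsR'$ all of whose numeric entries (in every component) are bounded by $k$.

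Next, I apply Algorithm~\ref{algo:GTTSaturation} to $\prsR'$ to obtain the two shared-state automata $\automatonA_1^*,\automatonA_2^*$ and assemble $\automatonT$ as sketched in Figure~\ref{fig:r1Rr2ReverseRecognition}. Its state space is essentially $Q_1 \times Q_2$ enriched with a wait-pair and a final identity state $q_{\mathrm{id}}$, and its input alphabet is $\alphVector{\Sigma}{2}$ read right-aligned. The wait-pair consumes $\$$-padding on whichever side is shorter and nondeterministically moves to $(q_{1,\eps},q_{2,\eps})$. From there the two components proceed synchronously on the changed prefixes, each equipped with its own private copy $\cntr^1,\cntr^2$ of every counter $\cntr \in \Gamma$; the monoid annotations labelling the $\automatonA_i^*$-transitions are translated back into B-counter operation sequences by applying $\profileToOp$ component-wise. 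As soon as both components simultaneously occupy a shared state $q_{(u,v,\bar m_0)}$, an $\eps$-transition carrying $\profileToOp(\bar m_0)$ jumps to $q_{\mathrm{id}}$, which letter-wise checks that both sides continue identically until the end of the vector word. Standard $\eps$-elimination turns the result into a proper B-automaton over $\alphVector{\Sigma}{2}$, hence into a synchronous resource transducer.

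The main obstacle is verifying the $\costEquiv[\alpha]$ equivalence with $\alpha(k) = 2k+1$. For the ``$\le$''-direction, if $w_1 \configstepsLesserCost{k} w_2$, I pick a $\cple$-minimal witness $\bar m$ with all entries bounded by $k$; part~(i) of Lemma~\ref{lem:CorrectnessOfGTTSaturation} produces runs of $\automatonA_1^*,\automatonA_2^*$ with annotations $\rl,\rr$ such that $\rl \circ \bar m_0 \circ \rev(\rr) \cple \bar m$, and a short case analysis of the profile concatenation operator then shows that each component of $\rl$ and of $\rev(\rr)$ is already dominated by the corresponding component of $\bar m$. Hence the private counters $\cntr^1,\cntr^2$ never exceed $k$ along the induced accepting run of $\automatonT$, giving $\semantics{\automatonT}_{\padprodR}((w_1,w_2)) \le k$. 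For the converse, an accepting run of $\automatonT$ of value $j$ splits back into runs of $\automatonA_1^*$ and $\automatonA_2^*$ whose annotation profiles all have entries bounded by $j$; parts~(ii)--(iii) of the same lemma reassemble them into a real path $w_1 \configstepsCost{\rl \circ \bar m_0 \circ \rev(\rr)} w_2$ in $\prsR'$. The delicate point is that, when the two halves are glued at the middle rule $\bar m_0$, the tail increment block of $\rl$ may merge with the at most one increment contributed by the single rule $\bar m_0$ and with the head block of $\rev(\rr)$, producing a combined increment block of size at most $j + 1 + j = 2j + 1$; this is precisely the slack absorbed by the correction function $\alpha$, and translating the resulting bounded profile back to $\prsR$ via $\profileToOp$ yields a path with resource cost at most $2j+1$.
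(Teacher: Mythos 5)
Your proposal is correct and follows essentially the same route as the paper's proof: translation to counter-profile annotations, two-sided saturation via Algorithm~\ref{algo:GTTSaturation}, the product transducer with duplicated counters and wait/identity states, the forward bound via part (i) of Lemma~\ref{lem:CorrectnessOfGTTSaturation}, and the $2k+1$ slack arising from the merged middle increment block via parts (ii)--(iii). The only cosmetic deviation is that you place $\profileToOp(\bar m_0)$ on the $\eps$-transition into $q_{\mathrm{id}}$ whereas the paper's construction drops the middle rule's operation there entirely; either choice is absorbed by the correction function $\alpha(k)=2k+1$.
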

\begin{proof}
 The proof follows the previously presented ideas. We translate $\prsR$ into a
monoid annotated prefix replacement system with counter profile annotation. Let $\Gamma = \{\gamma_1, \ldots , \gamma_n \}$ such that the $i$-th component of the profile vector represents the counter $\gamma_i$. 
Subsequently, we use Algorithm~\ref{algo:GTTSaturation} to obtain the automata
$\automatonA_1^* = (Q_1, \Sigma, \{q_{1,\eps}\}, \emptyset,\Delta_1)$ and
$\automatonA_2^* = (Q_2, \Sigma, \{q_{2,\eps}\}, \emptyset, \Delta_2)$. We
define the transducer by: 
\[
	\automatonT = (Q, \alphVector{\Sigma}{2}, \Delta, \In, \Fin, \Gamma')
\]
where the state set is defined as $Q = (Q_1 \dotcup \{q_{\mathrm{wait}}\})
\times (Q_2 \dotcup \{q_{\mathrm{wait}}\}) \dotcup \{q_{\mathrm{id}}\}$ with
initial state $\In = \{(q_{\mathrm{wait}},q_{\mathrm{wait}})\}$ and final state
$\Fin = \{q_{\mathrm{id}}\}$. The set of counters contains two copies for every
counter in $\Gamma$, i.e., $\Gamma' = \{ \gamma^1, \gamma^2 \mid \gamma \in
\Gamma\}$. The transition relation $\Delta$ is given similar to a product
construction but also contains transitions which enable the automaton to wait
until the padding ends and to change from a state where both components of the
state are in the same shared end state to the state which starts reading the
identity. 
\begin{align*}
	\Delta &= \{ ((p_1,p_2),\twovec{a_1}{a_2},(p_1',p_2'),\fraku) \mid (p_l,a_l,p'_l,m_l) \in \Delta_l, l \in \{1,2\},\fraku(\gamma^j_i) := \profileToOp((m_j)_i) \} \\
	      &\cup \{((p,q_{\mathrm{wait}}),\twovec{a}{\$},(p',q_{\mathrm{wait}}),\fraku) \mid (p,a,p',m_1) \in \Delta_1, \fraku(\gamma^2_i) := \eps, \fraku(\gamma^1_i) := \profileToOp((m_1)_i)\} \\
	      &\cup \{((q_{\mathrm{wait}},q),\twovec{\$}{b},(q_{\mathrm{wait}},q'),\fraku) \mid (q,b,q',m_2) \in \Delta_2, \fraku(\gamma^1_i) := \eps, \fraku(\gamma^2_i) := \profileToOp((m_2)_i)\} \\
	      &\cup \{((q_{\mathrm{wait}},p_2),\eps,(q_{1,\eps},p_2),\fraku_{\nOp}),((p_1,q_{\mathrm{wait}}),\eps,(p_1,q_{2,\eps}),\fraku_{\nOp}) \mid p_i \in Q_i \cup \{q_{\mathrm{wait}}\} \} \\
	      &\cup \{ ((p,p),\eps,q_{\mathrm{id}},\fraku_{\nOp}), (q_{\mathrm{id}},\twovec{a}{a},q_{\mathrm{id}},\fraku_{\nOp}) \mid p \in Q_1 \cap Q_2, a \in \Sigma\}
\end{align*}
with $\fraku_{\nOp}(\gamma^j_i) := \eps$

We show the claim by first showing $\semantics{\automatonT}_{\padprodR}((w_1,w_2)) \costlea \inf \{ j \in \nat \mid  w_1 \configstepsLesserCost{j} w_2 \}$ and then $\inf \{ j \in \nat \mid  w_1 \configstepsLesserCost{j} w_2 \} \costlea \semantics{\automatonT}_{\padprodR}((w_1,w_2))$ separately.

Assume $w_1 \configstepsLesserCost{k} w_2$. Then there is a counter profile vector $m$ in which no entry exceeds $k$ and we have $w_1 \configstepsCost{m} w_2$. Note that if no such $k$ exists, there is nothing to show in this direction because the $\inf$ is $\infty$ in this case. By Lemma~\ref{lem:CorrectnessOfGTTSaturation} part (i) there are runs $\automatonA^*_1: q_{1,\eps} \apathS[\rl]{w_1'} q_{(u,v,\bar m)}$ and $\automatonA^*_2: q_{2,\eps}
\apathS[\rr]{w_2'} q_{(u,v,\bar m)}$ for some $(u,v,\bar m) \in \DeltaR$ and $x \in \Sigma^*$ such that $w_1 = w_1'x$, $w_2 = w_2'x$  and $\rl \bar m \rev(\rr)
\le m$. By these two runs, we can easily obtain an accepting run of $\automatonT$ on $w_1 \padprodR w_2$ where the counters $\gamma^1_i$ process a counter sequence with profile $\rl$ and the counters $\gamma^2_i$ process a counter sequence with profile $\rr$. 

We show that the maximal occurring counter value in these runs can only be smaller than the maximum in $\rl\bar m \rev(\rr)$. Consider the sequence of counter operations for one counter. The part resulting from $\rl$ looks like $[\rOp]\iOp^{n^\leftarrow_1}\rOp\iOp^{n^\leftarrow_2}\ldots\rOp\iOp^{n^\leftarrow_{\ell_1}}$, the part resulting from $\rr$ like $[\rOp]\iOp^{n^\rightarrow_{1}}\rOp\iOp^{n^\rightarrow_2}\ldots\rOp\iOp^{n^\rightarrow_{\ell_2}}$. The operation corresponding to $\bar m$ is either $\iOp,\rOp$ or $\nOp$. We first recognize that the maximal counter value only depends on the length of the maximal block of increments in the sequence. Consequently, it makes no difference to read the sequence reversed. The only difference of the sequences read by $\automatonT$ to the combined sequence $\rl \bar m \rev(\rr)$ is that the latter could contain the block $\iOp^{n^\leftarrow_{\ell_1}}\iOp^{\chi_m}\iOp^{n^\rightarrow_{\ell_2}}$ if the operation of $\bar m$ is not $\rOp$ with $\chi_m = 1$ if the operation is $\iOp$ and $0$ otherwise. This block is not read by the run of $\automatonT$. However, this block would only induce a larger value. This shows that the value computed by $\automatonT$ is a lower bound on the $k$ such that $w_1 \configstepsLesserCost{k} w_2$. Thus, we have $\semantics{\automatonT}_{\padprodR}((u,v)) \costlea \inf \{ j \in \nat \mid  w_1 \configstepsLesserCost{j} w_2 \}$.

For the other inequality let now $\semantics{\automatonT}_{\padprodR}((w_1,w_2)) = k$. From the run witnessing this fact, we obtain a common suffix $x \in \Sigma^*$, which is read while the automaton is in state $q_{\mathrm{id}}$, and $w_1',w_2'$ such that $w_1 = w_1'x$, $w_2 = w_2'x$. Furthermore, there is a common state $q_{(u,v,\bar m)} \in Q_1 \cap Q_2$ which is the last state before the automaton changed to $q_{\mathrm{id}}$. By splitting the run on $\automatonT$ at the transition changing to $q_{\mathrm{id}}$ and projecting the first part to the respective state components, we obtain the runs $\automatonA^*_1: q_{1,\eps} \apathS[\rr]{w_1'} q_{(u,v,\bar m)}$ and $\automatonA^*_2: q_{2,\eps} \apathS[\rl]{w_2'} q_{(u,v,\bar m)}$ where the counter profile $\rl$ corresponds to the sequence of counter operations of the counters $\gamma^1_i$ and the profile $\rr$ corresponds to the sequence of the counters $\gamma^2_i$, respectively. By Lemma~\ref{lem:CorrectnessOfGTTSaturation}, we obtain that $w_1 \configstepsCost{\rl \bar m \rev(\rr)} w_2$. Thus, it remains to show that no value in the profile $\rl \bar m \rev(\rr)$ exceeds $2k + 1$. For the sake of simplicity we assume that there is only one counter. The proof extends directly to several counters by repeating all arguments component-wise for every counter. We use equivalent counter sequences to examine the maximal counter value induced by $\rl \bar m \rev(\rr)$. Let $\profileToOp(\rl) = \iOp^{\overleftarrow{\ipl}}[\rOp\iOp^{\overleftarrow{\cmax}}][\rOp\iOp^{\overleftarrow{\ipr}}]$ and $\profileToOp(\rr) = \iOp^{\overrightarrow{\ipl}}[\rOp\iOp^{\overrightarrow{\cmax}}][\rOp\iOp^{\overrightarrow{\ipr}}]$ be the counter sequence representations of the profiles and $\mathsf{op} = \profileToOp(\bar m)$. With this notation, the complete sequence $\rl \bar m \rev(\rr)$ corresponds to $\iOp^{\overleftarrow{\ipl}}[\rOp\iOp^{\overleftarrow{\cmax}}][\rOp\iOp^{\overleftarrow{\ipr}}] \mathsf{op} [\iOp^{\overrightarrow{\ipr}}\rOp][\iOp^{\overrightarrow{\cmax}}\rOp]\iOp^{\overrightarrow{\ipl}}$. First, we remark that $\overrightarrow{\cmax}$ and $\overleftarrow{\cmax}$ are either undefined or at most $k$ by assumption. Since they are enclosed by $\rOp$ in the sequence, we can ignore them. Second, we see that the  maximal counter value of the complete counter sequence only exceeds the maximal counter value of the two single sequences if $\mathsf{op} \ne \rOp$. It becomes largest if $\mathsf{op} = \iOp$. Consequently, the new block of increments in the middle of the sequence is $\overleftarrow{\ipl} + \overrightarrow{\ipl} + 1$, $\overleftarrow{\ipr} + \overrightarrow{\ipr} + 1$, $\overleftarrow{\ipl} + \overrightarrow{\ipr} + 1$ or $\overleftarrow{\ipr} + \overrightarrow{\ipl} + 1$. Since all profile entries are bounded by $k$, the counters are bounded by $2k +1$ in the complete sequence. So, we have that $w_1 \configstepsLesserCost{2k +1} w_2$. In total, we have $\inf \{ j \in \nat \mid  w_1 \configstepsLesserCost{j} w_2 \} \costlea \semantics{\automatonT}_{\padprodR}((w_1,w_2))$.
\end{proof}

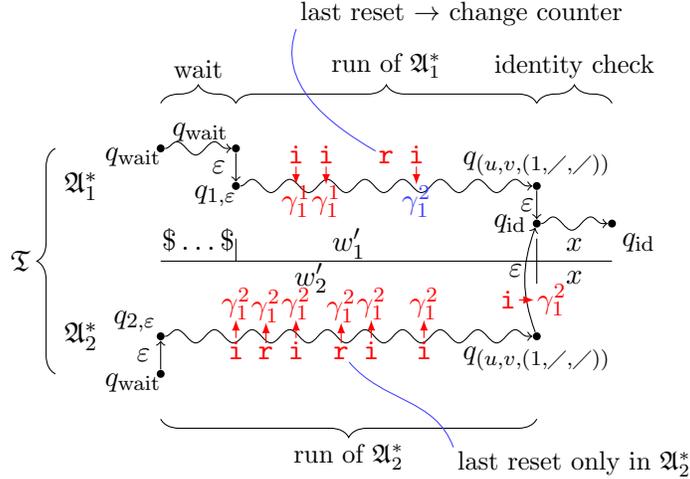
\begin{figure}[t]
  \begin{center}
  \begin{tikzpicture}
  \node[dotstyle] (A2Start) at (0,-0.5) {};
  \node[dotstyle] (A2Start2) at (0,0) {};
  \node[dotstyle] (A1Start) at (0,2.5) {};
  \node[dotstyle] (A1WaitEnd) at (1,2.5) {};
  \node[dotstyle] (A1WaitEnd2) at (1,2) {};
  \node[dotstyle] (A1NormEnd) at (5,2) {};
  \node[dotstyle] (A2NormEnd) at (5,0) {};

  \node[dotstyle] (CommonStart) at (5,1.5) {};
  \node[dotstyle] (CommonEnd) at (6,1.5) {};

  \node[above] at (A1NormEnd) {$q_{(u,v,(1,\na,\na))}$};
  \node[below] at (A2NormEnd) {$q_{(u,v,(1,\na,\na))}$};
  
  \node[below right] at (CommonEnd) {$q_{\mathrm{id}}$};
  \node[left] at (CommonStart) {$q_{\mathrm{id}}$};

  \draw[automatapath] (A1Start) -- (A1WaitEnd);
  \draw[curlybracket] ($(A1Start) + (0,0.6)$) -- ($(A1WaitEnd) + (0,0.6)$) node[midway,above=2mm,font={\small}] {wait}; 
  \draw[automatapath] (A1WaitEnd2) -- (A1NormEnd);
  \draw[curlybracket] ($(A1WaitEnd) + (0,0.6)$) -- ($(A1NormEnd) + (0,1.1)$) node[midway,above=2mm,font={\small}] {run of $\automatonA^*_1$}; 
 
  \draw[curlybracket] ($(CommonStart) + (0,1.6)$) -- ($(CommonEnd) + (0,1.6)$) node[midway,above=2mm,font={\small}] {identity check};

  \draw[automatapath] (A2Start2) -- (A2NormEnd);
  \draw[curlybracket] ($(A2NormEnd) + (0,-1.1)$) -- ($(A2Start) + (0,-0.6)$) node[midway,below=2mm,font={\small}] {run of $\automatonA^*_2$}; 

  \draw[automatapath] (CommonStart) -- (CommonEnd);

  \draw[->] (A2NormEnd) to[out=110,in=-110] 
		node [pos=0.6,left=-1mm] {$\eps$} 
		node[pos=0.3,left=1mm,red,inner sep=1pt] (iEnd) {$\iOp$}  
		node[pos=0.3,right=1mm,red,inner sep=1pt] (gEnd) {$\gamma^2_1$}
			(CommonStart);
  \draw[->,>=latex,red] (iEnd) -- (gEnd);
  
  \draw[->] (A1NormEnd) -- node [midway,left=-1mm] {$\eps$} (CommonStart);

  \node[font={\small},inner sep=0.5mm] (lastReset1Label) at (4,4.3) {last reset $\rightarrow$ change counter};
  \node[font={\small},inner sep=0.5mm] (lastReset2Label) at (5.5,-1.7) {last reset only in $\automatonA^*_2$};

  \draw[-] (0,1) -- (6,1);
  \draw[-] (5,0.7) -- (5,1.3);
  \node[above] at (5.5,1) {$x$};
  \node[below] at (5.5,1) {$x$};

  \draw[-] (1,1) -- (1,1.3);
  \node[above] at (0.5,1) {$\$ \ldots \$$};
  \node[above] at (2.5,0.9) {$w_1'$};
  \node[below] at (2,1.11) {$w_2'$};
  
  \begin{scope}[>=latex,anchor=base]
	\node[color=red,inner sep=1pt] (i11) at (1.8,2.3) {$\iOp$};
	\node[color=red,below=4mm,inner sep=0pt] (g11) at (i11)  {$\gamma^1_1$};
	\draw[->,color=red] (i11) -- (g11);
	
	\node[color=red,inner sep=1pt] (i12) at (2.2,2.3) {$\iOp$};
	\node[color=red,below=4mm,inner sep=0pt] (g12) at (i12)  {$\gamma^1_1$};
	\draw[->,color=red] (i12) -- (g12);
	
	\node[color=red,inner sep=1pt] (r13) at (3,2.3) {$\rOp$};
	\draw[-,color=blue!75] (r13) to[out=150,in=-110] (lastReset1Label.south west);
	
	
	\node[color=red,inner sep=1pt] (i14) at (3.4,2.3) {$\iOp$};
	\node[color=blue!75,below=4mm,inner sep=0pt] (g14) at (i14)  {$\gamma^2_1$};
	\draw[->,color=red] (i14) -- (g14);

	\node[color=red,inner sep=1pt] (i21) at (1,-0.3) {$\iOp$};
	\node[color=red,above=4mm,inner sep=0pt] (g21) at (i21)  {$\gamma^2_1$};
	\draw[->,color=red] (i21) -- (g21);
	\node[color=red,inner sep=1pt] (r22) at (1.4,-0.3) {$\rOp$};
	\node[color=red,above=4mm,inner sep=0pt] (g22) at (r22)  {$\gamma^2_1$};
	\draw[->,color=red] (r22) -- (g22);
	\node[color=red,inner sep=1pt] (i23) at (1.8,-0.3) {$\iOp$};
	\node[color=red,above=4mm,inner sep=0pt] (g23) at (i23)  {$\gamma^2_1$};
	\draw[->,color=red] (i23) -- (g23);
	
	\node[color=red,inner sep=1pt] (r24) at (2.4,-0.3) {$\rOp$};
	\node[color=red,above=4mm,inner sep=0pt] (g24) at (r24)  {$\gamma^2_1$};
	\draw[->,color=red] (r24) -- (g24);
	\draw[-,color=blue!75] (r24) to[out=-50,in=110] (lastReset2Label.north west);
	
	\node[color=red,inner sep=1pt] (i25) at (2.8,-0.3) {$\iOp$};
	\node[color=red,above=4mm,inner sep=0pt] (g25) at (i25)  {$\gamma^2_1$};
	\draw[->,color=red] (i25) -- (g25);
	
	\node[color=red,inner sep=1pt] (i26) at (3.5,-0.3) {$\iOp$};
	\node[color=red,above=4mm,inner sep=0pt] (g26) at (i26)  {$\gamma^2_1$};
	\draw[->,color=red] (i26) -- (g26);
  \end{scope}
  
  \node[left=0.7cm] (A1) at (0,2) {$\automatonA^*_1$};
  \node[left=0.7cm] (A2) at (0,0) {$\automatonA^*_2$};
  \coordinate[left=1.4cm] (Upper) at (A1Start);
  \coordinate[left=1.4cm] (Lower) at (A2Start);
  
  \node[below left=-2mm] at (A1Start) {$q_{\mathrm{wait}}$};
  \node[above left=-0.5mm] at (A1WaitEnd) {$q_{\mathrm{wait}}$};
  \node[below left=-2mm] at (A2Start) {$q_{\mathrm{wait}}$};
  
  \draw[->] (A2Start) -- (A2Start2) node[midway,left] {$\eps$};
  \node[above left=-1mm] at (A2Start2) {$q_{2,\eps}$};
  
  \draw[->] (A1WaitEnd) -- (A1WaitEnd2) node[midway,left] {$\eps$};
  \node[below left=-2mm] at (A1WaitEnd2) {$q_{1,\eps}$};

  \draw[curlybracket] (Lower) -- (Upper) node[midway,left=2mm] {$\automatonT$};

\end{tikzpicture}
  \end{center}
  \caption{Construction of the synchronous transducer for the exact value}
  \label{fig:r1Rr2ReverseRecognitionExact}
\end{figure}

We remark that it is also possible to use the result of the saturation procedure to construct a transducer which calculates the exact values. However, this requires a slightly more complex construction. The basic construction is analogous to the construction presented in the previous lemma. However, we additionally have to solve the problem of the connection point in the middle of the two sequences. This can be done by nondeterministically guessing the position of the last reset for each counter in both components of the state space. This is illustrated in Figure~\ref{fig:r1Rr2ReverseRecognitionExact}. At some point in the run, the transducer $\automatonT$ decides that this is the last reset for some counter in this component of the state space. It checks whether the corresponding counter in the other state-component is already in this \emph{after last reset} mode. If this is the case, the current component uses from this time onwards in the run the counter of the other component to store increments. In any case the automaton validates that there are no more resets for the respective counter in the run. That includes the counter operation associated with the shared synchronization state before switching to $q_{\mathrm{id}}$. This way, the transducer accumulates all increments contained in the middle part connecting both counter sequences in one counter. Hence, the transducer calculates exactly the maximal counter value of $\rl \bar m \rev(\rr)$. 

With the previous lemma, we can directly obtain the following theorem.

\begin{thm}\label{thm:RPRSareResourceAutomatic}
	Let $\prsR = (\Sigma,\Delta,\Gamma)$ be an \RPRS{} and $\resRep{\prsR} =
   (\Sigma^*,\pathTo{}^{\resRep{\prsR}})$ its resource structure representation.
   The structure $\resRep{\prsR}$ is resource automatic.
\end{thm}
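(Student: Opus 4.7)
My plan is to verify the two defining conditions of a resource automatic structure for $\resRep{\prsR}$ directly, leaning almost entirely on the preceding lemma. For the first condition, the universe $\Sigma^*$ is trivially a regular language over the finite alphabet $\Sigma$, so nothing has to be done here. For the second condition, $\resRep{\prsR}$ carries a single binary relation $\tspathTo^{\resRep{\prsR}}$, and its value on a pair $(w_1,w_2)$ is by definition $\inf\{k \in \nat \mid w_1 \configstepsLesserCost{k} w_2\}$, which is precisely the cost function the preceding lemma produces a synchronous resource transducer for.

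Applying that lemma to $\prsR$ yields a right-aligned synchronous resource transducer $\automatonT$ with
\[
  \semantics{\automatonT}_{\padprodR}((w_1,w_2)) \costEquiv \tspathTo^{\resRep{\prsR}}(w_1,w_2).
\]
Since the relation being represented is a cost function (i.e., an equivalence class under $\costEquiv[]$), we treat $\automatonT$ as a representative transducer of $\tspathTo^{\resRep{\prsR}}$, which is exactly what the definition of resource automaticity requires when the relations are read up to the cost-function equivalence used throughout the paper.

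The only remaining wrinkle is the alignment convention: the definition of resource automatic structure uses left-aligned convolution via $\padprodL$, whereas the transducer delivered by the lemma is right-aligned. This is resolved by Remark~\ref{rem:ResourceAutomaticRightAligned}, which asserts that right-aligned and left-aligned resource automatic structures have the same expressive power; applying the isomorphism $\vartheta: \Sigma^* \to \Sigma^*, w \mapsto w^\rev$ together with the $\costEquiv[]$-preserving word-reversal construction from Remark~\ref{rem:ClosureUnderReversing} yields an isomorphic left-aligned resource automatic presentation of $\resRep{\prsR}$. I do not expect any significant obstacle here: the main technical work has been discharged in the lemma on cost-reachability transducers for \RPRS{}, and the present theorem is essentially a repackaging of that lemma in the vocabulary of resource automatic structures. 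The only conceptual point to flag is that both relaxations needed -- the $\costEquiv$ versus equality and the right- versus left-alignment -- are already sanctioned by the framework and therefore do not require new machinery.
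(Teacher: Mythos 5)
Your proposal matches the paper's own treatment: the theorem is presented there as a direct corollary of the preceding cost-reachability lemma, with the universe trivially regular and the right-/left-alignment mismatch discharged exactly as you do via Remark~\ref{rem:ResourceAutomaticRightAligned} (together with Remark~\ref{rem:ClosureUnderReversing}). The one point you resolve by reading the definition ``up to $\costEquiv[]$'' --- the lemma only gives $\semantics{\automatonT}_{\padprodR}((w_1,w_2)) \costEquiv[\alpha] \tspathTo^{\resRep{\prsR}}(w_1,w_2)$ for $\alpha(k)=2k+1$, whereas the definition of resource automatic structure literally demands equality --- is what the paper's remark on the exact-value transducer (placed immediately before the theorem) is there to close, so if you want the definition satisfied verbatim you should cite that exact construction instead of weakening the definition.
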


We are now ready to prove the main result about the bounded reachability
problem of \RPRS{}.  We already saw in the previous section that we can express bounded
reachability with \FORR{} and that we are able to compute the value of this
logic on resource automatic structures. The previous result completes the
puzzle and enables a concise proof for a positive result in a restricted case of
bounded reachability.

\begin{thm}\label{thm:BoundedReachabilityIsDecidable}
	The bounded reachability problem for resource prefix replacement systems 
    and regular sets $A$ and $B$ of configurations is decidable.
\end{thm}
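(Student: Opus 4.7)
The plan is to assemble the three main building blocks developed in the paper. First, I would use Proposition~\ref{prop:BoundedReachabilityIsFORRExpressible}, which reformulates bounded reachability as the finiteness of the value of the \FORR{} sentence $\varphi := \forall x \exists y\, \overline{A}x \vee (By \wedge x \tspathTo y)$ evaluated over the extended resource structure $\structureS = (\Sigma^*, \tspathTo, \overline{A}, B)$. So it suffices to check $\semantics{\varphi}^\structureS < \infty$.

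Second, I would show that $\structureS$ is resource automatic. The universe $\Sigma^*$ is trivially regular. For the binary reachability relation $\tspathTo$, Theorem~\ref{thm:RPRSareResourceAutomatic} (obtained from the saturation procedure of Section~\ref{sec:ReachabilityWithAnnotations} and the right-aligned transducer construction) produces a synchronous resource transducer that computes $\tspathTo^{\resRep{\prsR}}$ up to the equivalence $\costEquiv[]$; by Remark~\ref{rem:ResourceAutomaticRightAligned} the use of right-aligned transducers is unproblematic. For the unary relations $\overline{A}$ and $B$, these are valuated by the characteristic functions of regular languages and are therefore representable by (trivial) synchronous resource transducers, as explained after Lemma~\ref{lem:SecondCostFunctionDefinition}: starting from an NFA for the underlying language, one obtains a B-automaton computing the characteristic function by equipping it with a single counter and $\eps$-counter operations.

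Third, I would invoke the corollary following Theorem~\ref{thm:FORRisEffectivelyComputableOnRAStructures}, which states that finiteness of the value of an \FORR{} sentence on a resource automatic structure is decidable. Applied to $\varphi$ and $\structureS$, this yields a decision procedure for bounded reachability. Concretely, one removes the outer universal quantifier, inductively constructs a synchronous resource transducer for $\exists y\, \overline{A}x \vee (By \wedge x \tspathTo y)$ following the construction in the proof of Theorem~\ref{thm:FORRisEffectivelyComputableOnRAStructures}, and then tests boundedness of the resulting B-automaton using Theorem~\ref{thm:BoundednessOfCounterautomataIsDecidable}.

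There is no real new obstacle here, since all the heavy lifting has already been done: the saturation algorithm of Section~\ref{subsec:AnnotationAwareSat} together with counter profiles handles the quantitative transitive closure, and the logic decision procedure of Section~\ref{subsec:ResourceAutomaticStructures} handles the quantification alternation. The only point worth emphasizing is that the quantitative transducer from Theorem~\ref{thm:RPRSareResourceAutomatic} only guarantees $\alpha$-equivalence (with $\alpha(k) = 2k+1$) to the true resource-cost, but since $\costEquiv[]$ preserves boundedness (Lemma~\ref{lem:SecondCostFunctionDefinition}) and all subsequent constructions preserve $\costEquiv[]$, this is sufficient for deciding whether $\semantics{\varphi}^\structureS$ is finite, which is exactly what bounded reachability asks.
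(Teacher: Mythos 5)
Your proposal is correct and follows essentially the same route as the paper's own proof: express bounded reachability via Proposition~\ref{prop:BoundedReachabilityIsFORRExpressible}, observe that the extended structure $(\Sigma^*,\tspathTo,\overline{A},B)$ is resource automatic using Theorem~\ref{thm:RPRSareResourceAutomatic} together with the regularity of $A$ and $B$, and then decide finiteness of the \FORR{} sentence's value via Theorem~\ref{thm:FORRisEffectivelyComputableOnRAStructures} and its corollary. Your added remark that $\costEquiv[]$-equivalence of the constructed transducers suffices because it preserves boundedness is a correct and slightly more explicit treatment of a point the paper leaves implicit.
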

\begin{proof}
	By Theorem~\ref{thm:RPRSareResourceAutomatic} the structure
$\resRep{\prsR} = (\Sigma^*,\pathTo{}^{\resRep{\prsR}})$ is resource automatic for a given resource
prefix replacement system $\prsR = (\Sigma,\Delta,\Gamma)$. For regular sets
$A, B \subseteq \Sigma^*$ the extended resource representation $\resRep{\prsR}'
= (\Sigma^*, \pathTo{}^{\resRep{\prsR}}, \overline{A}, B)$ is also resource automatic since the
valuations of $\overline{A}$ and $B$ are characteristic functions of a
regular language. 
	By Proposition \ref{prop:BoundedReachabilityIsFORRExpressible} the bounded
reachability problem is expressible in the logic \FORR{} over the structure
$\resRep{\prsR}'$ and by
Theorem~\ref{thm:FORRisEffectivelyComputableOnRAStructures} the semantics of
\FORR{} formula is effectively computable on resource automatic structures.
Consequently, computing the semantics yields a decision procedure for the
bounded reachability problem.
\end{proof}

We remark that the complete saturation process can be (computationally) accelerated
if one is only interested in boundedness. In this case it is sufficient to have counter
profiles with entries from $\{0,1,\na\}$ and ignore higher counter values. 
If larger values are created with concatenation, we just reset them to $1$. 
One can easily verify that such profiles also form a well-partially ordered 
monoid with involution and that the usage of the restricted profiles only 
results in smaller values for the calculated resource-cost. 
However, since there are only finitely many profiles
in the saturated automata, there is a largest value $k$ occurring in the 
profiles. A run on the original automata can, compared to a run on the 
automata only using the entries $\{0,1,\na\}$, only have larger 
resource-cost by the factor $k$. Hence, the computed functions of both 
automata are $\costEquiv[]$-equivalent and this equivalence is known to 
preserve boundedness. Moreover, with this restriction the runtime of the 
saturation procedure is polynomial in the number and maximal length of the 
prefix replacement rules  and exponential in the number of counters 
(since there are still exponentially
many incomparable counter profile vectors with the restricted profile entries). 

We additionally remark that Theorem~\ref{thm:BoundedReachabilityIsDecidable}
can be generalized to prefix replacement systems with labeled replacement rules and regular constraints on
the labeling of paths. One can extend (resource) prefix replacement systems by
labeling all replacement rules with symbols from a finite alphabet $\Lambda$. A
configuration sequence in such a system generates a word $w \in \Lambda^*$. In
these systems, one can study the reachability relation with respect to some
language $L \subseteq \Lambda^*$. A pair of configurations $u$, $v$ satisfies
this relation $u \configstepsLesserCostLangRestrict{L}{k} v$ if $v$ is reachable
from $u$ with a sequence of configurations which yields a word $w \in L$ and
needs at most $k$ resources. One can easily see that boundedness
questions on such systems can be solved with the above framework if $L$ is
regular. One can either simulate an automaton recognizing $L$ on top of the
stack of the prefix replacement system or (if one does not want to change the
stack alphabet) include a finite monoid recognizing $L$ (see, e.g.,
\cite{Sakarovitch}) in the annotation monoid. 

\section{The Bounded Reachability Problem and
cost-WMSO}\label{sec:BoundedReachabilityAndCostWMSO}

Together with the model of B-automata different forms of quantitative logics
have been introduced. The logic \FORR{} presented in this work can also be seen as such a logic. However, directly in connection
with the introduction of B-automata (see \cite{regularcostfunctions}) T.\
Colcombet already considered two dual forms of \emph{cost monadic second-order} logic
(for short cost-MSO) over finite words. Recently, M.\ Vanden Boom showed in
\cite{costwMSO} that the boundedness problem for cost-weakMSO is decidable over
the infinite tree (in weak MSO set quantification only ranges over finite
sets). In this section, we show how the bounded reachability problem for systems 
with one counter can be encoded to cost-weakMSO. This reduction yields,
together with the decidability result by M.\ Vanden Boom, an alternative decision
procedure for a restricted version of bounded reachability problem as solved
with our framework in Theorem~\ref{thm:BoundedReachabilityIsDecidable}. 
Although this reduction does not provide any new decidability result, we 
present it to demonstrate that the logic cost-(weak)MSO, which was initially designed
with the goal of creating an equivalent formalism to B-automata, is capable 
of expressing a natural problem on quantitative systems. Moreover, we think that
it motivates research in comparing the expressive power of the different 
recently introduced quantitative logics.

The logic cost-(weak)MSO extends standard monadic second-order logic with
a bounded existential quantifier $\exists X . |X| \le N$ (or $\ge N$
in the dual form) which is only allowed to appear positively in the
formula. The semantics of this logic is quantitative. For a given
structure each formula is mapped to the minimal (or maximal in the dual form)
$N_0 \in \nat$ such
that the formula is satisfied as normal (weak)MSO formula with a
cardinality restriction of $N_0$ in all bounded existential
quantifications.

For example, let $P_a$ be the proposition indicating all positions at which
there is an $a$ in the word. The formula $\exists X . |X| \le N \;\forall x\;
P_a(x) \rightarrow X(x)$ counts the number of $a$s in a word. The formula
enforces that all positions with an $a$ also have to be in $X$. Consequently,
the set $X$ must have at least the number of $a$s as size in order to satisfy the formula.

The translation of the bounded reachability problem into a cost-weakMSO
formula over the infinite binary tree is conducted in four steps. We
present the approach for a single counter. First, we shift the counter operation
annotation from the prefix replacement rules to the configurations by
considering the incidence graph. In the incidence graph, every transition is
replaced by an additional node. This additional node is then connected to the 
nodes incident to the edge. This graph is still the configuration graph of
a prefix replacement system. Moreover, we can assume without loss of
generality that this replacement system works over the alphabet $\{0,1\}$.
Second, we embed the graph into the binary tree. As usual, we identify every
node in the tree with the $\{0,1\}$-word induced by the path from the root to
the node. In this encoding, the root is labeled with $\eps$, the left child
with $0$ and the right child with $1$. This can be extended to the complete
tree. Hence, every node can represent one configuration of the prefix
replacement system. Furthermore, we can obtain FO-formulas
$\varphi_{\configstep}(a,b)$, representing the successor
relation (see \cite{thomas03a}), $\varphi_\iOp(a),\varphi_\rOp(a)$,
identifying the configurations with increment or reset counter actions, and
$\varphi_A(a), \varphi_B(b)$, marking the tree representation of the
sets $A$ and $B$ of the bounded reachability problem. Third, we
construct a formula that expresses reachability with a bounded number
of increments. Finally, we use this previous formula to construct a
formula for bounded reachability. This formula checks bounded
reachability by guessing positions with resets and checking for a
bounded number of increments between those positions.

\begin{lem}\label{lem:WMSOReachability}
  Let $\theta(a,b)$ be a formula representing a binary relation. The weakMSO
formula $\Psi_{\theta}(P,x,y)$ defined as follows is true iff the finite set $P$
contains a sequence from $x$ to $y$ connected via $\theta$.
  \begin{align*}
	\Psi_{\theta}(P,x,y) := &P(x) \wedge P(y) \wedge \forall S \Big( (S \subseteq
P \wedge S(x))\rightarrow \\ 
		&(S(y) \vee (\exists z \exists z' S(z) \wedge
\neg S(z') \wedge P(z') \wedge \theta(z,z')))\Big)
  \end{align*}
  The following cost-WMSO formula $\Xi(a,b)$ has a value of at most $k$ iff
there is a
path with at most $k$ increments from $a$ to $b$

  $$ \Xi(a,b) := \exists X. |X| \le N\; \exists P\;
\Psi_{\varphi_{\configstep}}(P,a,b) \wedge \forall x (P(x) \wedge
\varphi_{\iOp}(x)) \rightarrow X(x) $$
\end{lem}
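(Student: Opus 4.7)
The plan is to establish the two claims in turn, with the second building on the first.

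For the $\Psi_\theta$ characterization I would argue both directions separately. Assume first that $P$ contains a $\theta$-path $x = p_0, p_1, \ldots, p_n = y$. Given any finite $S \subseteq P$ with $x \in S$, either $y \in S$, giving the first disjunct, or the smallest index $i$ with $p_i \notin S$ exists, yielding $z := p_{i-1} \in S$ and $z' := p_i \in P \setminus S$ with $\theta(z,z')$, giving the second disjunct. For the converse, I would instantiate the universally quantified $S$ by the set of elements of $P$ that are $\theta$-reachable from $x$ through $P$. This $S$ is a subset of the finite set $P$, hence itself finite and a valid WMSO witness. By construction $S$ admits no outgoing $\theta$-edge to any element of $P \setminus S$, so the existential disjunct is false and the formula forces $y \in S$; thus $y$ is reachable from $x$ inside $P$.

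For $\Xi$ I would unfold the cost-WMSO semantics of the positively occurring bounded quantifier: $\semantics{\Xi(a,b)} \le k$ holds iff there is a finite set $P$ and a set $X$ with $|X| \le k$ such that $\Psi_{\varphi_{\configstep}}(P,a,b)$ holds and every increment position contained in $P$ lies in $X$. Starting from a path with at most $k$ increments, I would take $P$ to be its (finite) node set and $X$ to be its increment positions; the first claim immediately gives $\Psi_{\varphi_{\configstep}}(P,a,b)$, and $|X| \le k$ by assumption. Conversely, from such witnesses the first claim extracts an actual $\theta$-path from $a$ to $b$ through $P$; every increment position on this path lies in $P$ and hence in $X$, so the path has at most $|X| \le k$ increments.

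The only delicate point will be keeping track of WMSO's finiteness restriction: the witness $P$ must be finite, which is automatic when chosen as the node set of a finite path, and the reachable-set $S$ used in the converse direction of the first claim must also be finite, which is automatic as a subset of the finite $P$. This is bookkeeping rather than a real obstacle; the essential ingredient is the standard WMSO cut-argument for connectivity, applied here to the configuration graph encoded inside the binary tree via $\varphi_{\configstep}$.
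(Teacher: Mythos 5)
Your proposal is correct and follows essentially the same argument as the paper: the standard cut/reachable-set argument for $\Psi_\theta$ (the paper picks the largest index inside $S$ where you pick the smallest index outside it, an immaterial variation) and the same witness choices $P$ = path nodes, $X$ = increment positions for $\Xi$. No gaps.
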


\begin{proof}
 We start with the claim on the formula $\Psi_{\vartheta}(P,x,y)$.  Fix
a set $P$ and two elements $x,y$.

First, assume that $P$ contains a $\vartheta$-path $x = u_1,\ldots,u_n = y$ with
$\vartheta(u_i,u_{i+1})$.  Then $P(x)$ and $P(y)$ are satisfied. Now, let $S
\subseteq P$ be an arbitrary subset of $P$ which contains $x$ but does not
contain $y$ (otherwise the second part of the formula is trivially satisfied).
There is a maximal index $i$ such that $u_i \in S$. Since $y \not\in S$, we get
$i < n$ and thus $u_{i+1} \in P \setminus S$. By construction,
$\vartheta(u_i,u_{i+1})$ is satisfied. Thus, $\Psi_{\vartheta}(P,x,y)$ is
satisfied.

Conversely, assume that $\Psi_{\vartheta}(P,x,y)$ is satisfied. Let $S$ be the
set of elements which are $\vartheta$-reachable from $x$ using only elements of
$P$. Note that $S$ is finite and thus occurs in the universal quantification of
the second part of the formula. Furthermore, we have $x \in S$ because $x$ is 
trivially reachable from $x$. If $y \in S$, we get the desired $\vartheta$-path from $x$ to $y$.
Otherwise, by the second part of the formula, there are two elements $z \in S
\subseteq P$ and $z' \in P \setminus S$ such that $\vartheta(z,z')$. By
assumption, $z$ is reachable from $x$ using only elements of $P$. Consequently,
$z' \in P \setminus S$ is also reachable from $x$ using only elements of $P$.
This is a contradiction to the choice of $S$.

We can now prove the claim on the second formula $\Xi(a,b)$. Assume there is a
path with at most $k$ increments from $a$ to $b$. Let $P$ be the set of the
nodes on this path. By the previous proof, we know that
$\Psi_{\varphi_{\configstep}}(P,a,b)$ is satisfied. Furthermore, we set $X = \{
u \in P \mid \varphi_{\iOp}(u) \text{ is true}\}$. By construction, we have $|X|
\le k$. Consequently, the formula $\Xi(a,b)$ has a value of at most $k$.

Assume the formula $\Xi(a,b)$ has a value of at most $k$. Let $X$ and $P$ be
sets witnessing the satisfiability of $\Xi$ with a value of at most $k$.
Furthermore, let $P_\iOp = \{ u \in P \mid \varphi_{\iOp}(u) \text{ is true}\}$.
By definition, we have $|X| \le k$. By the second part of the formula, we obtain
that $P_\iOp \subseteq X$ and thus $|P_\iOp| \le k$. Since
$\Psi_{\varphi_{\configstep}}(P,a,b)$ guarantees that $b$ is reachable from $a$
through only configurations in $P$, we obtain that there is a path from $a$ to
$b$ with at most $k$ increment configurations.
\end{proof}

\begin{thm}\label{thm:WMSOBoundedReachability}
  The set $B$ is boundedly reachable (in a one counter setting) from $A$ iff the following cost-weakMSO
formula, which uses the previously defined reachability formula $\Psi$ with 
respect to the bounded-cost reachability formula $\Xi$, has a finite value
  $$ \forall a \varphi_A(a) \rightarrow (\exists b \varphi_B(b) \wedge \exists
R\; \Psi_{\Xi}(R,a,b) \wedge \forall r R(r) \rightarrow (\varphi_\rOp(r) \vee r
= a \vee r = b))$$
\end{thm}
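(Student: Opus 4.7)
The plan is to use Lemma~\ref{lem:WMSOReachability} to re-interpret the value of the given cost-weakMSO sentence in terms of reachability through chains of configurations whose consecutive links admit low-increment paths, and then to match this with the resource-cost semantics via the observation that the maximal counter value of a path with resets at fixed positions is exactly the maximum number of increments occurring between two consecutive resets. Throughout, I assume the one-counter \RPRS{} has already been transferred to its incidence graph and embedded in the infinite binary tree, so that $\varphi_{\configstep}$, $\varphi_{\iOp}$, $\varphi_{\rOp}$, $\varphi_A$, $\varphi_B$ are defined as in the preceding discussion.

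First, I unfold the semantics of cost-weakMSO. A sentence has a finite value iff there is some $N_0 \in \nat$ such that, when every bounded quantifier $\exists X. |X| \le N$ is replaced by $\exists X. |X| \le N_0$, the resulting weakMSO sentence holds. Applied to the given formula, this says: there exists $N_0$ such that for every $a$ with $\varphi_A(a)$ there exist $b$ with $\varphi_B(b)$ and a finite set $R$ with
\begin{enumerate}[label=(\alph*)]
  \item $\Psi_{\Xi}(R,a,b)$ holds under the fixed bound $N_0$, and
  \item every $r \in R$ satisfies $\varphi_\rOp(r) \vee r = a \vee r = b$.
\end{enumerate}
By Lemma~\ref{lem:WMSOReachability} condition (a) is equivalent to the existence of a sequence $a = r_0, r_1, \ldots, r_n = b$ inside $R$ with $\Xi(r_i,r_{i+1}) \le N_0$ for all $i$; and $\Xi(r_i,r_{i+1}) \le N_0$ precisely means there is a path from $r_i$ to $r_{i+1}$ in the configuration graph containing at most $N_0$ increments.

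For the forward direction assume $B$ is boundedly reachable from $A$, witnessed by some bound $k$. For each $a$ with $\varphi_A(a)$ pick $b \in B$ and a path from $a$ to $b$ with counter values always bounded by $k$. Let $R$ be the set of reset-positions occurring on this path, together with $a$ and $b$. Two consecutive elements of $R$ are connected by a sub-path along which no internal reset occurs, hence the counter grows monotonically from $0$; since the counter stays bounded by $k$, that sub-path contains at most $k$ increments. Thus $\Xi(r_i, r_{i+1}) \le k$ for every consecutive pair, so choosing $N_0 := k$ makes the whole sentence true with this choice of $R$ for each $a$.

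For the converse, assume the cost-weakMSO sentence has finite value $N_0$. Given $a \in A$, take $b$, $R$, and the chain $a = r_0, \ldots, r_n = b$ provided above, and for each $i$ pick a concrete witness path $\pi_i$ from $r_i$ to $r_{i+1}$ with at most $N_0$ increments. Concatenating the $\pi_i$ yields a path from $a$ to $b$ in which every $r_i$ with $0 < i < n$ is a reset position. On the segment corresponding to $\pi_i$, the counter is $0$ immediately after the preceding reset (or at the start if $i = 0$) and it is only incremented at most $N_0$ times before the next reset at $r_{i+1}$, so its value never exceeds $N_0$. Hence $a \configstepsLesserCost{N_0} b$ for every $a \in A$, which is bounded reachability. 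The main obstacle in writing this out is being careful with the shared bounded quantifier in $\Xi$: the same $N_0$ must work simultaneously for the many occurrences of $\Xi(r_i,r_{i+1})$ and for every $a \in A$, which is exactly the uniformity that matches the definition of bounded reachability.
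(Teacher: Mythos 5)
Your proposal is correct and follows essentially the same route as the paper's own proof: both directions decompose the witnessing path at reset positions, use Lemma~\ref{lem:WMSOReachability} to pass between $\Psi_{\Xi}$ and chains of low-increment segments, and observe that between consecutive resets the counter value equals the number of increments. Your explicit unfolding of the cost-weakMSO semantics and the remark about the single uniform bound $N_0$ serving all occurrences of $\Xi$ and all $a \in A$ is a welcome clarification, but not a different argument.
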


\begin{proof}
Let the value of the formula in the theorem be less than $k$. This means that
for all $a \in A$ the second part of the formula (after the implication) has a
value less than $k$. Let $b_a$ and $R_a$ be witnesses for this. From the last
part of the formula, we obtain that $\varphi_\rOp$ is true for all elements in
$R_a \setminus \{a,b_a\}$. Since $\Psi_\Xi(R_a,a,b_a)$ must have a value of less
than $k$, there is a sequence $a = r_1,\ldots,r_m = b_a$ such that
$\Xi(r_i,r_{i+1})$ has value less than $k$. By the previous lemma, we obtain
that there is a path from $r_i$ to $r_{i+1}$ with less than $k$ increments.
Since the positions $r_i$ are reset positions (except $r_1$ and $r_m$), this
path is a witness that $a$ reaches an element $b_a \in B$ with cost less than
$k$. Hence, $B$ is boundedly reachable from $A$.
  
Conversely, let $B$ be boundedly reachable from $A$ with a resource bound of
$k$. We show that the value of the formula is at most $k$. So, let $a \in A$. By
assumption, there is a $b_a \in B$ and a path $a = u_1, \ldots, u_m = b_a$ with
resource-cost of at most $k$. We set $R_a := \{ u_i \mid \varphi_\rOp(u_i)
\text{ is true}\} \cup \{a,b_a\}$. Since the value of the path is at most $k$,
the segments of the path between two subsequent reset positions have at most $k$
increments. Hence, $\Psi_\Xi(R_a,a,b_a)$ has a value of at most $k$. In total,
the whole formula has a value of at most $k$ because $a$ was chosen arbitrarily.
\end{proof}

The previously presented approach is specially tailored for the 
bounded reachability problem on resource prefix replacement systems with 
only one counter. Although it exemplary shows that one can express 
bounded reachability in a quite straightforward way with cost-(W)MSO,
the approach lacks the flexibility and generality of the previously studied
framework. One can see this already when trying to cover the case of
multiple counters. Although it might be possible to extend the shown approach
to multiple counters, some significant new ideas are required.

\section{Conclusion}
\label{sec:Conclusion}

We introduced \RPRS{} as a model for recursive programs with resource
consumption. This model can represent recursive programs that use discrete
resources in a consume-and-refresh manner. We represented these operations with
non-negative integer counters. These counters can be incremented to represent
the consumption of a single resource and reset to zero to simulate a complete
refreshment. 

We developed resource structures and the logic \FORR{} to analyze systems with
resources and specify combined properties on the systems' behavior and their
resource consumption. We identified the subclass of resource automatic
structures and provided an algorithm to compute the semantics of \FORR{}
formulas. This logic is able to express the bounded reachability
problem on a presentation of an \RPRS{} as resource structure. Thereby, we
reduced bounded reachability to an evaluation problem of the quantitative logic
\FORR{}. 

We devised a method to compute a synchronous transducer recognizing an
annotation aware transitive closure of an annotated prefix replacement system.
We used this method to prove that the resource structure representation of an
\RPRS{} is resource automatic. This completed the decidability proof of the
bounded reachability problem. 

Although we gave a self-contained presentation on a formalization of recursive
programs with resource consumption and the bounded reachability problem,
several open questions remain. First, the choice of prefix replacement systems
as underlying computational model induces certain restrictions on the systems
which can be modeled. It excludes important classes such as systems with
parallelism or reactive systems. It remains open whether and how the presented
results can be extended to such classes of system models. Second, the positive
computability results for \FORR{} on resource automatic structures motivate
further research in the area of specification logics for systems with resource
consumption. Moreover, in the area of automatic program verification usually
temporal logics are used. However, it remains unclear how temporal logics for
systems with resource consumption look and whether there will be
algorithmic solution methods. Nevertheless, we saw in the previous section that
there are other quantitative logics which are related to verification problems
of systems with resources. Especially, there are several logics which emerged
around the models of B- and S-automata. In addition to cost-(weak)MSO, there is the
logic (weak)MSO+U introduced by M. Bojańczyk in \cite{wmsoPu}. It extends normal
MSO by a quantifier stating that there are sets of unbounded size such that the
following part of the formula holds. An investigation of the relations between
all these logics would help to identify the boundaries of decidability.

\nocite{regularcostfunctions}
\bibliographystyle{alpha}
\bibliography{literature}

\newcommand{\etalchar}[1]{$^{#1}$}
\begin{thebibliography}{LTKR08}

\bibitem[AAS12]{timed-pushdown}
P.~A. Abdulla, M.~F. Atig, and J.~Stenman.
\newblock Dense-timed pushdown automata.
\newblock In {\em Proceedings of the 2012 27th Annual IEEE/ACM Symposium on
  Logic in Computer Science}, LICS '12, pages 35--44. IEEE Computer Society,
  2012.

\bibitem[AD94]{Alur-timedAutomata}
R.~Alur and D.~L. Dill.
\newblock A theory of timed automata.
\newblock {\em Theoretical Computer Science}, 126(2):183 -- 235, 1994.

\bibitem[BC06]{bounds-in-omega-regularity}
M.~Bojańczyk and T.~Colcombet.
\newblock Bounds in $\omega$-regularity.
\newblock In {\em Logic in Computer Science, 2006 21st Annual IEEE Symposium
  on}, pages 285--296, 2006.

\bibitem[BFL{\etalchar{+}}08]{infinite-runs-in-weighted-timed-automata}
P.~Bouyer, U.~Fahrenberg, K.~Larsen, N.~Markey, and J.~Srba.
\newblock Infinite runs in weighted timed automata with energy constraints.
\newblock In F.~Cassez and C.~Jard, editors, {\em Formal Modeling and Analysis
  of Timed Systems}, volume 5215 of {\em Lecture Notes in Computer Science},
  pages 33--47. Springer Berlin Heidelberg, 2008.

\bibitem[Boj11]{wmsoPu}
Mikołaj Bojańczyk.
\newblock {Weak MSO with the Unbounding Quantifier}.
\newblock {\em Theory of Computing Systems}, 48(3):554--576, 2011.

\bibitem[Boo11]{costwMSO}
M.~V. Boom.
\newblock Weak cost monadic logic over infinite trees.
\newblock In {\em International Symposium on Mathematical Foundations of
  Computer Science}, pages 580--591. Springer, 2011.

\bibitem[Col09]{regularcostfunctions}
T.~Colcombet.
\newblock {Regular cost functions over words}.
\newblock {\em Manuscript available online}, 2009.

\bibitem[Has82]{distanceAutomata}
K.~Hashiguchi.
\newblock Limitedness theorem on finite automata with distance functions.
\newblock {\em Journal of computer and system sciences}, 24(2):233--244, 1982.

\bibitem[Hig52]{Higman52}
G.~Higman.
\newblock {Ordering by Divisibility in Abstract Algebras}.
\newblock {\em Proceedings London Mathematical Society}, s3-2(1):326--336,
  1952.

\bibitem[KN95]{automatic-structures}
B.~Khoussainov and A.~Nerode.
\newblock Automatic presentations of structures.
\newblock In D.~Leivant, editor, {\em Logic and Computational Complexity},
  volume 960 of {\em Lecture Notes in Computer Science}, pages 367--392.
  Springer, 1995.

\bibitem[KN01]{khoussainov-automata-theory-and-app}
B.~Khoussainov and A.~Nerode.
\newblock {\em Automata theory and its applications}, volume~1.
\newblock Birkhäuser Boston, 2001.

\bibitem[Lan11]{la11}
M.~Lang.
\newblock Resource-bounded {R}eachability on {P}ushdown {S}ystems.
\newblock Master thesis, RWTH Aachen, 2011.

\bibitem[LHDT87]{gtt-saturation}
P.~Lescanne, T.~Heuillard, M.~Dauchet, and S.~Tison.
\newblock {Decidability of the confluence of ground term rewriting systems}.
\newblock Rapport de recherche RR-0675, INRIA, 1987.

\bibitem[LTKR08]{Lal-TACAS08}
A.~Lal, T.~Touili, N.~Kidd, and T.~Reps.
\newblock Interprocedural analysis of concurrent programs under a context
  bound.
\newblock In C.~R. Ramakrishnan and J.~Rehof, editors, {\em International
  conference on Tools and algorithms for the construction and analysis of
  systems}, pages 282--298. Springer, 2008.

\bibitem[MS85]{theoryOfPushdownAutomata}
D.E. Muller and P.E. Schupp.
\newblock The theory of ends, pushdown automata, and second-order logic.
\newblock {\em Theoretical Computer Science}, 37:51--75, 1985.

\bibitem[RSJ03]{RepsSchwoon-WeightedPushdown}
T.~Reps, S.~Schwoon, and S.~Jha.
\newblock {Weighted Pushdown Systems and Their Application to Interprocedural
  Dataflow Analysis}.
\newblock In R.~Cousot, editor, {\em Static Analysis}, volume 2694 of {\em
  Lecture Notes in Computer Science}, pages 1075--1075. Springer, 2003.

\bibitem[Sak09]{Sakarovitch}
J.~Sakarovitch.
\newblock {\em {Elements of Automata Theory}}.
\newblock {Cambridge University Press}, 2009.

\bibitem[Sch61]{Schutzenberger-WeightedAutomata}
M.P. Sch{\"u}tzenberger.
\newblock On the definition of a family of automata.
\newblock {\em Information and Control}, 4(2-3):245--270, 1961.

\bibitem[SSE05]{jMoped}
D.~Suwimonteerabuth, S.~Schwoon, and J.~Esparza.
\newblock {jMoped}: A {Java} bytecode checker based on {Moped}.
\newblock In N.~Halbwachs and L.~D. Zuck, editors, {\em International
  Conference on Tools and Algorithms for the Construction and Analysis of
  Systems}, volume 3440 of {\em Lecture Notes in Computer Science}, pages
  541--545. Springer, 2005.
\newblock Tool paper.

\bibitem[Tho03]{thomas03a}
W.~Thomas.
\newblock {Constructing Infinite Graphs with a Decidable MSO-Theory}.
\newblock In B.~Rovan and P.~Vojtáš, editors, {\em Mathematical Foundations
  of Computer Science 2003}, volume 2747 of {\em Lecture Notes in Computer
  Science}, pages 113--124. Springer Berlin Heidelberg, 2003.

\end{thebibliography}
\end{document}